\begin{document}

\title{Ultra-Resilient Superimposed Codes:\\ Near-Optimal Construction and Applications}

\newtheorem{lemma}{Lemma}
\newtheorem{theorem}{Theorem}
\newtheorem{esempio}{Example}
\newtheorem{corollary}{Corollary}
\newtheorem{proposition}{Proposition}
\newtheorem{example}{Example}[section] 
\newtheorem{definition}{Definition}
\newtheorem{claim}{Claim}

\definecolor{apricot}{rgb}{0.98, 0.81, 0.69}
\definecolor{beige}{rgb}{0.96, 0.96, 0.86}
\definecolor{blizzardblue}{rgb}{0.67, 0.9, 0.93}
\definecolor{non-photoblue}{rgb}{0.64, 0.87, 0.93}
\definecolor{celadon}{rgb}{0.67, 0.88, 0.69}
\definecolor{grannysmithapple}{rgb}{0.66, 0.89, 0.63}
\definecolor{macaroniandcheese}{rgb}{1.0, 0.74, 0.53}

\newcommand{\remove}[1]{}
\newcommand{\x}{{\bf x}}
\newcommand{\bc}{{\mathbf  c}}
\newcommand{\bd}{{\mathbf  d}}
\newcommand{\be}{{\mathbf  e}}
\newcommand{\bg}{{\mathbf  g}}
\newcommand{\bx}{{\mathbf  x}}
\newcommand{\by}{{\mathbf  y}}
\newcommand{\bb}{{\mathbf  b}}
\newcommand{\bz}{{\mathbf  z}}
\newcommand{\M}{{\mathbf  M}}
\newcommand{\pairCondition}{$\mathcal{P} (\mathbf{M}, \alpha, \tau_1, \tau_2)$}
\newcommand{\elongation}{elongation}

\newcommand{\B}{\vspace*{-2.5ex}}

\newcommand{\cS}{{\cal S}}
\newcommand{\cA}{{\cal A}}
\newcommand{\cM}{{\mathcal{M}}}

\newcommand{\gdm}[1]{{\color{black}{#1}}}

\newcommand{\dk}[1]{{\color{black}{#1}}}
\newcommand{\gia}[1]{{\color{black}{#1}}}

\newcommand{\uv}[1]{{\color{black}{#1}}}

\newtheorem{fact}{Fact}

\makeatletter

\newcommand{\newreptheorem}[2]{%
\newenvironment{rep#1}[1]{%
 \def\rep@title{#2 \ref{##1}}%
 \begin{rep@theorem}}%
 {\end{rep@theorem}}}
\makeatother




\date{}


\author{
	Gianluca De Marco\thanks{University of Salerno, Salerno, Italy
		({gidemarco@unisa.it})}
	\and Dariusz R. Kowalski\thanks{Augusta University, Augusta, Georgia, USA
		({dkowalski@augusta.edu })}
}

\maketitle

\begin{abstract}
A superimposed code is a collection of binary vectors (codewords) with the property that no vector is contained 
in the Boolean sum of any $k$ others, enabling unique identification of codewords within any group of 
$k$. 
Superimposed codes are foundational combinatorial tools 
with applications in areas ranging from 
distributed computing and data retrieval to fault-tolerant communication. However, classical superimposed 
codes rely on strict alignment assumptions, limiting their effectiveness in asynchronous and fault-prone 
environments, which are common in modern systems and applications.

We introduce Ultra-Resilient Superimposed Codes (URSCs), a new class of codes that extends 
the classic superimposed framework by ensuring a stronger codewords' isolation property and resilience to two types of adversarial perturbations: arbitrary cyclic shifts and partial bitwise corruption (flips).
Additionally, URSCs exhibit universality, adapting seamlessly to any number 
$k$ of concurrent codewords without prior knowledge. 
This is a combination of properties not achieved in any previous construction.

We provide the first polynomial-time construction of URSCs with near-optimal length, significantly 
outperforming previous constructions with less general features, all without requiring prior knowledge 
of the number of concurrent codewords, $k$.
We demonstrate that our URSCs significantly advance the state of the art in multiple 
applications, 
including uncoordinated beeping networks, where our codes reduce time complexity for local broadcast 
by nearly two orders of magnitude, and generalized contention resolution in multi-access channel communication. 
%

\smallskip
\noindent
{\bf Keywords:} superimposed codes, ultra-resiliency, derandomization, deterministic algorithms, 
uncoordinated beeping networks, contention resolution.
\end{abstract}




\section{Introduction}

\textit{Superimposed codes}, proposed by Kautz and Singleton in 1964 \cite{KS1964}, are a well-known and widely used class 
of codes represented by a binary $t \times n$ matrix, where columns correspond to codewords. 
These codes have a distinctive property: for any subset of  
$k$ columns from the matrix, and for any column $\mathbf{c}$ within this subset, there exists at least one row where 
column $\mathbf{c}$ has an entry of 1 while all other columns in the subset have entries of 0. 
This feature allows for the unique identification of any single column (codeword) within a subset of columns.

The significance of superimposed codes cannot be overstated. 
Over the past six decades, these codes have found applications in an impressively wide range of fields. 
For instance, they have proven instrumental in information retrieval (see e.g. \cite[p.~570]{K1998}), 
pattern matching \cite{I1997, PR11}, learning problems \cite{AA2005, KesselheimKN15},
wireless and multi-access communication \cite{J1985, I2002, Chrobak2000}, 
distributed coloring \cite{N1992, RG2010}.

Despite their broad applicability, classic superimposed codes have a significant limitation: they require 
perfectly aligned (i.e., fixed) codewords. For example, in wireless distributed communication, this alignment 
requires synchronization before the codes can be applied. Similarly, in distributed information retrieval, 
it necessitates the alignment of file descriptor codewords. However, achieving synchronization is often challenging 
or impractical in many distributed 
environments, where entities may start using codewords at arbitrary 
times as they wake up and join the system. Dynamic data retrieval also presents difficulties, as it may 
involve data arriving out of order. The lack of synchronization creates a difficulty in defining a meaningful product of two codes. Other limitations arise from a limited capacity to handle data~corruption.
Finally, existing codes assume that the number of superimposed codewords, 
$k$, is known. However, some applications may not provide this in advance, and standard estimation techniques—such as doubling the estimate of 
$k$—are ineffective, as consecutive code estimates can overlap due to arbitrary misalignments~of~codewords.


The purpose of the present work is to introduce and efficiently construct new superimposed codes that 
simultaneously address all the above limitations. 
We provide a near-optimal solution that significantly 
outperforms prior codes, which were designed to ensure only some or weaker properties, and demonstrate usefulness of new code's properties in various applications.

\subsection{Our contributions and comparison with previous works}

Before presenting the formal description in Section \ref{sec:preliminaries} (Definition~\ref{def:shift_k_unknown-fuzzy}), we will provide a 
high-level overview of our codes. 
We begin with the classic definition of superimposed codes, followed by an outline of the properties of our codes, including both the newly introduced and previously established 
properties.
We will then review the relevant literature related to our codes and compare their performance to prior 
results. Finally, we will conclude with a discussion on how these codes can enhance performance in pertinent 
distributed problems, particularly in beeping networks and multiple-access channels, as well as  
in other contexts.

\begin{definition}[Classic definition]\label{classic}
A $(k, n)$-superimposed code is formally defined as a $t \times n$ binary matrix $\mathbf{M}$
such that the following property holds:
\vspace*{-1ex}
\begin{itemize}[leftmargin=5.5mm]
    \item[] (${\mathcal I}$) For any $k$-tuple of columns of $\mathbf{M}$  
and for any column $\mathbf{c}$ of the given $k$-tuple, there is a row $0 \le i < t$ 
such that column $\mathbf{c}$ has $1$ in row $i$ and all remaining $k -1$ columns of the $k$-tuple have
all $0$ in row $i$. 
\end{itemize}
\vspace*{-1ex}
The columns of matrix $\mathbf{M}$ are called {\em codewords} and their length $t$ is called the 
{\em code length}.   
\end{definition}


\subsubsection{Ultra-Resilient Superimposed Codes (URSC)}

Our newly proposed ultra-resilient superimposed codes represent a substantial generalization of the classic definition, designed to provide robust performance without requiring prior knowledge of the number $k$ of superimposed codes.
We start from the following enhanced version of property ${\mathcal I}$,
which ensures a better isolation of codewords; consequently, as shown in Section~\ref{sec:beeping} (Algorithm~\ref{alg:beeping} and Lemma~\ref{lem:beeping-progress}), it allows to implement a meaningful product of codes even in the presence of arbitrary
codeword misalignment~and~bit~corruption.
\vspace*{-1ex}
\begin{itemize}[leftmargin=5.5mm]
    \item[] \textbf{Isolation property:} (${\mathcal I'}$) For any $k$-tuple of columns of $\mathbf{M}$ and any column $\mathbf{c}$ 
    of this $k$-tuple, there exists a row $0 \le i < t$ such that column $\mathbf{c}$ has a $1$ in row $i$, 
    while all other $k-1$ columns of the $k$-tuple have $0$ in rows 
    $(i-1)\bmod t$, $i$, and $(i+1)\bmod t$.
\end{itemize}

\vspace*{-1ex}
\noindent
At a high level, the ultra-resilient capability of our codes simultaneously ensures the 
following:

\vspace*{-1ex}
\begin{itemize} 
    \item[\textbf{(a)}] \textbf{Shift resilience:} Even if each codeword undergoes independent and arbitrary cyclic shifts, property ${\mathcal I'}$ is still preserved. This resilience is particularly effective in uncoordinated settings where computational entities can join the system at arbitrary times (asynchronous activations).

\vspace*{-1ex}
    \item[\textbf{(b)}] \textbf{Flip resilience:} Property ${\mathcal I'}$ remains intact 
    even if a fraction $1-\alpha$ of unique occurrences of ones in each codeword can be flipped 
    (the classic setting is obtained for $\alpha = 1$).
    This ensures robust performance under adversarial jamming in dynamic, asynchronous environments. 
    
\vspace*{-1ex}
    \item[\textbf{(c)}] \textbf{Universality:} The construction and application of these codes do not require prior knowledge of the parameter $k$, the number of concurrently active codewords. This intrinsic feature enables their applicability across dynamic and uncertain settings where $k$ may vary or remain unknown.
\end{itemize}

\vspace*{-1ex}
\noindent
\textbf{Note:} Our codes ensure that flip resilience is fully integrated with shift resilience, 
so that these properties work together to reinforce robustness. Specifically, flip resilience 
is maintained under arbitrary cyclic shifts, meaning that for any shifted configuration, 
the isolation property $\mathcal{I'}$ holds even if a fraction $1 - \alpha$ of ones in each codeword is 
flipped. This combined ultra-resilience provides robust protection in dynamic and adversarial environments, 
where both positional misalignments and bitwise corruption may occur simultaneously.

\vspace*{1ex}
In a classic superimposed code (where in particular shift resilience is not required)
the isolation property ${\mathcal I'}$ can be ensured by interleaving each row of its matrix 
$\mathbf{M}$ with a row of 0's. 
Specifically, for each row $i$ of $\mathbf{M}$, the interleaved matrix $\mathbf{M'}$ has
row $2i$ as row $i$ of $\mathbf{M}$ and row $2i+1$ filled entirely with 0's.
However, in our ultra-resilient superimposed codes, where shift resilience has to be ensured,
this technique fails as the independent shifts of columns would disrupt this row-by-row alignment,
resulting in overlaps that violate~${\mathcal I'}$. 
Thus, achieving the isolation property with shift resilience requires a much more robust construction.

\medskip
In classic superimposed codes, the length $t$ of the codewords (Definition \ref{classic})
serves as a key measure of efficiency and performance.
However, in the more complex scenario we address—where universality must be maintained in the presence of arbitrary misalignments of the codewords—this fixed measure becomes impractical 
(see Section \ref{sec:preliminaries} for a more detailed discussion of this challenge). 
To overcome this limitation, we introduce a more advanced metric, that we call 
\textbf{\em code elongation}, that dynamically adjusts to the actual (but unknown) 
number of superimposed codewords in the system.


\B
\paragraph*{Comparative Efficiency and Performance.}
Our work builds on and extends foundational studies on superimposed codes, enhancing them to address the unique 
challenges posed by dynamic and asynchronous communication scenarios, fault tolerance, and the absence of knowledge 
about the number $k$ of superimposed codewords. 
Our main result, stated in Theorem~\ref{thm:code-final} of Section~\ref{sec:unknown-k}, is as follows:
\vspace{-0.5ex}
\begin{quote}    
{\em We provide a Las Vegas algorithm that, in polynomial time, 
constructs an ultra-resilient superimposed code with a near-optimal performance (code elongation) 
of $O\left(\frac{k^2}{\alpha^{2+\epsilon}} \log n\right)$.}
\end{quote}

\vspace{-0.5ex}
\noindent

For classic superimposed codes (i.e., those with aligned codewords and no flip resilience), the 
best-known existential bound is $O(k^2 \log(n/k))$, established by Erd{\"o}s, Frankl, 
and F{\"u}redi~\cite{EFF1982}. 
Over the years, various proofs have been presented for the corresponding lower bound, including those by D'yachkov and Rykov \cite{AV1982}, Ruszinkó \cite{R1994}, and Füredi \cite{F1996}, 
all converging on an $\Omega(k^2 \log_k n)$ bound. An elegant combinatorial proof of this lower bound 
has also been provided more recently by Alon and Asodi \cite{AA2005}. Porat and Rothschild~\cite{PR11} 
made a significant contribution by introducing the first polynomial-time algorithm to construct classic 
superimposed codes with a near-optimal length of $O(k^2 \log n)$.

Recently, Rescigno and Vaccaro \cite{RV2024} introduced a generalized fault-tolerant version of 
classic superimposed codes ({with flip resilience, but without shift resilience}). 
They presented a randomized construction that achieves an 
average code length of $O((\frac{k}{\alpha})^2 \log n)$ in polynomial time.
Additionally, they demonstrated that 
any superimposed code supporting flip resilience requires a length of $\Omega\Big(\Big(\frac{k}
{\alpha}\Big)^2 \frac{\log n}{\log (\frac{k}{\alpha})}\Big)$.

An early extension addressing non-aligned codewords for use in synchronization problems was introduced by 
Chu, Colbourn, and Syrotiuk \cite{CCSb2006,CCSa2006}, who proposed a generalized version of 
superimposed codes with cyclic shifts and fault tolerance. In terms of our parameters, their construction 
achieves an efficiency of $O((k \frac{\log n}{\log k})^3)$, assuming a fixed constant 
$\alpha$. Another significant contribution in the pursuit of generalizing classic superimposed codes to 
accommodate non-aligned codewords was achieved recently by Dufoulon, Burman, and Beauquier \cite{DBB2020} 
in the context of asynchronous beeping models. They introduced polynomial-time constructible superimposed 
codes, called Uncoordinated Superimposed Codes (USI-codes), which effectively handle arbitrary shifts 
of codewords (without fault tolerance), exhibiting a code length of $O(k^2 n^2)$.

It is important to note that all of the above upper bounds were achieved with knowledge of $k$, which 
played a significant supportive role in the design of the respective algorithms. Notably, the lack of 
knowledge about $k$ presents a challenge for construction, only when shifts are introduced. 
Otherwise, it suffices to construct codes tailored to fixed values of $k$, as is typically assumed 
in the literature (cf.~\cite{PR11}), and concatenate them for exponentially growing values of $k$.

With a performance of $O\left(\frac{k^2}{\alpha^{2+\epsilon}} \log n\right)$, our codes significantly 
outperform all prior results for non-aligned codewords, and they do so in a much more general and 
challenging dynamic setting. 
{Specifically, our codes provide 
\textit{a stronger isolation property ${\mathcal I'}$ that accommodates shift resilience, 
flip resilience, and universality}. This means that each codeword retains isolated 1-bits in a local 
neighborhood of three adjacent rows, even under arbitrary cyclic shifts, up to a $1-\alpha$ fraction 
of bit flips for any 
 $0 < \alpha \le 1$, and without prior knowledge of the parameter $k$, the number of active codewords. 
This combination of properties is unprecedented in the field.
}

Additionally, our codes are the \textit{first} to achieve \textit{near-optimal} 
performance in the generalization of both shift resilience and flip resilience (fault tolerance),
closely adhering to the lower bound~established for \textit{fixed, aligned} codewords~\cite{RV2024}.
{Importantly, all prior constructions combining shift and flip resilience performed substantially worse, even without guaranteeing the isolation property or universality.}

If we set aside fault tolerance (i.e., by considering $\alpha = 1$), our \textit{polynomial-time constructible} 
codes achieve a {code elongation} of $O(k^2 \log n)$, which matches the best existential bound 
for \textit{classic}
superimposed codes by Erd{\"o}s, Frank and F{\"u}redi~\cite{EFF1982} for all $k = n^{o(1)}$. 
Notably, our codes achieve this same bound while additionally exhibiting 
{our isolation property with shift and flip resilience and without 
requiring prior knowledge of the parameter $k$}. Moreover, it is important to note that we also almost
match the fundamental lower bound $\Omega(k^2 \log_k n)$, as established in 
\cite{AV1982, R1994, F1996, AA2005}, 
which is valid even for classic superimposed codes (Definition \ref{classic})). 
Also, our construction yields codes with asymptotically the same length $O(k^2 \log n)$ as the best-known 
polynomially constructible classic superimposed codes \cite{PR11}, \textit{i.e.}, codes requiring codewords 
to be aligned and without fault tolerance ($\alpha = 1$).

\B
\paragraph*{Technical novelty.}
Rooted in the universality concept of code elongation, our construction is efficiently achieved through a novel 
de-randomization of a specifically designed stochastic matrix distribution 
(as defined in Definition~\ref{randomatrix}). We prove that this distribution satisfies 
a crucial property that we term the Collision Bound Property (Definition \ref{CBproperty}) 
with high probability (cf. Lemma \ref{lep}). This property is essential for translating conditions 
on subsets of $k$ columns into conditions that apply to pairs of columns, allowing us to verify and 
construct the ultra-resilient superimposed codes in a computationally efficient manner (cf. Lemma \ref{1e3}).
In this way, we dramatically reduce the size of the problem space from 
considering all $k$-subsets of columns to focusing on pairs of columns and their possible shifts, see Algorithm~\ref{alg:k_unknown}. 
This transformation reduces the complexity of the problem to polynomial time in terms of the number of codewords $n$, making the de-randomization feasible and scalable. 
Once the problem space is reduced to pairs of columns, we must ensure that all desired code properties (as outlined in Definition~\ref{CBproperty}) are preserved during the de-randomization process. This involves analyzing intervals within codewords and accounting for three types of shifts and rearrangements. The proof of Lemma~\ref{lep} addresses this in detail; see Section~\ref{satisfaction} for an overview of the challenges and methods involved, and Appendix~\ref{proofs} for the complete~proof.

\subsubsection{Applications}

{
The two main applications studied in this paper are within the contexts of Beeping networks and Contention resolution, which are outlined below. 
We direct the reader to Appendix \ref{sec:final} for a discussion of many other 
potential applications.
}
It is important to note that, although the code is obtained by a Las Vegas randomized algorithm, 
the resulting codewords are fixed (i.e., non-probabilistic) and can be used reliably in deterministic algorithms. Furthermore, the randomized 
algorithm only needs to be run once, and the generated code can be reused 
as many times as needed.

\B
\paragraph*{Deterministic Neighborhood learning and Local broadcast in beeping networks (Section~\ref{sec:beeping}).}
The beeping model, introduced by Cornejo and Kuhn \cite{CornejoK10}, 
is a minimalist communication framework where nodes communicate in discrete time slots by 
either beeping or remaining silent, and the only feedback a node receives is whether there 
was a beep in the current time slot.

In the context of uncoordinated beeping networks, our novel coding tool significantly improves the time 
complexity for the local broadcast problem, previously addressed in \cite{DBB2020}, and related neighborhood learning. The 
solution in \cite{DBB2020} achieved a time complexity of $O(\Delta^4 M)$, where $\Delta$ is the maximum node degree and 
$M$ is the message size. 

\vspace*{-1ex}
\begin{quote}
{\em Our approach (see Section~\ref{sec:beeping} and Theorem~\ref{thm:beeping}) nearly quadratically reduces the complexity of local broadcast to a deterministic 
$O(\Delta^2 \log n \cdot (M + \log n))$, where $n$ is the number~of~nodes.} 
\end{quote}

\vspace*{-1ex}
\noindent
This improvement is made 
possible through our ultra-resilient superimposed codes, which enable more efficient and resilient 
data transmission even under adversarial jamming. More specifically, shift resilience mitigates the impact of uncoordinated activation, while isolation allows to use a product of the code with specifically designed small pieces of code that carry desired information despite of shifts of the main code.

\B
\paragraph*{Deterministic generalized Contention resolution on multi-access channels (Appendix~\ref{sec:applications}).}
In Contention resolution (CR) on a multi-access channel, $k$ stations out of a total ensemble of 
$n$ may become active,
each with a packet that can be transmitted in a single time slot. The objective is to enable each of these~$k$ contending stations to successfully transmit its packet, i.e., 
to transmit without causing a collision with others in the same time slot.
The earliest theoretical work on contention resolution dates back over 50 years, primarily with the seminal papers by Abramson \cite{Abramson}, Roberts \cite{Roberts}, and
Metcalfe and Boggs \cite{MR1976}. Since then, CR 
has developed a long and rich history, addressing areas such as communication tasks, scheduling, fault tolerance, security, energy efficiency, game theory, and more. 
However, for deterministic solutions, only recently has the problem 
been studied in the challenging setting of arbitrary activation times,
with the best known {\em existential} upper bound provided 
by De Marco, Kowalski and Stachowiak \cite{M2023}. \dk{We can apply URSC codes with suitable parameters to efficiently solve an even more general CR problem, in which {\bf\em at least $s$} successful transmissions per station are required; in particular, in Appendix~\ref{sec:applications} we prove:}

\vspace*{-1ex}
\begin{quote}
{\em By simultaneously leveraging all three properties of our codes -- shift resilience, flip resilience, and universality --
we solve the~generalized CR problem for any $k\le n$ contenders (with $k$ unknown), 
ensuring that each contender achieves at least $s$ successful transmissions 
within $O((k+\frac{s}{\log n})^2\log n)$ rounds after activation, for any $s\ge 1$.}
\end{quote}

\vspace*{-1ex}
\noindent
Our \textit{constructive} upper bound matches the \textit{existential} bound in \cite{M2023} (case $s=1$ in our generalized result) 
and gets very close to the lower bound of $\Omega(\frac{k^2}{\log k})$ proved in the same paper.

\begin{figure}[t!]
\vspace*{-3.5ex}
   \centering   
    \begin{tikzpicture}[baseline=(current bounding box.north)]
        \matrix [nodes=draw,column sep=1mm, font=\small]
        {
        \node[draw=none,yshift=0.85cm]{};
        \\        
            \node[draw=none,yshift=0.5cm](P0) {\scriptsize{$\mathbf{[c_0]}$}};
            &[-0.3cm] \node[draw=none,yshift=0.5cm](P1) {\scriptsize{$\mathbf{[c_1]}$}}; 
            &[-0.3cm] \node[draw=none,yshift=0.5cm](P2) {\scriptsize{$\mathbf{[c_2]}$}}; 
            &[-0.3cm] \node[draw=none,yshift=0.5cm](P3) {\scriptsize{$\mathbf{[c_3]}$}}; 
             & \node[draw=none,yshift=0.5cm] {\phantom{xxxxx}}; 
            \\
        
            \node[fill=macaroniandcheese](PC0) {1}; 
            & \node[fill=macaroniandcheese](PC1){1}; 
            & \node[fill=macaroniandcheese](PC2) {0}; 
            & \node[fill=macaroniandcheese](PC3) {1}; 
            \\
        
            \node {0}; 
            & \node{0}; & \node {1}; & \node {1}; 
            \\
       
            \node {0}; 
            & \node{0}; & \node {0}; & \node {1}; 
            \\
                
             \node {1}; 
            & \node {0}; & \node {1}; 
            & \node {0}; 
            \\

            \node {1}; 
            & \node{1}; & \node {0}; & \node {0}; 
             \\

            \node {0}; 
            & \node{0}; & \node {0}; & \node {0}; 
            \\
        
            \node(CT1) {1};  
            & \node{1}; & \node {0}; & \node {0}; 
            \\  

            \node[draw=none,fill=gray!20!white,minimum width=13.2pt,minimum height=20pt]  {\smash{\raisebox{-40\depth}{$\vdots$}}};  
            &  \node[draw=none,fill=gray!20!white,minimum width=13.2pt,minimum height=20pt]  {\smash{\raisebox{-40\depth}{$\vdots$}}}; 
            & \node[draw=none,fill=gray!20!white,minimum width=13.2pt,minimum height=20pt]  {\smash{\raisebox{-40\depth}{$\vdots$}}}; 
            & \node[draw=none,fill=gray!20!white,minimum width=13.2pt,minimum height=20pt]  {\smash{\raisebox{-40\depth}{$\vdots$}}}; 
            \\       
           \node {1};  
            & \node{1}; & \node {0}; & \node {0}; 
            \\  

            \node {0};  
            & \node{1}; & \node {0}; & \node {1}; 
            \\

            \node[fill=non-photoblue](TAU) {1}; 
            & \node[fill=non-photoblue] {0}; & \node[fill=non-photoblue] {1}; 
            & \node[fill=non-photoblue] {1}; 
            \\ 
        };
        \node[xshift=-2.0cm,yshift=3.0cm](CT0){\scriptsize{$\tau(n,k)$}};
        \draw[dotted, ->] (CT0) |- (CT1);
        \draw[dotted, ->] (P0) -- (PC0); 
        \draw[dotted, ->] (P1) -- (PC1);
        \draw[dotted, ->] (P2) -- (PC2);
        \draw[dotted, ->] (P3) -- (PC3);
    \node at (current bounding box.south)[anchor=south,xshift=-.35cm,yshift=-0.7cm]{$(a)$};
   
    \end{tikzpicture}    
    \begin{tikzpicture}[baseline=(current bounding box.north)]
        \matrix [nodes=draw,column sep=1mm, font=\small]
        {
        \node[draw=none, yshift=.5cm](C0) {\scriptsize{$[\mathbf{c_0}]$}}; 
        &[-0.15cm] \node[draw=none,yshift=1.0cm](C1) {\scriptsize{$[\mathbf{c_1}](-1)$ }}; 
        &[-0.80cm] \node[draw=none, yshift=.5cm](C2) {\scriptsize{$[\mathbf{c_2}] (2)$}}; 
        &[-0.7cm] \node[draw=none,yshift=1.0cm](C3) {\scriptsize{$[\mathbf{c_3}] (2)$}}; 
        &[-0.3cm] \node[draw=none,yshift=1.5cm] {\phantom{c...}}; 
        &[-.5cm] \node[draw=none,yshift=1.0cm](C5) {\scriptsize{$\mathbf{z}(-1)$\phantom{x}}};
        &[-0.4cm] \node[draw=none,yshift=0.5cm](C6) {\scriptsize{$\mathbf{z}$}};
        &[-0.2cm] \node[draw=none,yshift=1.0cm](C7) {\scriptsize{$\mathbf{z}(1)$}};    
        &[0.2cm] \node[draw=none,yshift=0.5cm](C10) {\scriptsize{$\mathbf{z^*}$}};         
\\
            \node[fill=macaroniandcheese](CC0) {1};  
            & \node[fill=non-photoblue](CC1) {0}; 
            & \node(CC2) {0}; 
            & \node(CC3) {1}; 
            & \node[draw=none] {\phantom{xxxxxxxxx}}; 
            & \node(CC5) {1};
            & \node(CC6) {1};    
            & \node(CC7) {1};   
            & \node(CC10) {1};             
\\        
            \node {0}; 
            & \node[fill=macaroniandcheese] {1}; 
            & \node {1}; 
            & \node {0}; 
            & \node[draw=none] {\phantom{c}}; 
            & \node[fill=grannysmithapple] {0};
            & \node {1};            
            & \node {1}; 
            & \node {1}; 
\\        
            \node {0}; 
            & \node[fill=grannysmithapple]{0}; 
            & \node[fill=grannysmithapple] {0}; 
            & \node[fill=grannysmithapple] {0}; 
            & \node[draw=none] {\phantom{c}}; 
            & \node[fill=grannysmithapple] {0};
            & \node[fill=grannysmithapple] {0};            
            & \node {1}; 
            & \node {1}; 
\\                
            \node[fill=grannysmithapple] {1};  
            & \node[fill=grannysmithapple] {0}; 
            & \node[fill=grannysmithapple] {0}; 
            & \node[fill=grannysmithapple] {0}; 
            & \node[draw=none] {\phantom{c}}; 
            & \node[fill=grannysmithapple] {0};
            & \node[fill=grannysmithapple] {0};            
            & \node[fill=grannysmithapple] {0}; 
            & \node[fill=grannysmithapple] {0}; 
\\                 
            \node {1}; 
            & \node[fill=grannysmithapple]{0}; 
            & \node[fill=grannysmithapple] {0}; 
            & \node[fill=grannysmithapple] {0}; 
            & \node[draw=none] {\phantom{c}}; 
            & \node {1};
            & \node[fill=grannysmithapple] {0};            
            & \node[fill=grannysmithapple] {0}; 
            & \node{1}; 
\\        
            \node {0};  
            & \node{1}; 
            & \node[fill=gray!20!white] {\phantom{1}}; 
            & \node[fill=gray!20!white] {\phantom{1}}; 
            & \node[draw=none] {\phantom{c}}; 
            & \node {1};
            & \node {1};            
            & \node[fill=grannysmithapple] {0}; 
            & \node{1};             
\\        
            \node {1};  
            & \node{0}; 
            & \node[fill=gray!20!white] {\phantom{1}}; 
            & \node[fill=gray!20!white] {\phantom{1}}; 
            & \node[draw=none] {\phantom{c}}; 
            & \node[fill=gray!20!white] {\phantom{0}};
            & \node {1};            
            & \node {1};  
            & \node {1}; 
\\ 
            \node[draw=none,fill=gray!20!white,minimum width=13.2pt,minimum height=20pt]  {\smash{\raisebox{-40\depth}{$\vdots$}}}; 
            & \node[draw=none,fill=gray!20!white,minimum width=13.2pt,minimum height=20pt]  {\smash{\raisebox{-40\depth}{$\vdots$}}};  
            & \node[draw=none,fill=gray!20!white,minimum width=13.2pt,minimum height=20pt]  {\smash{\raisebox{-40\depth}{$\vdots$}}}; 
            & \node[draw=none,fill=gray!20!white,minimum width=13.2pt,minimum height=20pt]  {\smash{\raisebox{-40\depth}{$\vdots$}}};  
            & \node[draw=none] {\phantom{c}};
            & \node[draw=none,fill=gray!20!white,minimum width=13.2pt,minimum height=20pt]  {\smash{\raisebox{-40\depth}{$\vdots$}}};  
            & \node[draw=none,fill=gray!20!white,minimum width=13.2pt,minimum height=20pt]  {\smash{\raisebox{-40\depth}{$\vdots$}}};  
            & \node[draw=none,fill=gray!20!white,minimum width=13.2pt,minimum height=20pt]  {\smash{\raisebox{-40\depth}{$\vdots$}}};  
            & \node[draw=none,fill=gray!20!white,minimum width=13.2pt,minimum height=20pt]  {\smash{\raisebox{-40\depth}{$\vdots$}}}; 
\\  
            \node {1};  
            & \node[fill=gray!20!white]{\phantom{1}}; 
            & \node[fill=non-photoblue] {1}; 
            & \node[fill=non-photoblue] {1}; 
            & \node[draw=none] {\phantom{c}}; 
            & \node {1};
            & \node {1};            
            & \node[fill=gray!20!white] {\phantom{1}};
            & \node {1}; 
\\             
            \node {0}; 
            & \node {1}; 
            & \node[fill=macaroniandcheese] {0}; 
            & \node[fill=macaroniandcheese] {1}; 
            & \node[draw=none] {\phantom{c}}; 
            & \node {1};
            & \node {1};            
            & \node {1};  
            & \node{1}; 
\\            
            \node[fill=non-photoblue] {1};  
            & \node {1}; 
            & \node {1}; 
            & \node {1}; 
            & \node[draw=none] {\phantom{x}}; 
            & \node {1};
            & \node {1};            
            & \node {1}; 
            & \node{1}; 
\\ 
        };
    \draw[dotted, ->] (C0) -- (CC0); 
    \draw[dotted, ->] (C1) -- (CC1);
    \draw[dotted, ->] (C2) -- (CC2);
    \draw[dotted, ->] (C3) -- (CC3);
    \draw[dotted, ->] (C5) -- (CC5);
    \draw[dotted, ->] (C6) -- (CC6);
    \draw[dotted, ->] (C7) -- (CC7);   
    \draw[dotted, ->] (C10) -- (CC10);     
    \node at (current bounding box.south)[anchor=south,xshift=-2.3cm,yshift=-0.7cm]{$(b)$};
    \node at (current bounding box.south)[anchor=south,xshift=2.3cm,yshift=0.0cm]{$(c)$};    
    \end{tikzpicture}
    \hspace{1.2cm}
    \begin{tikzpicture}[baseline=(current bounding box.north)]
        \matrix [nodes=draw,column sep=1mm, font=\small]
        {
        \node[draw=none,yshift=0.85cm]{};
        \\
            \node[draw=none,yshift=.5cm](C8) {\scriptsize{$\mathbf{[c_{0}]}$}};  
            &[-0.2cm] \node[draw=none,yshift=.5cm](C9) 
            {\scriptsize{$\mathbf{[c_{0}]} \wedge \mathbf{z^*}$}};
            \\
        
            \node(CC8) {1};
            & \node(CC9) {1};
            \\

            \node {0};
            & \node {0};
            \\
            
            \node {0};
            & \node {1};
            \\

            \node[fill=grannysmithapple] {1};
            & \node[fill=grannysmithapple] {0};
            \\   
            
            \node {1};
            & \node {1};
            \\

            \node {0};
            & \node {0};
            \\
            
            \node {1};
            & \node {1};
            \\

            \node[draw=none,fill=gray!20!white,minimum width=13.2pt,minimum height=20pt]  {\smash{\raisebox{-40\depth}{$\vdots$}}};  
            & \node[draw=none,fill=gray!20!white,minimum width=13.2pt,minimum height=20pt]  {\smash{\raisebox{-40\depth}{$\vdots$}}};  
            \\  

            \node {1};
            & \node {1};
            \\   
            
            \node {0};
            & \node {0};
            \\

            \node {1};
            & \node {1};
            \\              
        };
    \draw[dotted, ->] (C8) -- (CC8);
    \draw[dotted, ->] (C9) -- (CC9);         
    \node at (current bounding box.south)[anchor=south,xshift=-0.2cm, yshift=-0.7cm]{$(d)$};
\end{tikzpicture} 
\vspace*{-2ex}
\caption{\textit{\small
An illustration of the ultra-resilient properties described in Definition 
\ref{def:shift_k_unknown-fuzzy} for parameters $k = 4 \le n$ and $\alpha = 1$. 
In (a) an arbitrary subset 
$T = \{\mathbf{[c_0]}, \mathbf{[c_1]}, \mathbf{[c_2]}, \mathbf{[c_3]} \}$ 
of column vectors in the matrix is depicted. (b) shows a designated column 
$\mathbf{[c_j]} \in T$ (without loss of generality we assume that $\mathbf{[c_j]} = \mathbf{[c_0]}$), 
along with arbitrary shifts applied to the other columns in $T \setminus \{\mathbf{[c_0]} \}$.
The initial and final bits of each original column vector are highlighted orange and cyan, 
respectively. 
In (c) we have the superposition
vector 
$\mathbf{z} = \mathbf{[c_1]}(-1) \lor \mathbf{[c_2]}(2) \lor \mathbf{[c_3]}(2)$
surrounded by $\mathbf{z}(-1)$ on its left and $\mathbf{z}(1)$ on the right.
The rightmost column corresponds to the slipped vector
$\mathbf{z^*} = \mathbf{z}(-1) \lor \mathbf{z} \lor \mathbf{z}(1)$.
In (d) the vectors $\mathbf{[c_j]}$ and $\mathbf{[c_j]} \wedge \mathbf{z^*}$ 
are presented side by side. 
For $\alpha = 1$, the property ensures that 
$\left| (\mathbf{[c_{j}]} \wedge \mathbf{z^*} )_{[0, \tau(n,k)]} \right| <  \left|\mathbf{[c_j]}_{[0,\tau(n,k)]} \right|$
indicating the existence of at least one row where column vector $\mathbf{[c_{j}]}$ has a 1 while 
$\mathbf{[c_j]} \wedge \mathbf{z^*}$ has a 0, which in turn corresponds to 
$\mathbf{[c_j]}$ having a 1 while $\mathbf{z}(-1)$,  $\mathbf{z}$ and  $\mathbf{z}(1)$ 
all having a 0 in the same position (see the rows highlighted green in (b) and (c) and (d)).
}}
\label{fig:shiftSP}	
\end{figure}

\section{Formal definition and notation}
\label{sec:preliminaries}

Given a binary vector $\x = (x_1, x_2, \ldots, x_t)$, we denote by $S(\x)$ 
the set of all \textit{cyclic shifts} of $\x$, that is, $S(\x)$ contains all 
different binary vectors of the form $(x_{1\oplus i}, x_{2\oplus i}, \ldots, x_{t\oplus i})$, 
where $\oplus$ denotes addition mod $t$ and $i=0, \ldots, t-1$. 
It is clear that $1 \leq |S(\x)| \leq t$.

For any binary vector $\x$, $|\x|$ represents the number of 1's in $\x$, 
also known as the \textit{weight} of $\x$. The symbols $\vee$ and $\wedge$ denote 
bitwise OR and AND operators, respectively, applied to binary vectors.

Let $R = \{\mathbf{y}_1, \mathbf{y}_2, \ldots, \mathbf{y}_r\}$ be a set of binary vectors. 
Given $R$, we 
construct 
the 
set $S_{\vee}(R)$~as~follows:

\vspace*{-0.5ex}
\begin{itemize}
    \item First we consider all cyclic shifts for each vector in $R$. 
\vspace*{-0.5ex}
    \item For each combination of cyclic shifts from the vectors in $R$, we perform a bitwise OR operation.
\end{itemize}

\vspace*{-0.5ex}
Formally, $S_{\vee}(R)$ is defined as:
\begin{center}
$   S_{\vee}(R) = 
   \left\{ 
         \bz = \bigvee_{i=1}^r \bz_i  \; \big|\;\;  \bz_i \in S(\by_i),  1 \le i \le r 
   \right\}$.
\end{center}
In other words, $S_{\vee}(R)$ consists of all possible binary vectors obtained by 
taking the bitwise OR of one cyclic shift from each vector in $R$.

Given a $t \times n$ binary matrix $\mathbf{M}$, we refer to the $c_j$-th column vector of $\mathbf{M}$ as $\mathbf{[c_j]}$. For a column vector $\mathbf{[c_j]}$, 
we denote $\mathbf{[c_j]}(i)$, where $i=0, \ldots, t-1$, as the $i$th cyclic shift of $\mathbf{[c_j]}$. 
Formally, if $\mathbf{[c_j]} = (x_1, x_2, \ldots , x_t)$, then $\mathbf{[c_j]}(i) 
= (x_{1\oplus i}, x_{2\oplus i}, \ldots , x_{t\oplus i})$ for $i=0, \ldots, t-1$. 
For simplicity, we extend this definition to any integer $i \ge 0$, with the understanding that 
$i$ is taken modulo $t$ throughout the paper.
Finally, given any vector $\mathbf{x}$, its subvector 
from bit position $\beta_1$ to bit position $\beta_2$, where $0 \leq \beta_1 \leq \beta_2 \leq t$, is represented as $\mathbf{x}_{[\beta_1,\beta_2]}$.


We now present the formal definition of ultra-resilient superimposed codes. To do this, we first need to introduce two new concepts: code elongation, which generalizes the idea of code length for classic superimposed codes, and slipped vector, which is essential for preserving the isolation property.

\B
\paragraph{Code elongation.}
In our general scenario, where codewords can undergo arbitrary 
adversarial shifts, the unknown parameter $k$ introduces significant challenges that are not 
present in the classic case of fixed and aligned codewords. When codewords are fixed, 
a superimposed code for an unknown $k$ can be constructed by concatenating superimposed 
codes for known, incrementally increasing parameters $k' \le n$. However, it is well-known 
that when each codeword can experience an arbitrary adversarial shift, this concatenation technique, 
which attempts to `guess' the unknown parameter by concatenation,~becomes~ineffective.
As a result, the concept of \textit{code length}, which for a known $k$ (and a given $n$) was a 
fixed value corresponding to the number $t$ of rows in the matrix, evolves into the more general 
notion of \textit{code elongation} in our much broader definition of ultra-resilient superimposed 
codes for unknown $k$.
Code \elongation\ is characterized by a function $\tau: \mathbb{N} \times \mathbb{N} \to [0,t)$ 
that, in addition to the given $n$, also depends on the unknown parameter $k$. 
This dependence enables the code to maintain its ultra-resilience properties across different 
(unknown) column subset sizes $k$.

\B
\paragraph{Slipped vector.}
For any vector $\mathbf{z}$, we define the corresponding \textit{slipped} vector as 
$\mathbf{z^*} = \mathbf{z}(-1) \lor \mathbf{z} \lor \mathbf{z}(1)$.
The slipped vector is crucial for ensuring the isolation property in the
definition of ultra-resilient superimposed codes given below 
(see Figure \ref{fig:shiftSP}
for a graphical reference): 
any 1-bit in $\mathbf{[c_{j}]}$ that does not overlap with 
$\mathbf{z^*}$, i.e., in the surplus 
$\mathbf{[c_{j}]} \setminus (\mathbf{[c_{j}]} \wedge \mathbf{z^*})$, 
indicates the existence of a row $i$ such that 
$\mathbf{[c_{j}]}$ has a 1 at position $i$ while $\mathbf{z}$ has 0's in positions 
$(i-1) \bmod t$, $i$, and $(i+1) \bmod t$.

\begin{definition}[Ultra-resilient superimposed code]
\label{def:shift_k_unknown-fuzzy}
Let $n$ be any integer.
Given a function $\tau: \mathbb{N} \times \mathbb{N} \to [0,t)$, where $t \ge n$, 
and a real number $0 < \alpha \le 1$, we say that  
a $t \times n$ binary matrix $\mathbf{M}$ is a $(n, \alpha)$-ultra-resilient   
superimposed code (denoted $(n, \alpha)$-URSC) of \elongation\ $\tau$, if the following condition~holds: 

For any $2\le k\le n$ and any subset $T$ of column indices of $\mathbf{M}$ with $|T| = k$, 
and for any column index $c_j \in T$, the inequality 
$$
\left| \Big(\mathbf{[c_{j}]} \wedge \mathbf{z^*} \Big)_{[0, \tau(n,k)]} \right| < \alpha \cdot \left|\mathbf{[c_j]}_{[0,\tau(n,k)]} \right|
$$
is satisfied for all $\mathbf{z} \in S_{\vee}(T \setminus \{c_j\})$. 

\end{definition}




\section{Construction of Ultra-Resilient Superimposed Codes (URSC)}
\label{sec:unknown-k}

This section focuses on the construction of ultra-resilient superimposed codes without knowing  
the parameter $k$. The objective is to design a randomized algorithm that, 
given the input parameters $n$ and $\alpha$ and for any $\epsilon > 0$,
efficiently generates an ultra-resilient superimposed code with \elongation\ 
$\tau(n,k) = c (k^2/ \alpha^{2+\epsilon})\ln n$, for any (unknown) $1 < k \le n$.
This is near-optimal in view of the $\Omega((k/\alpha)^2(\log_{k/\alpha} n)$ lower bound
proved in \cite{RV2024}.
This lower bound also implies that to ensure the code elongation remains within practical limits, 
specifically polynomial in $k$, one can reasonably assume that 
$e^{-k} < \alpha \le 1$.

Although the construction algorithm is randomized, the generated code can be used reliably in 
deterministic algorithms. Additionally, the code only needs to be generated once, as it can be reused 
in different contexts as long as the parameters $n$ and $\alpha$ remain unchanged.

Our approach to efficiently constructing the codes revolves 
around two key concepts: the \textit{Collision Bound Property} and the strategic selection of 
\textit{assignment probabilities} for the 1's and 0's in the matrix. These concepts are closely 
intertwined: the Collision Bound Property streamlines the computational verification of code correctness, 
while the strategic assignment probabilities guarantee that the matrix satisfies the Collision Bound Property.

In Section \ref{CBpropertysec}, we introduce the Collision Bound Property and we demonstrate its sufficiency 
in ensuring that a given matrix qualifies as an ultra-resilient superimposed code.
Section \ref{sub:random} introduces our random matrix construction method. Subsequently, 
Section \ref{satisfaction} serves as the main technical segment where 
we establish that matrices generated using our random procedure have a high probability of satisfying 
both inequalities stipulated by the Collision Bound Property and consequently of qualifying as
ultra-resilient superimposed codes.
Finally, in Section \ref{construction}, we outline the construction algorithm. This algorithm efficiently 
utilizes repeated applications of our random procedure of Section \ref{sub:random} 
to generate ultra-resilient superimposed codes.

Throughout this section, we assume that the two parameters $n$ and $\alpha$ are fixed and given. 
Specifically, $n$ is an integer such that $n \geq 2$, and $\alpha$ is a real number such 
that $e^{-k} < \alpha \leq 1$.

\subsection{Collision Bound Property}\label{CBpropertysec}

The definition of 
\dk{URSC}
codes involves a condition on subsets of $k$ columns, 
which can be computationally challenging to verify due to the super-polynomial number of such subsets. 
A crucial step towards an efficient construction is the introduction 
of the \textit{Collision Bound Property}, a sufficient condition for ensuring the resilience properties 
of $(n, \alpha)$-URSC, which applies to pairs of columns rather than subsets of $k$ columns.

Before delving into the formal definition, let us summarize the Collision Bound Property. 
This property divides each column into upper and lower segments 
-- specifically $[0, \tau_1(n,k)]$ and $[\tau_1(n,k), \tau_2(n,k)]$, with $\tau_2(n,k)$
corresponding to the elongation of the code -- and ensures two key inequalities: 
the \textit{Weight Inequality}, which pertains to any individual column, 
and the \textit{Collision Weight Inequality}, which applies to any pair of columns.
\vspace*{-1ex}
\begin{itemize}
    \item \textbf{Weight Inequality}: The first inequality compares the weights of the upper and lower segments 
    of any column. Specifically, it ensures that the weight of the upper segment is at most $\alpha$ times 
    the weight of the lower segment. 
    This establishes a specific dominance of the lower segment's weight over the 
    upper segment's weight.
    
\vspace*{-0.5ex}
    \item \textbf{Collision Weight Inequality}: The second inequality bounds the ``collision weight" 
    of any pair~of~columns:
    \[
    \left|\Big(\mathbf{[c_j]} \land \big(\mathbf{[c_{j'}]}(i-1) 
                           \lor \mathbf{[c_{j'}]}(i) 
                           \lor \mathbf{[c_{j'}]}(i+1) \big)
                           \Big)_{[\tau_1(n, k), \tau_2(n, k)]}
    \right| \ ,
    \]
    defined as the number of positions in the lower segment where a column 
    intersects with the slipped vector of any cyclic shift of the other column of the pair. 
    This inequality ensures that this collision 
    weight does not exceed $\alpha$ times one $(k - 1)$th 
    of the lower segment's weight. This helps in 
    controlling the overlap between columns, which is crucial for maintaining the superimposed code~property.
\end{itemize}


\begin{definition}[Collision Bound Property]\label{CBproperty}
Let $\mathbf{M}$ be a $t \times n$ 
binary matrix for some integer $t \ge n$. Let $\tau_1, \tau_2: \mathbb{N} \times \mathbb{N} \to [0,t)$ 
be two integer functions. 

\medskip
\textbf{Collision Bound Property} \pairCondition: For every $1 < k \leq n$, for each pair of column indices $c_j$ and $c_{j'}$ and every cyclic shift $\mathbf{[c_{j'}]}(i)$, $0 \leq i \leq t-1$, both of the following inequalities hold:

\begin{itemize}
    \item \textbf{Weight Inequality}:
    \begin{equation}\label{conjineq}
        |\mathbf{[c_j]}_{[0, \tau_1(n, k) ]} |
        \le \alpha\,|\mathbf{[c_j]}_{[\tau_1(n, k), \tau_2(n, k)]}| \ ,
    \end{equation}
    \item \textbf{Collision Weight Inequality}:
    \begin{equation}
    \label{conjineq2}
    \hspace*{-2em}
        \left|\Big(\mathbf{[c_j]} \land \big(\mathbf{[c_{j'}]}(i-1) 
                               \lor \mathbf{[c_{j'}]}(i) 
                               \lor \mathbf{[c_{j'}]}(i+1) \big)
                               \Big)_{[\tau_1(n, k), \tau_2(n, k)]}
        \right| 
        \leq \left\lfloor \frac{\alpha  \; |\mathbf{[c_j]}_{[\tau_1(n, k), \tau_2(n, k)]}|  - 1}{k - 1} \right\rfloor.
    \end{equation}     
\end{itemize}  
\end{definition}

The following lemma establishes the sufficiency of the Collision Bound Property in guaranteeing that
a matrix $\mathbf{M}$ is an $(n, 2\alpha)$-URSC.

\begin{lemma}\label{1e3}
Let $\mathbf{M}$ be a $t \times n$ binary 
matrix for some integers $t \ge n$. Assume that $\tau_1, \tau_2: \mathbb{N} \times \mathbb{N} \to [0,t)$  
are two integer functions such that the Collision Bound Property \pairCondition\ is satisfied.
Then, the matrix $\mathbf{M}$ 
{$(n, 2\alpha)$}-URSC with \elongation\ $\tau_2(n,k)$.
\end{lemma}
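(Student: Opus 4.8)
The plan is to show that the Collision Bound Property \pairCondition\ implies the defining inequality of a $(n,2\alpha)$-URSC directly. Fix $2 \le k \le n$, a subset $T$ of $k$ column indices, a distinguished column $c_j \in T$, and any $\mathbf{z} \in S_\vee(T \setminus \{c_j\})$. By definition of $S_\vee$, we may write $\mathbf{z} = \bigvee_{c_{j'} \in T \setminus \{c_j\}} \mathbf{[c_{j'}]}(i_{j'})$ for some shifts $i_{j'}$. The slipped vector is $\mathbf{z}^* = \mathbf{z}(-1) \lor \mathbf{z} \lor \mathbf{z}(1)$, and since shifting distributes over $\lor$, we get
\[
\mathbf{z}^* = \bigvee_{c_{j'} \in T \setminus \{c_j\}} \big( \mathbf{[c_{j'}]}(i_{j'}-1) \lor \mathbf{[c_{j'}]}(i_{j'}) \lor \mathbf{[c_{j'}]}(i_{j'}+1) \big).
\]
Therefore $\mathbf{[c_j]} \land \mathbf{z}^* = \bigvee_{c_{j'}} \big( \mathbf{[c_j]} \land (\mathbf{[c_{j'}]}(i_{j'}-1) \lor \mathbf{[c_{j'}]}(i_{j'}) \lor \mathbf{[c_{j'}]}(i_{j'}+1)) \big)$, and by the union bound on weights (subadditivity of $|\cdot|$ over $\lor$), restricting to any bit range,
\[
\left| \big(\mathbf{[c_j]} \land \mathbf{z}^*\big)_{[\tau_1(n,k),\tau_2(n,k)]} \right| \le \sum_{c_{j'} \in T \setminus \{c_j\}} \left| \Big(\mathbf{[c_j]} \land \big(\mathbf{[c_{j'}]}(i_{j'}-1) \lor \mathbf{[c_{j'}]}(i_{j'}) \lor \mathbf{[c_{j'}]}(i_{j'}+1)\big)\Big)_{[\tau_1(n,k),\tau_2(n,k)]} \right|.
\]

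Now apply the Collision Weight Inequality \eqref{conjineq2} to each of the $k-1$ terms on the right: each is at most $\big\lfloor (\alpha |\mathbf{[c_j]}_{[\tau_1(n,k),\tau_2(n,k)]}| - 1)/(k-1) \big\rfloor$, so the sum is at most $\alpha |\mathbf{[c_j]}_{[\tau_1(n,k),\tau_2(n,k)]}| - 1 < \alpha |\mathbf{[c_j]}_{[\tau_1(n,k),\tau_2(n,k)]}|$. Thus the collision weight in the lower segment is strictly less than $\alpha$ times the lower-segment weight. It remains to push this through to the full range $[0,\tau_2(n,k)]$ at the cost of doubling $\alpha$. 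Write $w_{\mathrm{up}} = |\mathbf{[c_j]}_{[0,\tau_1(n,k)]}|$ and $w_{\mathrm{lo}} = |\mathbf{[c_j]}_{[\tau_1(n,k),\tau_2(n,k)]}|$. On the one hand, the contribution to $\big(\mathbf{[c_j]} \land \mathbf{z}^*\big)_{[0,\tau_2(n,k)]}$ from the upper segment $[0,\tau_1(n,k)]$ is at most $w_{\mathrm{up}} \le \alpha w_{\mathrm{lo}}$ by the Weight Inequality \eqref{conjineq}; on the other hand, the contribution from the lower segment is $< \alpha w_{\mathrm{lo}}$ as just shown. Adding, $\left| \big(\mathbf{[c_j]} \land \mathbf{z}^*\big)_{[0,\tau_2(n,k)]} \right| < 2\alpha w_{\mathrm{lo}} \le 2\alpha\, |\mathbf{[c_j]}_{[0,\tau_2(n,k)]}|$, where the last step uses $w_{\mathrm{lo}} \le |\mathbf{[c_j]}_{[0,\tau_2(n,k)]}|$ since the lower segment is a subrange of $[0,\tau_2(n,k)]$. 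This is exactly the $(n,2\alpha)$-URSC inequality with elongation $\tau_2(n,k)$, completing the proof.

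The only delicate points I anticipate are bookkeeping ones rather than conceptual obstacles: being careful that the overlapping boundary bit at position $\tau_1(n,k)$ (shared by the two half-ranges in the notation $\mathbf{x}_{[\beta_1,\beta_2]}$) is not double-counted in a way that breaks the strict inequality — but since we already have a strict inequality ($\le \alpha w_{\mathrm{lo}} - 1$) in the lower segment, there is slack to absorb one extra bit — and making sure the cyclic-shift indices $i_{j'}$ are allowed to range over all of $\{0,\dots,t-1\}$, which is exactly the quantification in \pairCondition. The main "load-bearing" step is the observation that $S_\vee$ together with the slipped-vector operation decomposes, via subadditivity, into a sum of exactly $k-1$ pairwise collision-weight terms, which is precisely what the floor bound $\lfloor(\alpha w_{\mathrm{lo}}-1)/(k-1)\rfloor$ was engineered to handle; once that decomposition is in place, the rest is the two-segment union bound described above.
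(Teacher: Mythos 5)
Your proof is correct and follows essentially the same route as the paper's: decompose via the distributivity of $\land$ over $\lor$ into $k-1$ pairwise collision terms, bound each by the Collision Weight Inequality to dominate the lower segment, then handle the upper segment via the Weight Inequality and add. If anything, your write-up is slightly more careful than the paper's, since you track the slipped vector $\mathbf{z}^*$ explicitly from the start (the paper's displayed equations write $\mathbf{z}$ in several places where the argument logically requires $\mathbf{z}^*$, which is an apparent typographical lapse); your observation that the floor bound leaves slack to absorb the double-counted boundary bit at position $\tau_1(n,k)$ is also a nice touch, though not strictly needed since the weight of the union over $[0,\tau_2]$ is anyway at most the sum of weights over $[0,\tau_1]$ and $[\tau_1,\tau_2]$.
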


\begin{proof}
	Let us consider two functions $\tau_1$, $\tau_2$ and a $t\times n$ binary matrix $\mathbf{M}$ satisfying property \pairCondition.
    For any $1 < k \le n$,
	consider a fixed subset $T$ of $|T| = k$ column indices, and a column index $c_j \in T$.  
	Let $\mathbf{z} \in S_{\vee}(T \setminus \{c_j\})$
	and $c_{m_1}, \ldots, c_{m_{k-1}}$ be the column indices in 
	$T \setminus \{c_j\}$. 
	By definition, 
	 $\mathbf{z} = \mathbf{[c_{m_1}]}(i_1) \lor \cdots \lor \mathbf{[c_{m_{k-1}}]}(i_{k-1})$,
	 for some arbitrary shifts $0 \le i_1, \ldots, i_{k-1} \le t-1$.
	 Hence, applying the distribution of conjunction over disjunction, we get
\begin{equation*}
(\mathbf{[c_j]} \land \mathbf{z})_{[\tau_1(n,k),\tau_2(n,k)]}  
	=  		 
 \left(\mathbf{[c_j]} \land \mathbf{[c_{m_1}]}(i_1) \right)_{[\tau_1(n,k),\tau_2(n,k)]} 
	\lor  \cdots \lor
	\left(\mathbf{[c_j]} \land \mathbf{[c_{m_{k-1}}]}(i_{k-1}) \right)_{[\tau_1(n,k),\tau_2(n,k)]}\ .
\end{equation*} 
By inequality (\ref{conjineq2}) of property \pairCondition, it follows that
\begin{equation}\label{eq:weight}
	|(\mathbf{[c_j]} \land \mathbf{z})_{[\tau_1(n,k),\tau_2(n,k)]} | 
	   \le  (k-1)   \left\lfloor \frac{\alpha \; |\mathbf{[c_j]}_{[\tau_1(n,k),\tau_2(n,k)]}|  - 1}{k - 1} \right\rfloor
	   < \alpha\, | \mathbf{[c_j]}_{[\tau_1(n,k),\tau_2(n,k)]} |
    \ .
\end{equation}
Considering also the second inequality of \pairCondition, we have:
\begin{eqnarray*}
	\left| (\mathbf{[c_{j}]} \wedge \mathbf{z})_{[\tau_2(n,k)]} \right| 
		&\le &  | \mathbf{[c_j]}_{[0, \tau_1(n,k) ]}| 
		           + \left| (\mathbf{[c_j]} \land \mathbf{z} )_{[\tau_1(n,k),\tau_2(n,k)]} \right| \\
		&< &       \alpha\,| \mathbf{[c_j]}_{[\tau_1(n,k),\tau_2(n,k)]} |
		           + \alpha\,| \mathbf{[c_j]}_{[\tau_1(n,k),\tau_2(n,k)]} | \;\;\;\;\; 
             \text{ by (\ref{conjineq}) and (\ref{eq:weight})}\\
		&< & { 2\alpha  | \mathbf{[c_j]}_{[\tau_2(n,k)]}| } 
  \ .
\end{eqnarray*}

Since we have established this property for all $\mathbf{z} \in S_{\vee}(T \setminus c_j)$, and 
we have shown it holds for all possible subsets 
 $T$ and column indices $c_j$ satisfying $|T| = k$, we conclude that $\mathbf{M}$ is a 
{$(n,2\alpha)$}-URSC with \elongation\ $\tau_2(n,k)$.
\end{proof}

\subsection{Random construction}\label{sub:random}

In this subsection, we present a random construction of a binary matrix. 
As demonstrated in the next subsection, this construction has a high probability of 
satisfying the 
Collision Bound Property \pairCondition\
and, hence, of being an $(n, 2\alpha)$-URSC
in view of Lemma \ref{1e3}. In Theorem \ref{thm:code-final} we finally obtain 
an $(n, \alpha)$-URSC.

In addition to $n$ and $\alpha$, the construction relies on two more parameters: 
an arbitrarily small constant $\epsilon > 0$ and a real constant $c > 0$. 
The role of $\epsilon$ is to bring the elongation of the code arbitrarily close to the asymptotic bound 
$O((k/\alpha)^2 \ln n)$. The constant $c$, if sufficiently large, ensures a high probability 
of successful construction, as will be demonstrated in the next section.

The goal of the random construction that we are going to describe is to assign the probability $p(r)$ 
of having a 1 in the $r$th bit of each column, for $0 \le r < t$. The ignorance of the parameter $k$ 
presents new significant challenges (in addition to those related to the arbitrary shifts), 
as we cannot use $k$ in assigning the probabilities $p(r)$, nor can we use a uniform distribution. 
Instead, we estimate $k$ as we descend the positions of the columns, with probabilities in the upper part 
tailored for smaller values of $k$ and those in the lower part for larger values of $k$. 
This is achieved by gradually decreasing the probabilities $p(r)$ as $r$ increases, \textit{i.e.}, 
as we move down the positions of the columns. Throughout this process, we must ensure that the two 
inequalities of the Collision Weight Property are satisfied, creating a subtle trade-off as explained below.

The Weight Inequality requires that the decrease in the frequency of 1's is controlled such that the weight 
of the lower segment (from $\tau_1(n,k)$ to $\tau_2(n,k)$) dominates the weight of the upper segment 
(up to position $\tau_1(n,k)-1$). 
Conversely, the Collision Weight Inequality requires that this dominance does not lead to an excessive number of collisions in the lower segment; specifically, the number of collisions in the lower (dominating) segment must be significantly less than the weight of each column. 
Additionally, this inequality must be satisfied without extending $\tau_2(n,k)$ beyond the desired elongation.


As we will show, a carefully chosen probability function
that decreases according to the square root of the inverse of the column's position $r$,
strikes the right balance among all these conflicting targets.

Finally, it is worth noting, that although each column of the matrix will have length 
$t = c/(\alpha^{2+\epsilon}) n^2 \ln n$ 
(since they cannot depend on $k$), 
it will be shown 
later
that 
the code still guarantees an elongation of 
$\tau(n,k) = c/(\alpha^{2 + \epsilon}) k^2 \ln n$.

\begin{definition}[Random Matrix Construction ${\mathcal{M}}(n,\alpha, \epsilon, c)$]\label{randomatrix} 
Let us define a random matrix ${\mathcal{M}}(n, \alpha, \epsilon, c)$ of $n$ columns and 
$t = (c/\alpha^{2 + \epsilon}) n^2\ln n$ rows, generated using the following procedure. 
	The $r$th bit of each column, $0\le r < t$, is independently set to $1$ with a probability given by:
	\[
	p(r) = \sqrt{\frac{1}{\lfloor r/\ln n \rfloor +1}}
 \ , 
	\]
	and to $0$ with the complementary probability.
This corresponds to each column being partitioned into $(c/\alpha^{2+\epsilon}) n^2$ blocks 
	of equal length $\ln n$, with every bit of the $b$th block, 
 for $0 \le b < (c/\alpha^{2 + \epsilon}) n^2$, independently set to $1$
 with a probability given by $1/\sqrt{b + 1}$,
	and to $0$ with the complementary probability.	
\end{definition}

\subsection{Satisfying the Collision Bound Property}\label{satisfaction}

Our next objective is to show that for $\tau_1(n,k) = \frac{c}{64 } k^2 \ln n$ and 
$\tau_2(n,k) = \frac{c}{\alpha^{2 + \epsilon}} k^2 \ln n$, any random matrix constructed as
illustrated in subsection \ref{sub:random}, satisfies the Collision Bound Property
\pairCondition\
with high probability.
(It is important to clarify that, as we will see in Section \ref{construction}, the functions 
$\tau_1$ and $\tau_2$, which are defined in terms of the unknown $k$ (and $n$), do not need 
to be known by the construction algorithm.)
Namely, we will be proving the following.
\begin{lemma}
\label{lep} 
Fix any $\epsilon > 0$. Define $\tau_1(n,k) = (c /64 ) k^2 \ln n$ and 
$\tau_2(n,k) = (c/\alpha^{2+\epsilon} ) k^2 \ln n$, where 
$c > 0$ is a sufficiently large real constant. Let 
$\mathbf{M} = \mathcal{M}(n, \alpha, c)$ be a random matrix. 
For any given $1 < k \leq n$, a pair of column indices $c_j$ and $c_{j'}$, 
and a cyclic shift $\mathbf{[c_{j'}]}(i)$ with $0 \leq i \leq t-1$, 
 the probability that the Collision Bound Property \pairCondition\ does not hold is less than
$  \frac{6}{c^2}\cdot  n^{-8\ln\left(\frac{4}{\alpha}\right)}$. 
\end{lemma}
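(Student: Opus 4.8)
The plan is to establish the two inequalities of the Collision Bound Property separately, each with a Chernoff-type concentration argument, and then take a union bound over the two failure events. Throughout, I fix $k$, the pair $c_j, c_{j'}$, and the shift $i$, and I work block-by-block using the block structure of $\mathcal{M}(n,\alpha,\epsilon,c)$: block $b$ has length $\ln n$ with each bit independently $1$ with probability $1/\sqrt{b+1}$. The key point is that $\tau_1(n,k) = (c/64)k^2\ln n$ and $\tau_2(n,k) = (c/\alpha^{2+\epsilon})k^2\ln n$ correspond respectively to block indices roughly $(c/64)k^2$ and $(c/\alpha^{2+\epsilon})k^2$, so the expected weights are computable via $\sum_{b} (\ln n)/\sqrt{b+1} \approx 2\sqrt{B}\ln n$ for a prefix of $B$ blocks.

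\medskip
\emph{Weight Inequality~(\ref{conjineq}).} First I compute $\mathbb{E}\,|\mathbf{[c_j]}_{[0,\tau_1(n,k)]}|$ and $\mathbb{E}\,|\mathbf{[c_j]}_{[\tau_1(n,k),\tau_2(n,k)]}|$ by summing $p(r)$ over the two ranges. Using $\sum_{b=0}^{B-1} 1/\sqrt{b+1} = \Theta(\sqrt{B})$, the upper-segment expectation is $\Theta(\sqrt{c}\,k\sqrt{\ln n}/8 \cdot \ln n)$ while the lower segment, running from block $\sim (c/64)k^2$ to block $\sim (c/\alpha^{2+\epsilon})k^2$, has expectation $\Theta((\sqrt{c}/\alpha^{1+\epsilon/2} - \sqrt{c}/8)k\ln n \cdot \sqrt{\ln n})$; for $\alpha \le 1$ and $c$ large this lower expectation exceeds the upper one by a factor comfortably larger than $1/\alpha$ (this is exactly why the $1/64$ constant is chosen — it forces $\mathbb{E}[\text{lower}] \ge (2/\alpha)\,\mathbb{E}[\text{upper}]$ or so). Then a two-sided Chernoff bound on the sum of independent indicators shows $|\mathbf{[c_j]}_{[0,\tau_1]}|$ is not much above its mean and $|\mathbf{[c_j]}_{[\tau_1,\tau_2]}|$ is not much below its mean, each failing with probability at most $\exp(-\Omega(\text{expected weight}))$. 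Since the expected lower weight is $\Omega(\sqrt{c}\,k(\ln n)^{3/2}) \ge \Omega(\sqrt{c}(\ln n)^{3/2})$, this is at most $n^{-\Omega(\sqrt{c}\,\sqrt{\ln n})}$, easily absorbed into the claimed bound for $c$ large.

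\medskip
\emph{Collision Weight Inequality~(\ref{conjineq2}).} This is the main obstacle. I must bound the weight of $\big(\mathbf{[c_j]} \land (\mathbf{[c_{j'}]}(i-1) \lor \mathbf{[c_{j'}]}(i) \lor \mathbf{[c_{j'}]}(i+1))\big)_{[\tau_1,\tau_2]}$. A given position $r$ in the lower segment contributes iff $\mathbf{[c_j]}$ has a $1$ at $r$ and $\mathbf{[c_{j'}]}$ has a $1$ at one of $r-i-1, r-i, r-i+1$ (mod $t$). The subtlety is that the cyclic shift by $i$ destroys the alignment of blocks: position $r$ in column $j$ lies in block $\lfloor r/\ln n\rfloor$, but the positions $r-i\pm 1$ in column $j'$ lie in blocks determined by $r-i$, so the $1$-probability there is $p$ at a \emph{different, possibly much smaller} block index. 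I would partition the lower segment $[\tau_1,\tau_2]$ into the $\ln n$-blocks of column $j$ and, within each such block, group the corresponding shifted positions of column $j'$ by which $j'$-block they fall into; because one $j$-block spans only $\ln n$ consecutive positions it meets at most two consecutive $j'$-blocks, so the relevant $j'$-probability is essentially constant across a $j$-block and equals $p$ evaluated at block index $\approx \lfloor (r-i)/\ln n\rfloor$. The expected collision weight is then $\sum_r p(\lfloor r/\ln n\rfloor) \cdot 3\,p(\lfloor (r-i)/\ln n\rfloor)$ over the lower segment; using that $p(b) \le p(0) = 1$ always and $p(b) \le 1/\sqrt{b+1}$, and that for the range in question at least one of the two block indices is $\Omega(ck^2)$ (whichever is the larger; the other is at most a full period $t$ away but $t$ itself corresponds to block index $(c/\alpha^{2+\epsilon})n^2 \ge (c/\alpha^{2+\epsilon})k^2$, so in fact \emph{both} shifted indices, being within $[\tau_1,\tau_2]$ up to the wraparound, land in blocks of index $\Omega(ck^2/\alpha^{2+\epsilon})$ for most of the range — the wraparound tail near the top needs separate care), the expected collision weight is $O\big((\ln n)^2/(\alpha^{1+\epsilon/2} k)\cdot \text{something}\big)$, which I need to be at most roughly $\frac{1}{2}\cdot\frac{\alpha\,\mathbb{E}|\mathbf{[c_j]}_{[\tau_1,\tau_2]}|}{k-1} = \Theta\big(\alpha\sqrt{c}(\ln n)^{3/2}/k\big)$. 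Comparing, the bound requires $c$ large and uses the gap between $\alpha^{1+\epsilon/2}$ in the denominator of one estimate and $\alpha$ in the target — this is where the $\alpha^{2+\epsilon}$ (rather than $\alpha^2$) in $\tau_2$ buys the needed slack. Finally a Chernoff bound on the collision weight (a sum of independent products of indicators — or, more carefully, I condition on column $j'$ and use independence of column $j$'s bits, so the collision weight is a sum of independent indicators with the right means) gives that it exceeds twice its mean with probability $\exp(-\Omega(\mathbb{E}[\text{collision weight}]))$. Here lies the quantitative heart: to land precisely at $\frac{6}{c^2}n^{-8\ln(4/\alpha)}$ I must track constants so that the deviation exponent is $\ge 8\ln(4/\alpha)\ln n$, i.e. the Chernoff slack times $\mathbb{E}[\text{collision weight}]$ (or times the target threshold, whichever dominates the tail) works out to $8\ln(4/\alpha)\ln n + 2\ln c + \ln 6$; the $\ln(4/\alpha)$ shape strongly suggests the bound comes from a Chernoff estimate of the form $\big(\frac{e}{1/\alpha\cdot\text{const}}\big)^{\mu}$ with $\mu = \Theta(\ln n)$, applied either to the collision weight being too large or to $|\mathbf{[c_j]}_{[\tau_1,\tau_2]}|$ being too small (each of which, when it fails, makes the floor on the right-hand side of~(\ref{conjineq2}) too small). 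I would present the two concentration bounds, combine via a union bound to get a constant-factor-times-$n^{-8\ln(4/\alpha)}$ failure probability, and verify the constant is at most $6/c^2$ for $c$ sufficiently large.

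\medskip
The genuinely delicate step — and the one I expect to require the most care — is the bookkeeping for the shifted, misaligned blocks of column $j'$ in the collision term, together with the wraparound effect of the cyclic shift near the top of the lower segment $[\tau_1,\tau_2]$; getting the right power of $\alpha$ there (exploiting the $\alpha^{2+\epsilon}$ slack in $\tau_2$) is what makes the final probability bound come out with the stated $n^{-8\ln(4/\alpha)}$ exponent. The Weight Inequality part and the Chernoff applications themselves are routine by comparison.
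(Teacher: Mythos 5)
Your high-level structure — bound each of the two inequalities of the Collision Bound Property by Chernoff-type concentration and take a union bound — matches the paper exactly. The Weight Inequality part is essentially right up to a persistent computational slip: you repeatedly write expectations like $\Theta(\sqrt{c}\,k(\ln n)^{3/2})$, but the correct value is $\Theta(\sqrt{c}\,k\ln n)$. Since $p(r)$ is constant within a block, $\sum_{r<\tau_1} p(r) = \ln n \sum_{b=0}^{(c/64)k^2} 1/\sqrt{b+1} \approx 2\ln n \cdot \sqrt{(c/64)k^2}$; the argument of the square root is a block count $\approx (c/64)k^2$, not the position $\tau_1$ itself — you seem to be summing $1/\sqrt{r+1}$ over positions $r$ rather than $1/\sqrt{b+1}$ over blocks $b$. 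This is recoverable but must be fixed. You also never invoke the hypothesis $e^{-k}<\alpha\le 1$ (equivalently $k>\ln(1/\alpha)$), which is what the paper uses to convert a Chernoff exponent of $\Omega(\sqrt{c}\,k\ln n)$ into the stated $n^{-8\ln(4/\alpha)}$ form.

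The genuine gap is in the Collision Weight Inequality. Your central claim — that ``both shifted indices, being within $[\tau_1,\tau_2]$ up to the wraparound, land in blocks of index $\Omega(ck^2/\alpha^{2+\epsilon})$ for most of the range'' — is false. The shifted positions $(r+i+d)\bmod t$ for $r\in[\tau_1,\tau_2]$ form a contiguous arc of length $\tau_2-\tau_1$ that can start \emph{anywhere} in $\{0,\ldots,t-1\}$; in particular, for shifts $i$ near $t-\tau_1$ the whole arc lies in the earliest blocks, where $p\approx 1$. In that regime, bounding $p$ by $1$ and summing gives roughly $\sum_r p(r)\cdot 1 \approx E[Y^{\perp}_j]=\Theta(\sqrt{c}\,k\ln n/\alpha^{1+\epsilon/2})$, which is a factor of $k$ too large compared to the right-hand side of (\ref{conjineq2}). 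What the paper does here — and what your sketch is missing — is apply the \emph{rearrangement inequality}: since $p$ is monotone decreasing, $\sum_r p(r)\,p((r+i+d)\bmod t)$ is maximized when the sorted shifted probabilities are aligned with the sorted $p(r)$, and the $j$-th largest shifted probability is at most $p(j)$, so the expectation is dominated by $\sum_r p(r)\,p(r-\tau_1)$. That specific sum telescopes (via the integral estimate of Lemma~\ref{sum}) to $O(\ln n\cdot\ln(64/\alpha^{2+\epsilon}))$, which is what makes the whole argument close. The paper also needs the three-way case split (all shifted probabilities large / all moderate-and-near-constant / a low prefix followed by a high suffix) to get a uniform bound over all cyclic shifts; your proposal flags the wraparound as ``needing separate care'' but does not supply the mechanism. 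Without the rearrangement step, the case-1 estimate does not go through, and the claimed failure probability $\frac{6}{c^2}n^{-8\ln(4/\alpha)}$ cannot be reached.
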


The detailed technical proof of Lemma \ref{lep} is deferred to Appendix \ref{proofs}. The proof of Lemma \ref{lep} is quite involved, as it requires bounding the probabilities associated with satisfying the Weight Inequality and the Collision Weight Inequality separately (subsections \ref{WIsub} and \ref{CWIsub} of Appendix \ref{proofs}, respectively).

The Weight Inequality requires us to control the expected weights of both the upper and lower segments of each column $\mathbf{[c_j]}$. We analyze these segments as expected values over the randomized matrix construction, allowing us to bound the probability that the Weight Inequality is satisfied (see Lemma \ref{lep1}).

For the Collision Weight Inequality, we define a random variable to represent the number of collisions between each pair of columns $\mathbf{[c_j]}$ and $\mathbf{[c_{j'}]}$ within the lower segment: $[\tau_1(n,k), \tau_2(n,k)]$. This random variable must account for any possible cyclic shift $0 \le i < t$ 
(and corresponding slipped vector) of the second column $\mathbf{[c_{j'}]}$ relative to the first column $\mathbf{[c_j]}$. Specifically, for any fixed shift $0 \le i < t$, this random variable is given by 
$\sum_{d=-1}^1 |\mathbf{[c_j]} \land \mathbf{[c_{j'}]}(i + d)|$. Bounding the probability that this 
variable meets the inequality’s criteria is addressed in Lemma \ref{CWIup} and is particularly challenging
for the following reasons.

The expected value of this variable is heavily influenced by the shift magnitude of the second column of the pair, which significantly affects, within the lower interval $[\tau_1(n,k), \tau_2(n,k)]$, the probabilities 
$p\left((r+i+d) \bmod t\right)$, representing the probability that the slipped vector of the
shifted column hosts a 1 in its $r$th position after a shift $i$, for $0 \le i < t$ and $d = -1, 0, 1$. 
To address this, Lemma \ref{CWIup} 
analyzes three separate cases based on the magnitude of probabilities $p\left((r+i+d) \bmod t\right)$ for all $r$ in the interval $\tau_1(n,k) \leq r \leq \tau_2(n,k)$.

The first case handles collisions when the shift of the second column in the pair is minimal, 
causing all probabilities of the shifted column to be high within the positions of the lower segment.
Here, we apply the \textit{rearrangement inequality} \cite{HLP1934} (see the appendix), which enables 
a precise estimation of the expected number of collisions for high-probability entries. 
This technique maximizes the sum of products of corresponding probabilities 
(i.e., the expected collision count) by aligning the largest probability values in both columns, 
yielding a tight upper bound on collisions under minimal shifts.

In the second case, we consider mid-range shifts where the probabilities in the shifted column remain 
within a constant factor of each other. 
To handle this, we develop a \textit{pairwise bounding technique} that leverages the near-uniformity 
of probabilities across positions. This approach minimizes the dependency on exact probabilities, 
allowing us to obtain a balanced estimate of collisions despite moderate variations. 

Finally, the third case considers the situations where the probabilities of the 
shifted column are very low from $\tau_1(n,k)$ up to some intermediate position $\tau'$, with
$\tau_1(n,k) < \tau' < \tau_2(n,k)$, and then become large from the next position $\tau' +1$ until 
$\tau_2(n,k)$. 
To address this, we employ a \textit{two-segment analysis} that treats the low-probability 
and high-probability segments separately. Specifically, we combine the techniques from Case 2 
for the low-probability segment and Case 1 for the high-probability segment.

The proof of Lemma \ref{lep} is ultimately completed by combining the probability of satisfying the Weight Inequality (Lemma \ref{lep1}) and that of satisfying the Collision Weight Inequality (Lemma \ref{CWIup}).



\begin{algorithm}[t!]
\small
\caption{\small Compute ultra-resilient superimposed codes ($k$ is unknown)}\label{alg:k_unknown}
\linespread{1.1}\selectfont 
\textbf{Input:} an integer $n \geq 2$, a real number $0 < \alpha \leq 1$, an arbitrarily small constant 
$\epsilon > 0$ and a 
constant $c >0$.\\
\textbf{Output:} a matrix $\mathbf{M}$ that is a $(n, 2\alpha)$-URSC with \elongation\ $\tau(n,k) = c (k^2/\alpha^{2+\epsilon})\ln n$, for any $1 < k \le n$.

\SetKwFunction{MyFunction}{Check\_$\mathcal{P}$}
\SetAlgoNlRelativeSize{0}
\SetAlgoNlRelativeSize{-1}

\vspace*{0.3ex}
    Let $\tau_1$ and $\tau_2$ be functions defined as: 
    $\tau_1(x,y) = (c/64)x^2 \ln y$, 
    $\tau_2(x,y) = c (x^2/\alpha^{2+\epsilon})\ln y$. \label{functions}

\vspace*{0.3ex}
\Repeat{\MyFunction{$\mathbf{M}$, $\alpha$, $\tau_1$, $\tau_2$} $= $ {\sc true}}{
    Generate a random matrix $\mathbf{M} = {\mathcal{M}}(n,\alpha, \epsilon, c)$
     (see Definition \ref{randomatrix})\; \label{random_gen}
}

\KwRet {the generated matrix $\mathbf{M}$}\;

\BlankLine

\SetKwProg{Fn}{Function}{:}{}
\Fn{\MyFunction{$\mathbf{M}$, $\alpha$, $\tau_1$, $\tau_2$}}{
    \For{$1 < k \leq n$}{                   \label{for_start}
        \For{each pair of column indices $c_j$ and $c_{j'}$ of $\mathbf{M}$}{
            \For{every shift $\mathbf{[c_{j'}]}(i)$}{
                \If {\label{ifcond}
                either the Weight Inequality (\ref{conjineq}) or the Collision Weight Inequality, 
                or both, are not satisfied }
                {
                     \KwRet {\sc false;} \tcp{ if the Collision Bound Property is not satisfied, 
                     the function terminates and returns {\sc false} }                        
                }
            }
        }
    }                                     \label{for_end}
    \KwRet {\sc true}\; 
}
\end{algorithm}

\subsection{The construction algorithm}\label{construction}

The construction of the ultra-resilient superimposed codes is accomplished by Algorithm \ref{alg:k_unknown}, 
a randomized algorithm that, in addition to the parameter $n$ and the real number $0 < \alpha \le 1$,
takes as input an arbitrarily small constant $\epsilon > 0$ and a constant $c > 0$. 
The next lemma proves that if the constant $c$ is  
sufficiently large as established in Lemma \ref{lep}, then Algorithm \ref{alg:k_unknown}
outputs an $(n, 2\alpha)$-URSC with \elongation\ 
$\tau(n,k) = c (k^2/\alpha^{2+\epsilon})\ln n$, for any $1 < k \le n$.

\begin{lemma}
\label{thm:unknown-k}
For 
$\alpha\in(0,1]$, Algorithm \ref{alg:k_unknown} generates with high probability in polynomial time 
an $(n, 2\alpha)$-URSC with \elongation\ 
$\tau(n,k) \le \frac{c\cdot k^2}{\alpha^{2+\epsilon}} \ln n$, for any $1 < k \le n$. 
The same 
result 
can 
be obtained~in~expectation.
\end{lemma}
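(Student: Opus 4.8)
The plan is to analyze Algorithm~\ref{alg:k_unknown} as a straightforward Las Vegas procedure: each iteration of the \textbf{Repeat} loop generates an independent random matrix $\mathbf{M} = \mathcal{M}(n,\alpha,\epsilon,c)$ and then calls the deterministic verifier \texttt{Check\_}$\mathcal{P}$, which returns \textsc{true} precisely when the Collision Bound Property \pairCondition\ holds for $\tau_1(n,k)=(c/64)k^2\ln n$ and $\tau_2(n,k)=(c/\alpha^{2+\epsilon})k^2\ln n$. By Lemma~\ref{1e3}, whenever the verifier accepts, the returned matrix is a genuine $(n,2\alpha)$-URSC with elongation $\tau_2(n,k)$, so correctness of the output is immediate; the only things to establish are (i) that a single random draw succeeds with high probability, which bounds the number of iterations, and (ii) that each iteration (both the generation and the verification) runs in polynomial time.

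First I would bound the success probability of one draw via a union bound over all the ``bad events'' counted in Lemma~\ref{lep}. For a fixed $k$, a fixed ordered pair of column indices $c_j,c_{j'}$, and a fixed cyclic shift $i\in\{0,\dots,t-1\}$, Lemma~\ref{lep} gives failure probability at most $\tfrac{6}{c^2}\,n^{-8\ln(4/\alpha)}$. The number of such triples is at most $n \cdot n^2 \cdot t = n^3 \cdot (c/\alpha^{2+\epsilon})n^2\ln n = O\!\big((c/\alpha^{2+\epsilon})\,n^5\ln n\big)$. Hence the probability that \pairCondition\ fails somewhere is at most
\[
O\!\Big(\tfrac{1}{c}\,\tfrac{n^5\ln n}{\alpha^{2+\epsilon}}\Big)\cdot n^{-8\ln(4/\alpha)}.
\]
Since $\alpha\in(0,1]$ we have $\ln(4/\alpha)\ge \ln 4 > 1$, so $n^{-8\ln(4/\alpha)} \le n^{-8}$, which dominates the $n^5\ln n$ factor; the remaining $\alpha^{-(2+\epsilon)}$ is swamped by $n^{-8\ln(4/\alpha)}$ as well (indeed $\alpha^{-(2+\epsilon)} \le e^{(2+\epsilon)k}$ under the standing assumption $e^{-k}<\alpha$, and one checks $n^{8\ln(4/\alpha)}$ grows faster since $k\le n$). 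Choosing $c$ a sufficiently large constant makes this single-draw failure probability at most, say, $1/2$ (indeed $o(1)$), so a single draw already succeeds with high probability, and the expected number of \textbf{Repeat} iterations is $O(1)$; a standard geometric-tail argument turns ``constant expectation'' into ``$O(\log n)$ iterations with high probability,'' giving both the ``with high probability'' and the ``in expectation'' forms of the statement.

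Next I would verify the per-iteration running time. Generating $\mathbf{M}$ requires sampling $t\cdot n = O\!\big((c/\alpha^{2+\epsilon})\,n^3\ln n\big)$ independent bits with the position-dependent probabilities $p(r)$ of Definition~\ref{randomatrix}, which is polynomial in $n$ (treating $\alpha,\epsilon,c$ as fixed constants, and noting $\alpha^{-(2+\epsilon)}$ is polynomially bounded under $e^{-k}<\alpha\le 1$ with $k\le n$). The verifier \texttt{Check\_}$\mathcal{P}$ runs three nested loops: over $k\in\{2,\dots,n\}$, over ordered pairs of columns ($O(n^2)$ of them), and over shifts ($t = O(n^2\ln n)$ of them); for each triple it computes the weights in inequalities~(\ref{conjineq}) and~(\ref{conjineq2}), each of which is a bitwise AND/OR of length-$t$ vectors followed by a popcount on a subvector, i.e.\ $O(t)$ work. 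The total is $O(n \cdot n^2 \cdot t \cdot t) = O(n^3\cdot t^2) = \mathrm{poly}(n)$. Hence each iteration is polynomial, and multiplying by the $O(\log n)$ high-probability (or $O(1)$ expected) iteration count keeps the whole algorithm polynomial.

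The main obstacle is really the probability bookkeeping in the union bound: one must make sure the polynomially many ``bad triples'' are genuinely killed by the $n^{-8\ln(4/\alpha)}$ factor for \emph{every} admissible $\alpha$, including $\alpha$ close to $1$ (where $8\ln(4/\alpha)$ is smallest, $\approx 11$) — this is why the exponent was engineered to be comfortably above $5$ — and for $\alpha$ small, where one leans on the standing assumption $e^{-k}<\alpha$ to keep $\alpha^{-(2+\epsilon)}$ from blowing up relative to $n^{8\ln(4/\alpha)}$. The rest (Lemma~\ref{1e3} for correctness, geometric-tail for iteration count, loop counting for runtime) is routine. One subtlety worth stating explicitly: although $\tau_1,\tau_2$ are defined in terms of the unknown $k$, the verifier does not need to ``know'' $k$ — it simply iterates over all candidate values $1<k\le n$ and checks the (now $k$-parametrized but explicitly computable) inequalities, so no circularity arises.
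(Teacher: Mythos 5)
Your proof is correct and follows essentially the same route as the paper's: correctness of any accepted matrix is delegated to Lemma~\ref{1e3}, the single-draw failure probability is bounded by a union bound over the triples $(k,\text{pair},\text{shift})$ using the per-triple bound from Lemma~\ref{lep}, and the iteration count is controlled via a geometric distribution. One minor bookkeeping difference: you union-bound over the $O(n^3 t)$ bad events directly, while the paper union-bounds over the $O(n^3 t^2)$ \emph{operations} (i.e., multiplies by the per-check cost $3t$ as well); your count is tighter, though both give a polynomially small failure probability. The paper also pins down the single-draw failure probability concretely as $<1/n^2$ (for $n\ge 9$), so it never needs the ``$O(\log n)$ iterations w.h.p.'' fallback you mention as a backup; but your hedging is harmless since you also observe the failure probability is $o(1)$.
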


\begin{proof}
The algorithm sets functions $\tau_1(x,y) = (c/64 )x^2 \ln y$
and $\tau_2(x,y) = c (x^2/\alpha^{2+\epsilon})\ln y$.
Then it uses a \texttt{repeat-until} loop, which at each iteration generates a random matrix 
$\mathbf{M} = {\mathcal{M}}(n,\alpha, \epsilon, c)$, according to the random construction of  
Definition \ref{randomatrix}, and invokes a function 
\texttt{Check}\_$\mathcal{P}(\mathbf{M}, \alpha, \tau_1, \tau_2)$.
It continues iterating until this function returns {\sc true}.

The role of \texttt{Check}\_$\mathcal{P}(\mathbf{M}, \alpha, \tau_1, \tau_2)$ is simply to verify 
the satisfaction of the Collision Bound Property $\mathcal{P}(\mathbf{M}, \alpha, \tau_1, \tau_2)$. 
This verification is ensured by the three nested \texttt{for} loops in lines \ref{for_start} - \ref{for_end}. 
Namely, the function returns {\sc true} if and only if, 
for every $1 < k \leq n$, for every pair of column 
indices $c_j$ and $c_{j'}$, and every cyclic shift $\mathbf{[c_{j'}]}(i)$ with $0 \leq i \leq t-1$, 
both the Weight Inequality (\ref{conjineq}) and the Collision Weight Inequality (\ref{conjineq2})
hold.
This indeed corresponds to the Collision Bound Property 
$\mathcal{P}(\mathbf{M}, \alpha, \tau_1, \tau_2)$ being satisfied. 
Consequently, from Lemma \ref{1e3} it follows that when the execution exits the \texttt{repeat-until} loop,
the last generated matrix $\mathbf{M}$ is a valid 
$(n,\alpha)$-URSC with \elongation\ 
$\tau(n,k) \le (c/\alpha^{2+\epsilon}) k^2 \ln n$, 
for any $1\le k \le n$.

To complete the proof, we must demonstrate that the total number of operations involved is polynomial 
with high probability and in expectation. Since the number of operations required to construct the 
random matrix is evidently polynomial — specifically, 
$n \cdot t = n \cdot \left(\frac{c}{\alpha^{2+\epsilon}} n^2 \ln n \right)$ random choices, 
one for each bit of the matrix — it suffices to show the following. 
First, we prove that each iteration requires polynomial time. Then, we establish that the number 
of iterations is constant both with high probability and in expectation.

\medskip
\textbf{Number of operations per iteration.}
In each iteration of the 
\texttt{repeat-until} loop, 
the function 
\texttt{Check}\_$\mathcal{P}(\mathbf{M}, \alpha, \tau_1, \tau_2)$ is invoked to verify whether the matrix satisfies 
property $\mathcal{P}(\mathbf{M}, \alpha, \tau_1, \tau_2)$. 
In this function, the three nested loops cause the \texttt{if} condition on 
line \ref{ifcond} to be checked at most 
$n\cdot 2\binom{n}{2} \cdot t$ 
times. 
Checking the \texttt{if} condition requires verifying inequalities (\ref{conjineq}) and (\ref{conjineq2}), 
which takes fewer than $3t$ operations. 

Specifically, the first inequality involves counting the number of 1's in a vector $\mathbf{[c_j]}$ 
over no more than $t$ positions and then comparing two values. 
The second inequality requires counting the number of 1's in a 
vector $\left( \mathbf{[c_j]} \land (\mathbf{[c_{j'}]}(i-1) \lor \mathbf{[c_{j'}]}(i) \lor \mathbf{[c_{j'}]}(i+1) ) \right)$ and in a vector $\mathbf{[c_j]}$, both over at most $t$ positions, 
followed by comparing two values.

Overall, any iteration of the \texttt{repeat-until} loop takes time no more than 
$n\cdot 2\binom{n}{2} \cdot t \cdot 3 t < 
3 n^3 t^2$. 

\medskip
\textbf{Number of iterations.}
The remaining task is to count the total number of iterations of the algorithm until termination.
Let's consider the probability that the function 
\texttt{Check}\_$\mathcal{P}(\mathbf{M}, \alpha, \tau_1, \tau_2)$, invoked on a random matrix $\mathbf{M} = \mathcal{M}(n,c)$ 
with $\tau_1$ and $\tau_2$ set as specified in line \ref{functions}$,$ returns {\sc false} during a specific iteration of the innermost loop. 
This corresponds to the probability that, for any fixed value of $k$ in the first loop, 
any pair of columns fixed in the second loop, and any shift $\mathbf{[c_{j'}]}(i)$ fixed in the third loop, 
the inequalities (\ref{conjineq}) and (\ref{conjineq2}) do not hold.              
Assuming that the input constant $c$ used to build the matrix $\mathbf{M} = {\mathcal{M}}(n,c)$ 
is sufficiently large, 
according to Lemma \ref{lep}, the probability that in any of the iterations of the three nested loops the inequalities do not hold, is at most 
$\frac{6}{c^2}\cdot  n^{-8\ln\left(\frac{4}{\alpha}\right)}$. 
Therefore, by applying the union bound over all three nested loops, we can calculate 
the probability that the function returns {\sc false} in any of its iterations as follows:
\[
    3 n^3 t^2 
        \left(\frac{6}{c^2}\cdot  n^{-8\ln\left(\frac{4}{\alpha}\right)}\right)
    \ = \ 
    \frac{3 c^2 n^7 \ln^2 n}{\alpha^{4+2\epsilon}} 
         \left(\frac{6}{c^2}\cdot  n^{-8\ln\left(\frac{4}{\alpha}\right)}\right) 
    \ = \ 
    \frac{18 \ln^2 n}{\alpha^6} \cdot n^{7 - 8\ln(4/\alpha)}  
    \ < \ 
    \frac{1}{n^2}
    \ ,
\]
where the last inequality holds for $n \ge 9$.
Therefore, for sufficiently large $n$, a single iteration of the \texttt{repeat-until} loop 
is sufficient to obtain a valid $(n,\alpha)$-URSC 
with high probability. Given that we have demonstrated that each iteration requires polynomial time, 
it follows that the algorithm concludes within polynomial time with high probability.

Analogously, to demonstrate that the result also holds in expectation, it suffices to show that the expected number of iterations of the \texttt{repeat-until} loop is constant.
The probability that the algorithm terminates at the $i$-th iteration is the probability that, in the first $i-1$ iterations, the function 
\texttt{Check}\_$\mathcal{P}(\mathbf{M}, \alpha, \tau_1, \tau_2)$ returns {\sc false}, 
and only at the $i$-th iteration does it return {\sc true}. 
Thus, the number of iterations follows a geometric distribution with a success probability 
of $p = (1 - \frac{1}{n^2})$. 
The expected value for such a geometric distribution is known to be 
$1/p = \frac{n^2}{n^2 - 1}$, which is $O(1)$.
\end{proof}

Now we can observe that if we want to construct codes guaranteeing flip resilience for $\alpha\in (0,1]$, 
it is enough to run Algorithm~\ref{alg:k_unknown} for $\alpha'=\alpha/2$. Since 
$\alpha'\in (0,1/2]$, Lemma~\ref{thm:unknown-k} implies the following.


\begin{theorem}
\label{thm:code-final}
For 
$\alpha \in (0, 1]$, Algorithm \ref{alg:k_unknown} can be used to generate with high probability 
in polynomial time an $(n, \alpha)$-URSC having \elongation\
$\tau(n, k) = O\left(\frac{k^2}{\alpha^{2 + \epsilon}} \ln n\right)$,
for any $1 < k \leq n$. The same result can also be obtained in expectation.
\end{theorem}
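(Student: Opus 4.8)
The plan is to obtain Theorem~\ref{thm:code-final} as an immediate corollary of Lemma~\ref{thm:unknown-k} via a rescaling of the flip-resilience parameter, exactly as flagged in the sentence preceding the theorem. Lemma~\ref{thm:unknown-k} already guarantees that, for every input $\alpha' \in (0,1]$, a single run of Algorithm~\ref{alg:k_unknown} outputs — with high probability, and also in expectation, in polynomial time — an $(n, 2\alpha')$-URSC of elongation at most $\tfrac{c\,k^2}{\alpha'^{\,2+\epsilon}}\ln n$ for all $1 < k \le n$. (The factor-$2$ slack in the isolation constant is precisely the one introduced by Lemma~\ref{1e3}, which turns the Collision Bound Property into an $(n,2\alpha)$-URSC rather than an $(n,\alpha)$-URSC.) So the whole task is to cancel that slack by feeding the algorithm a halved parameter.

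\textbf{Step 1 (parameter substitution).} I would invoke Algorithm~\ref{alg:k_unknown} with $\alpha' = \alpha/2$ in place of $\alpha$, keeping $n$, $\epsilon$, and the large constant $c$ of Lemma~\ref{lep}/Lemma~\ref{thm:unknown-k} unchanged. Since $\alpha \in (0,1]$, we have $\alpha' = \alpha/2 \in (0,\tfrac12] \subseteq (0,1]$, so the hypotheses of Lemma~\ref{thm:unknown-k} are satisfied and its conclusion applies with $\alpha'$. In particular the output is an $(n, 2\alpha')$-URSC $=(n,\alpha)$-URSC, and the high-probability, in-expectation, and polynomial-time guarantees carry over verbatim (the codeword length $t = \tfrac{c}{\alpha'^{\,2+\epsilon}} n^2 \ln n$ is still polynomial in $n$ for the fixed constant $\alpha'$, and the range/polynomiality considerations are unaffected up to constants).

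\textbf{Step 2 (elongation bound).} It remains to absorb the rescaling into the $O(\cdot)$. Lemma~\ref{thm:unknown-k} gives elongation at most
\[
\tau(n,k) \;\le\; \frac{c\,k^2}{\alpha'^{\,2+\epsilon}}\ln n \;=\; \frac{c\,2^{\,2+\epsilon}\,k^2}{\alpha^{\,2+\epsilon}}\ln n \;=\; O\!\left(\frac{k^2}{\alpha^{\,2+\epsilon}}\ln n\right),
\]
where $2^{\,2+\epsilon}$ is a fixed constant once $\epsilon$ is fixed (indeed $\le 8$ for $\epsilon \le 1$), hence swallowed by the hidden constant; note $\epsilon$ is untouched by the substitution, so it remains an arbitrary positive constant as required.

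There is essentially no real obstacle here — the only points worth stating carefully are (i) that halving $\alpha$ keeps it inside the admissible interval $(0,1]$ so Lemma~\ref{thm:unknown-k} legitimately applies, and (ii) that this halving inflates the elongation by only a multiplicative constant, so the asymptotic bound is preserved. The success-probability bound (a single iteration suffices w.h.p.) and the $O(1)$ expected number of \texttt{repeat-until} iterations transfer directly from the proof of Lemma~\ref{thm:unknown-k} with $\alpha$ replaced by $\alpha/2$.
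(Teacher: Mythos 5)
Your proof matches the paper's argument exactly: the paper also obtains Theorem~\ref{thm:code-final} by running Algorithm~\ref{alg:k_unknown} with $\alpha' = \alpha/2$, invoking Lemma~\ref{thm:unknown-k} to get an $(n,2\alpha')=(n,\alpha)$-URSC, and absorbing the resulting constant factor $2^{2+\epsilon}$ into the $O(\cdot)$. Both the rescaling step and the observation that the high-probability/expectation/polynomial-time guarantees transfer verbatim are precisely what the paper does.
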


\subsection{Ultra-resilient superimposed codes parameterized by length}

We could parameterize the ultra-resilient superimposed codes also by their length $t$, e.g., we could consider codes with length shorter than in Theorem~\ref{thm:code-final}.
More precisely, in Definition~\ref{def:shift_k_unknown-fuzzy}, we could consider any length $t$ of the code and vary the upper bound on the values of $k$ for which 
the elongation guarantee holds, which originally was $n$, to some parameter $\Delta$, where $\Delta$ is the maximum integer such that: 
$\Delta\le n$ and 
$\tau(n,\Delta)\le t$. This lowering of the upper bound on parameter $k$ is natural, because shorter codes could not guarantee a successful position of any element in every configuration of the codewords, due to existing lower bounds on the length of the code even if $k$ is known and there are no shifts, see e.g.,~\cite{AV1982}.

In general, if we cut the ultra-resilient superimposed code at some smaller length $t'<t$, the resulting code may not satisfy the elongation property in Definition~\ref{def:shift_k_unknown-fuzzy} for many values of parameter $k$, because the shifts in the original and the cut codes results in different configurations. However, our construction in Section~\ref{sec:unknown-k}, still works for shorter codes, with only few minor updates in the construction algorithm and in the analysis.
Mainly, we need to replace the formula for the length $t=(c/\alpha^{2+\epsilon}) n^2 \ln n$ by $t=(c/\alpha^{2+\epsilon}) \Delta^2 \ln n$, and consider only parameters $2\le k\le \Delta$. Note that throughout the analysis, all the probability bounds depend on the number of codewords $n$, the $\log n$ factor coming from the elongation function, and $\alpha$; all of these stay the same in the $t$-parameterized codes, hence the probabilities stay analogous (only constants may change, which we could accomodate here by taking a larger constant $c$).

This way we can prove the following extension of 
Theorem~\ref{thm:code-final} to codes with arbitrary length $t$.

\begin{theorem}
\label{thm:unknown-k-shorter-codes}
Algorithm \ref{alg:k_unknown} can be modified to generate in polynomial time, with high probability, 
an $(n,\alpha)$-URSC of a given length $t$, with \elongation\ 
$\tau(n,k) = O\left(\frac{k^2}{\alpha^{2+\epsilon}} \ln n\right)$, for any $1\le k \le \Delta$, where~$\Delta$~is~the maximum integer satisfying: $\Delta\le n$ and $\tau(n,\Delta)\le t$. 
The same result can also be obtained in expectation.
\end{theorem}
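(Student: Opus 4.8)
The plan is to follow the same argument as in Lemma~\ref{thm:unknown-k} (and hence Theorem~\ref{thm:code-final}), carefully tracking the few places where the parameter $n$ was used in its dual role: once as the number of columns and once as the implicit cap on the values of $k$ for which the elongation guarantee is claimed. First I would redefine the random matrix of Definition~\ref{randomatrix} with $t = (c/\alpha^{2+\epsilon})\Delta^2 \ln n$ rows instead of $(c/\alpha^{2+\epsilon})n^2\ln n$, keeping the same per-bit probability function $p(r) = \sqrt{1/(\lfloor r/\ln n\rfloor + 1)}$; this corresponds now to partitioning each column into $(c/\alpha^{2+\epsilon})\Delta^2$ blocks of length $\ln n$. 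I would then restrict the outer loop of Algorithm~\ref{alg:k_unknown} to $2 \le k \le \Delta$, and observe that by the choice of $\Delta$ (the maximum integer with $\Delta \le n$ and $\tau_2(n,\Delta) \le t$), the lower segment $[\tau_1(n,k),\tau_2(n,k)]$ is well-defined and contained in $[0,t)$ for every such $k$. The correctness direction is then immediate from Lemma~\ref{1e3}: if \texttt{Check}\_$\mathcal{P}$ returns {\sc true} on the restricted loop, the Collision Bound Property holds for all $k \le \Delta$, so the output matrix is an $(n,2\alpha)$-URSC with elongation $\tau_2(n,k)$ for all $1 < k \le \Delta$, and then running with $\alpha' = \alpha/2$ yields the $(n,\alpha)$-URSC as in Theorem~\ref{thm:code-final}.

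The key point to verify is that Lemma~\ref{lep} still applies verbatim in the shorter-code setting. Here I would note that the probability bound $\frac{6}{c^2}\cdot n^{-8\ln(4/\alpha)}$ in Lemma~\ref{lep} is stated for a fixed $k$, a fixed pair of columns, and a fixed shift, and its proof (in Appendix~\ref{proofs}) only uses: the number of columns $n$ (through concentration over the $\ln n$-length blocks, giving the $\log n$ factor in the elongation), the value of $\alpha$, and the relation between $\tau_1$ and $\tau_2$ — all of which are unchanged when $t$ shrinks to $(c/\alpha^{2+\epsilon})\Delta^2 \ln n$, since the formulas for $\tau_1(n,k)$ and $\tau_2(n,k)$ are untouched and we only consider $k \le \Delta$, for which both lie inside $[0,t)$. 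Hence the same per-event failure probability holds; at worst one absorbs a changed constant into a larger $c$, as the paper already notes.

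For the union bound and the running-time estimate I would repeat the computation in the proof of Lemma~\ref{thm:unknown-k}, but with the new $t$ and with the outer loop ranging over at most $\Delta \le n$ values: the number of evaluations of the \texttt{if} condition is at most $\Delta\cdot 2\binom{n}{2}\cdot t \le n^3 t$, each costing $O(t)$, so an iteration of the \texttt{repeat-until} loop costs $O(n^3 t^2)$, which is polynomial in $n$ since $t = O((1/\alpha^{2+\epsilon})n^2\ln n)$ (using $\Delta \le n$). Multiplying $n^3 t^2$ by the per-event failure bound gives, exactly as before, a total failure probability $O(\ln^2 n \cdot \alpha^{-6} \cdot n^{7 - 8\ln(4/\alpha)}) < 1/n^2$ for $n$ large enough (recall $e^{-k} < \alpha$, so $4/\alpha$ is bounded below by a constant exceeding $e$, making the exponent of $n$ strongly negative). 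Thus a single iteration succeeds with high probability, and the number of iterations is geometric with success probability $1 - 1/n^2$, hence $O(1)$ in expectation — giving both the high-probability and the expectation statements.

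The main obstacle — really the only subtle point — is confirming that nothing in the proof of Lemma~\ref{lep} secretly depends on $t$ being exactly $\Theta(n^2\ln n)$ rather than $\Theta(\Delta^2\ln n)$: concretely, that the three-case analysis of the Collision Weight Inequality (minimal, mid-range, and low-then-high shift magnitudes) and the rearrangement-inequality bound on expected collisions are all driven by the probability profile $p(r)$ over the window $[\tau_1(n,k),\tau_2(n,k)]$ and by $n$ through the block length $\ln n$, none of which change. Once this is checked, everything else is a routine re-run of the earlier proof with $n^2$ replaced by $\Delta^2$ in the length and the outer loop capped at $\Delta$; I would state this explicitly rather than reproduce the appendix.
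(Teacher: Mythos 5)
Your proposal is correct and follows the same route as the paper's own (very terse) argument: replace $t=(c/\alpha^{2+\epsilon})n^2\ln n$ by $t=(c/\alpha^{2+\epsilon})\Delta^2\ln n$, restrict the outer loop to $k\le\Delta$, observe that the per-event failure bound of Lemma~\ref{lep} only involves the number of columns $n$, the block length $\ln n$, and $\alpha$ (all unchanged), and re-run the union-bound and geometric-distribution argument. Your proposal actually supplies more verification detail (the containment of $[\tau_1(n,k),\tau_2(n,k)]$ in $[0,t)$, the running-time recount, the explicit failure-probability calculation) than the paper, which simply asserts "only constants may change, accommodated by taking a larger $c$"; the one place where your claim that nothing in Lemma~\ref{lep} depends on $t$ is mildly imprecise is that the Case~2 range of $\lambda$ (up to $\sqrt{c}\,n/(2k)$) and the Case~3 low-probability threshold were calibrated to the smallest probability $\alpha^{1+\epsilon/2}/(\sqrt{c}\,n)$ in the full-length matrix, but with $\Delta\le n$ the smallest probability only increases, so these cases still cover every shift and the bounds only slacken in your favor.
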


\section{Code Applications
in Uncoordinated Beeping Networks}
\label{sec:beeping}

\remove{

The beeping model, introduced by Cornejo and Kuhn in their seminal work \cite{CornejoK10}, 
provides a simplified yet powerful framework for analyzing distributed algorithms under 
constrained communication environments. This minimalist communication model assumes that nodes 
interact over discrete time slots, where their only form of communication is through binary signaling: 
they can either emit a beep or remain silent. The feedback received by each node is limited 
to detecting whether there was a beep during that time slot, without any knowledge of which 
or how many nodes were responsible for the signal.
 
The model abstracts away many complexities of real wireless communication (e.g., signal strength, packet loss, etc.) 
and focuses on the fundamental coordination challenges arising from minimal information exchange.
A significant challenge in the beeping model arises from the fact that nodes do not know who beeped, 
and thus cannot directly exchange messages. They can only detect the presence or absence of a signal 
in their vicinity. Despite these constraints, the model captures essential features of low-power communication, 
which are relevant in the study of wireless sensor networks and IoT systems. 

}

\subsection{Model and problem}
\label{sec:model-beeping}
Let $G$ be an underlying beeping network with $n$ nodes, modeled as an undirected simple graph. 
Each node $v$ in the graph has a unique identifier from the set $\{1, \ldots, n\}$. Without loss of generality, 
we will refer to both the node and its identifier as $v$ . Each node is initially aware only of its own 
identifier, the total number of nodes $n$ , and an upper bound $\Delta$ on its degree (the number of neighbors).

In the \textit{uncoordinated setting}, nodes are activated in arbitrary rounds, as determined by a conceptual adversary.\footnote{Alternatively, the uncoordinated setting can be viewed as a temporal graph, where each node 
becomes ``visible" after its adversarial wake-up time, and each edge becomes ``visible'' 
during the earliest time interval when both of its endpoints~are~awake.}%
The goal for each node $v$ is to maintain a set of identifiers $N^*_v$ that satisfies the following~properties:

\vspace*{-1ex}
\begin{description}
    \item[\textbf{Inclusion:}] $N^*_v$ contains all the identifiers of the neighbors of $v$ in $G$.
\vspace*{-1ex}
    \item[\textbf{Safety:}] $N^*_v$  does not contain the identifier of any node that is not a neighbor 
    of $v$ in $G$.
\end{description}

\vspace*{-1ex}
This problem is referred to as \textit{neighborhood learning}.


In a more general problem, each node has to 
maintain a set of input messages stored in its neighbors. 
This problem is often called {\em local broadcast}.

{\em Time complexity} of a given problem in uncoordinated beeping networks is typically measured as a worst-case (over graph topologies and adversarial wake-up schedules) number of rounds from the moment when all nodes become awaken until the task is achieved, see~\cite{DBB2020}. In the case of neighborhood learning and local broadcast, the task is achieved if all locally stored sets of neighbors (resp., messages in neighbors) contain all neighbors (resp., all neighbors' messages). Note that, due to the Safety condition, when one of these two tasks is achieved, the locally stored sets of neighbors (resp., messages) do not change in the future rounds.
On the other hand, to guarantee the Inclusion property under arbitrarily long delays in adversarial wake-up schedule, algorithmic solutions have to be prepared for an arbitrarily long run -- therefore, periodic algorithms seem to be most practical and as such have been considered in the literature, see~\cite{DBB2020}.

It needs to be noted that the time complexity bound, denote it by $\mathcal{T}$, of our algorithms, proved in the analysis, satisfies even a stronger property: for any edge in the underlying network $G$, its both end nodes put each other (resp., each other's message) to the locally maintained set within time $\mathcal{T}$ after {\em both of them} become awaken. It means that we do not have to wait with time measurement until all nodes in the network are awaken, in case we are interested in specific point-to-point information exchange.


\subsection{Neighborhood learning in uncoordinated beeping networks}

Let $G$ be an underlying beeping network of $n$ nodes and node degree less than $\Delta$. 
Each node knows only its id and the integers $n$ and $\Delta$.~\footnote{The algorithm and its analysis work also if each node knows some polynomial upper bound on $n$ and a linear upper bound~on~$\Delta$.}
All nodes have the same $(n, \frac{3}{4})$-URSC $\M$ from Theorem~\ref{thm:unknown-k-shorter-codes}, of length set to $t=(c/\alpha^{2+\epsilon}) \Delta^2 \ln n$ and elongation $\tau(n,k) \le (c/\alpha^{2+\epsilon}) k^2 \ln n$, for some constant $c>0$, parameter $\alpha = \frac{1}{2} \cdot \frac{3}{4} = \frac{3}{8}$ and for any $k\le \Delta$ (the latter is because $\tau(n,\Delta)=t$ and $\Delta\le n$); in practice, it is enough that each node $v$ knows only the corresponding column $v$ of the code.\footnote{%
Constant $\frac{3}{4}$ is set arbitrarily -- it could be any constant in the interval $(\frac{1}{2},1)$ to allow 
the application of Theorem~\ref{thm:unknown-k}.}
Additionally, each node computes in the beginning its unique block ID of length 
$7+2\log n$, defined next.

\paragraph{Block ID.}
For a given node $v\in \{1,\ldots,n\}$, we define the {\em block ID of $v$}, denoted
$\bb_v$, as follows. It is a binary sequence of length $7+2\log n$, it starts with three 1s followed by three 0's and another 1.
To define the remaining $2\log n$ positions of the block ID, we take the binary representation of number $v$, of logarithmic length, and simultaneously replace each 0 by bits 01 and each 1 by bits 10. 

\paragraph{Main idea and intuitions.}

\gia{Here, we provide an overview of the key ideas and intuitions behind our approach, and we refer the reader 
to Appendix~D for the full proofs.
}
After awakening, a node periodically repeats a certain procedure -- see the pseudo-code Algorithm~\ref{alg:beeping}.
This periodicity is to assure that each node can properly pass its id, via sequence of beeps and idle rounds, to a neighbor who may be awaken at arbitrary time after the considered~node.
The main idea of the repeated part of Algorithm~\ref{alg:beeping}, see lines~\ref{alg:repeated-starts} - \ref{alg:repeated-ends}, which is in fact a form of another code, is as follows. A node $v$ uses its corresponding codeword $\bc_v$ from the ultra-resilient superimposed code $\M$ and substitutes each 1 in the codeword $\bc_v$ 
by its block ID $\bb_v$
of length $7+2\log n$,
and each 0 in $\bc_v$ by the block of zeros of the same length.
Then, each node beeps at rounds corresponding (according to its local clock) to positions with 1's in the obtained sequence, and stays idle 
otherwise; see line~\ref{line:beeps} and preceding iterations in lines~\ref{line:main-for} and~\ref{line:second-for} corresponding to iterating over the length of the codewords $\bc_v,\bb_v$. 
The feedback, a beep or no beep, from the neighbors and the node itself is recorded in lines~\ref{line:records-1} and~\ref{line:records-0}.
Finally, a check is done (line~\ref{line:check}) whether the recorded feedback in the last $7+(2\log n)$ positions corresponds to any valid block ID, and if so, adding it to set $N^*_v$ (unless it is already there), see line~\ref{alg:repeated-ends}.


Intuitively, the codewords from $\M$ are to assure that some beeped block ID of any neighbor node will not be overlapped by any block ID beeped by its competing neighbors at the same time -- therefore, it could guarantee the 
Inclusion property (see Lemma~\ref{lem:beeping-progress}).
More precisely, the shift property of the code guarantees that each neighbor has a unique 1 in its $\bc_v$, corresponding to some block ID of $v$, (i.e., while other neighbors have only 0's in its overlapping blocks), regardless of the adversarial shift of the code. However, such single 1 does not guarantee passing id to the neighbor -- this is why we need to substitute each 1 in the code by the block ID that allows decoding of the actual id if there are no beeping from other nodes at that time. This is challenging, because not only the original codewords from $\M$ could be adversarially shifted, but also the blocks themselves. Hence, to assure no beeping of other neighbors during the time when a block ID is beeped, we use 
{the isolation property}
-- at some point, all neighbors not only have one overlapping block of 0's, but also the preceding and 
the next blocks are the block of 0's. 
(This is because 
{the isolation property}
guarantees that the preceding and the next position in the original codewords $\bc_w$ of other at most $\Delta-2$ neighbors have to be all 0's.)  

The reason why we cannot use just a node id as its block ID is the following.
The last $2\log n$ bits are to assure that any two aligned block IDs differ on at least two positions (which helps to assure that if a block ID is heard, it is not a ``beeping superposition'' of other block IDs but indeed a single node beeping its own block ID). The first $7$ bits are to guarantee that no genuine block ID could be heard while there is no beeping block ID aligned. 
This assures Safety property, see Lemma~\ref{lem:beeping-reliability} for details, and helps to fulfill Inclusion property too.


\begin{algorithm}[ht!]
\small
\caption{\small Neighborhood learning algorithm in an uncoordinated beeping network; pseudo-code for node $v$
after its (uncoordinated) wake-up}\label{alg:beeping}
\linespread{1.1}\selectfont 
\textbf{Input:} Integers $n \ge \Delta \ge 1$, identifier $v\in \{1,\ldots,n\}$, a codeword $\bc_v$ 
from $(n, \frac{3}{4})$-URSC $\M$ from Theorem~\ref{thm:unknown-k-shorter-codes}, 
of length set to $t=(c/\alpha^{2+\epsilon}) \Delta^2 \ln n$ \\
\textbf{Maintained:} 
Set $N^*_v$ of identifiers

\SetKwFunction{MyFunction}{Check\_$\mathcal{P}$}
\SetAlgoNlRelativeSize{0}
\SetAlgoNlRelativeSize{-1}

\BlankLine
   {$\bb_v \gets$ block ID of $v$} 
\;
   {$N^*_v \gets \emptyset$} \;
\While{{\tt True}}{
   {$\sigma_v \gets$ sequence of $t\cdot (7+2\log n)$ zeros\label{alg:repeated-starts}} 
\;
\For{$i=1$ to $t$\label{line:main-for}}{
    \For{$j=1$ to $7+2\log n$\label{line:second-for}}{
        \If {
        $\bc_v[i] \land \bb_v[j]$
        }
        {
             $v$ beeps\label{line:beeps}
             \tcp{beeping block ID of $v$ when $\bc_v[i]=1$}
        }
        \If{$v$ has beeped or heard a beep}{$\sigma_v[(i-1)\cdot (7+2\log n)+j] \gets 1$\label{line:records-1}} 
        \Else{
        $\sigma_v[(i-1)\cdot (7+2\log n)+j] \gets 0$\label{line:records-0}}
\If{sub-sequence $\sigma_v[(i-2)\cdot (7+2\log n)+j+1 \mod t\cdot (7+2\log n),$ $ \ldots, (i-1)\cdot (7+2\log n)+j \mod t\cdot (7+2\log n)]$ is equal to block ID of some $w\in\{1,\ldots,n\}$\label{line:check}}{add $w$ to set $N^*_v$ (unless it is already there) \label{alg:repeated-ends}}
    }
}
}
%

\BlankLine

\end{algorithm}

\begin{theorem}
\label{thm:beeping}
Algorithm~\ref{alg:beeping} can be instantiated with some $(n, \frac{3}{4})$-URSC $\M$ of length $t=O(\Delta^2\log n)$ and it guarantees learning  neighborhoods deterministically by each node in $O(\Delta^2\log^2 n)$ rounds after awakening of the node and the neighbors.
\end{theorem}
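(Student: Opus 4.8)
The plan is to verify the two properties --- Inclusion and Safety --- and bound the time by $O(\Delta^2 \log^2 n)$ rounds, exploiting the ultra-resilience of $\M$ (via Theorem~\ref{thm:unknown-k-shorter-codes}) applied to the neighborhood of each edge. Fix an edge $\{u,v\}$ in $G$ and consider the round at which both endpoints are awake. From the perspective of $v$, the set of ``competing'' neighbors has size at most $\Delta$, so the relevant column subset $T$ (the columns $\bc_w$ for $w$ ranging over $v$'s neighbors, plus $\bc_v$ itself) has $|T|\le \Delta$, and the code's elongation guarantee $\tau(n,|T|)\le t$ holds since $\tau(n,\Delta)=t$ and $\Delta\le n$. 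Each node's local beeping pattern is the substitution of $\bc_w$ with blocks of length $L := 7+2\log n$ (the block ID $\bb_w$ for a $1$-entry, zeros for a $0$-entry), but because the adversarial wake-up times are arbitrary, the patterns of the competing neighbors appear to $v$ as \emph{arbitrary cyclic shifts} of these substituted codewords. I would phrase the substitution as a formal ``product of codes'' and note that a cyclic shift of the substituted codeword by $s$ positions decomposes as a shift of the underlying $\bc_w$ by $\lfloor s/L \rfloor$ blocks plus an intra-block offset of $s \bmod L$; this is exactly the two-level misalignment mentioned in the main-idea paragraph.

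For \textbf{Inclusion}: I would invoke the isolation property $\mathcal{I'}$ of the $(n,\tfrac34)$-URSC (Definition~\ref{def:shift_k_unknown-fuzzy}, together with the observation that $\alpha=\tfrac38<\tfrac12$ gives genuine isolation with the factor-$2$ loss absorbed): within the first $\tau(n,|T|)$ rows of $\bc_v$, under the actual shifts there is a row $i$ where $\bc_v$ has a $1$ while every competing neighbor's (shifted) codeword has $0$ in rows $i-1$, $i$, $i+1$. At that row $v$ beeps exactly its block $\bb_v$, and during the whole span of that block plus the guard regions on either side, no competing neighbor beeps at all --- because a neighbor could only beep in $v$'s block window if its underlying shifted codeword had a $1$ in block $i-1$, $i$, or $i+1$ (the $\pm1$ coming precisely from the possibility that a neighbor's block boundaries are offset from $v$'s). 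Hence $u$ (if $u$ is the neighbor we fixed, or more simply any neighbor) hears $v$'s beeping pattern cleanly over a full block-ID window, and the check in line~\ref{line:check} recovers $v$. Symmetrically $v$ learns $u$. The time to reach such a favorable row is one full period of the outer loop, i.e.\ $t\cdot L = O(\Delta^2\log n)\cdot O(\log n)=O(\Delta^2\log^2 n)$ rounds after both are awake, since the code is repeated periodically; this gives the stated bound, and in fact the stronger per-edge guarantee stated in Section~\ref{sec:model-beeping}.

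For \textbf{Safety}: I would argue that if $\sigma_v$ over some window of length $L$ equals the block ID of some $w$, then $w$ must genuinely be a neighbor who beeped that pattern. The block ID structure does the work here. The recorded window is the bitwise OR (``beeping superposition'') of block IDs and zero-blocks of all currently-beeping neighbors, shifted arbitrarily. The leading $1110001$ pattern of a block ID, combined with the fact that the last $2\log n$ bits are a $\{01,10\}$-substituted identifier (so any two distinct aligned block IDs differ in $\ge 2$ positions, and any block ID has a fixed weight pattern), must be shown to be non-reproducible as a Boolean sum of shifted block IDs unless exactly one summand is present and perfectly aligned in that window --- the $7$ prefix bits rule out a window straddling a block boundary or sitting over a zero-block plus fragments, and the $2$-position separation rules out two aligned block IDs OR-ing to a third. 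This is the combinatorial heart of the Safety argument; I would isolate it as the lemma corresponding to Lemma~\ref{lem:beeping-reliability}.

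The main obstacle I anticipate is the Safety combinatorial lemma: one has to handle \emph{all} ways that arbitrarily shifted block IDs (and zero-blocks) from up to $\Delta$ neighbors can OR together and accidentally display a valid block ID in some $L$-bit window, including windows misaligned with every contributor's block grid. The two design features --- the $1110001$ prefix and the weight-preserving $\{0\mapsto 01,\,1\mapsto 10\}$ encoding of the identifier --- are exactly what kills these cases, but verifying exhaustively that they suffice (in particular that the prefix cannot be synthesized from tails and heads of two overlapping block IDs) requires a careful case analysis on the offset of the window relative to block boundaries. Everything else --- the periodicity giving termination, the reduction from $k$-wise to the local $\Delta$-neighborhood, and the translation of $\mathcal{I'}$ into ``clean block window'' --- is routine once the two-level shift bookkeeping is set up.
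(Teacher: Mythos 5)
Your proposal matches the paper's proof essentially step-for-step: Inclusion comes from the isolation property of the URSC applied to the at-most-$\Delta$ competing columns (the paper's Lemma~\ref{lem:beeping-progress}), Safety comes from the $1110001$ prefix together with the $\{0\mapsto 01,\,1\mapsto 10\}$ encoding of the identifier (the paper's Lemma~\ref{lem:beeping-reliability}, whose Claim handles exactly the misaligned-window cases you flagged as the hard part), and the time bound is the code length times the block length. The paper's Lemma~\ref{lem:beeping-progress} in fact proves $2t\cdot(7+2\log n)$, with the extra period covering the wait for a neighbor to begin a fresh periodic run after both endpoints are awake, but this is the same $O(\Delta^2\log^2 n)$.
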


\subsection{From learning neighbors to local broadcast}

Now suppose that each node $v$ has a message of length $\cM$. It splits it into a sequence of $\log n$ messages of size $\cM/\log n$ each, say $M_v[1],\ldots,M_v[\cM/\log n]$.
Then, it creates extended messages $M^*_v[i]$, for any $1\le i\le \cM/\log n$, as follows: it puts a binary representation of $v$ by $\log n$ bits first, then it puts a single bit equal to $1$ for $i=1$ and to $0$ otherwise (this bit indicates whether it is the first extended message or not), and then appends $M_v[i]$. The length of $M^*_v[i]$ is $2\log n+1 \le O(\log n)$. 

There are two changes in Algorithm~\ref{alg:beeping}.
First, we now treat $M^*_v[i]$ as a set of new ids on node $v$.
Hence, we need a $(2n^2,3/4)$-URSC $\cM$ from Theorem~\ref{thm:unknown-k-shorter-codes}, but of asymptotically same length $t=O(\Delta^2\log n)$ as for neighborhood learning (as obviously $\log(2n^2)=\Theta(\log n)$ and we want to use the same bound $\Delta$).
Second, node $v$ in its $i$th periodic procedure (recall that such a procedure is in lines~\ref{alg:repeated-starts} -~\ref{alg:repeated-ends}) uses $M^*_v[i\mod \cM/\log n]$ as its id.
Let's denote this modified algorithm as the \textit{ultra-resilient Beeping Algorithm}.

The analysis is analogous, in particular, all nodes receive all messages $M^*_w[i]$ from their neighbors $w$, they are able to decode that they are from $w$ by looking at the first $\log n$ of the decoded $M^*_w[i]$, to identify the starting message by looking at bit $\log n+1$ (and then identifying the last message from $w$ by looking for the last bit 0 at that position before getting 1 at that position), and getting the actual content by looking at the last $\log n$ bits of $M^*_v[i]$ (and concatenating contents starting from the first identified message from $w$ up to the last piece). The only differences are that now both the code $\cM$ and block IDs are a constant factor longer, and node $v$ has to wait $\cM/\log n$ periodic procedures to be able to decode all parts of the~original~message.
Hence, the analog of Theorem~\ref{thm:beeping} can be proved for local broadcast:

\begin{theorem}
\label{thm:local-broadcast-beeping}
The \textit{Ultra-resilient Beeping Algorithm} 
guarantees deterministic local broadcasting of an input message of length $\cM$ by each node to each of its neighbors in $O(\Delta^2\log n \cdot (\cM+\log n))$ rounds after awakening of the node and the neighbor.
\end{theorem}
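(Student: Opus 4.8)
The plan is to reduce Theorem~\ref{thm:local-broadcast-beeping} to Theorem~\ref{thm:beeping} by carefully tracking the three modifications described before the statement: (i) replacing node identifiers with the extended messages $M^*_v[i]$ as the ``ids'' being broadcast, (ii) using the larger universe of size $2n^2$ for the URSC $\cM$, and (iii) running the periodic procedure $\cM/\log n$ times so that in the $i$th period node $v$ plays the role of a node with id $M^*_v[i \bmod \cM/\log n]$. First I would restate precisely what the Ultra-resilient Beeping Algorithm does in one period: it is exactly Algorithm~\ref{alg:beeping} but with the $(n,3/4)$-URSC replaced by a $(2n^2,3/4)$-URSC and with the block IDs of length $7+2\log(2n^2)=\Theta(\log n)$, so that the code $\M$ still has length $t=O(\Delta^2\log n)$ (since $\log(2n^2)=\Theta(\log n)$ and the same degree bound $\Delta$ is used) and the inner blocks are only a constant factor longer than before. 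Hence a single period now takes $O(t\cdot\log n)=O(\Delta^2\log^2 n)$ rounds, and one full run of all $\cM/\log n$ periods takes $O((\cM/\log n)\cdot\Delta^2\log^2 n)=O(\Delta^2\log n\cdot\cM)$ rounds.

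Next I would invoke the correctness guarantee underlying Theorem~\ref{thm:beeping} (formally the progress and reliability lemmas, Lemma~\ref{lem:beeping-progress} and Lemma~\ref{lem:beeping-reliability}): for any edge $\{u,v\}$ of $G$, once both $u$ and $v$ are awake, within the time to complete one period each of them decodes the block ID that the other node is currently beeping and hence adds the corresponding ``id'' to its set. The key point is that the isolation and shift-resilience properties of the URSC do not care what the beeped ``id'' means --- they only guarantee that at some aligned block within each period, a fixed competing node $w$ transmits its block ID with all its at most $\Delta-2$ neighboring competitors silent in that block, the preceding block, and the following block. Applying this with the role of $w$ played by $v$ and of the decoder played by $u$, and noting that in the $i$th period every node $w$ is beeping the block ID of $M^*_w[i\bmod \cM/\log n]$, we conclude that $u$ decodes $M^*_v[i\bmod\cM/\log n]$ during (at the latest) the end of the $i$th period following the moment both endpoints are awake. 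Running the analysis over $i=1,\dots,\cM/\log n$ consecutive periods, $u$ eventually collects all $\cM/\log n$ extended messages of $v$.

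Then I would argue the decoding of the full message from the collected extended messages. Each $M^*_v[i]$ begins with the $\log n$-bit binary representation of $v$, which lets $u$ attribute the piece to $v$; the next bit flags whether $i=1$, so $u$ identifies the first piece and, reading subsequent pieces of $v$ as they arrive, recognizes the last piece as the one immediately preceding a flip back to the flag value $1$ (equivalently, the last piece with flag $0$ before wrap-around), ordering them consistently because the periods are executed in order; finally the last $\cM/\log n\cdot$-bit-sized trailing fields $M_v[i]$ are concatenated in that order to reconstruct the original message of length $\cM$. Safety is inherited verbatim from Theorem~\ref{thm:beeping}: the first $7$ bits and the last $2\log(2n^2)$ bits of each block ID ensure that a decoded block ID is never a superposition of distinct block IDs of competing neighbors and is never spuriously heard when no block ID is aligned, so $u$ never attributes to $v$ a piece that $v$ did not send.

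I expect the main obstacle --- or at least the step requiring the most care --- to be checking that enlarging the code universe to $2n^2$ while keeping length $t=O(\Delta^2\log n)$ is consistent with Theorem~\ref{thm:unknown-k-shorter-codes}: one must verify that $\Delta$ can still be taken as the maximum integer with $\tau(n',\Delta)\le t$ for the new universe size $n'=2n^2$, which works because $\tau$ depends on $\log n'=\Theta(\log n)$ and on $\Delta^2$ only, so the same $\Delta=\Theta(\sqrt{t/\log n})$ range is preserved up to constants (absorbed by increasing $c$). The remaining subtlety is bookkeeping the "after awakening of the node and the neighbor" timing claim across $\cM/\log n$ periods: since the per-edge guarantee of Theorem~\ref{thm:beeping} holds starting from the first period boundary after both endpoints are awake, and each of the $\cM/\log n$ subsequent periods delivers one new extended message, the total is $O(\Delta^2\log n\cdot(\cM+\log n))$ rounds after both endpoints wake up, where the additive $\log n$ inside the parenthesis accounts for the at-most-one partially-completed period at the start. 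All other parts of the argument are a routine transcription of the proof of Theorem~\ref{thm:beeping}.
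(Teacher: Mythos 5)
Your proposal is correct and takes essentially the same route as the paper, which proves Theorem~\ref{thm:local-broadcast-beeping} only by a brief reduction to Theorem~\ref{thm:beeping}: replace node identifiers by the extended messages $M^*_v[i]$ as the objects being encoded, replace the $(n,3/4)$-URSC by a $(2n^2,3/4)$-URSC of the same length $t = O(\Delta^2\log n)$ (which works because $\log(2n^2)=\Theta(\log n)$ and the degree bound $\Delta$ is unchanged), note that block IDs are only a constant factor longer, and account for the $\cM/\log n$ periodic repetitions needed to deliver all pieces, giving $O(\Delta^2\log n \cdot (\cM+\log n))$ after both endpoints wake. Your walkthrough of the decoding (attribution via the $\log n$-bit prefix, ordering via the flag bit that marks $i=1$, concatenating trailing fields) matches the paper's description exactly, and your handling of the extra additive $\log n$ (one partial period before the first full period of the transmitter) is the right bookkeeping.

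One shared caveat worth mentioning: both you and the paper apply Lemma~\ref{lem:beeping-progress} as if, during a single period of the transmitting node $w_1$, every competing neighbor $v$ contributes exactly one (possibly shifted) URSC codeword. In the local-broadcast setting, however, a competitor may cross its own period boundary inside $w_1$'s period and thus contribute the tail of one codeword followed by the head of a \emph{different} codeword; this concatenation is neither a cyclic shift nor a zero-padded shift of a single column, so the set $T$ of codewords plugged into Definition~\ref{def:shift_k_unknown-fuzzy} can be up to twice as large as the degree bound suggests. The fix is cosmetic (instantiate the code for $2\Delta$ rather than $\Delta$, which changes only constants), but since the paper's own sketch elides this point, your proposal is not missing anything the paper provides. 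Finally, a tiny slip: you describe the trailing field $M_v[i]$ as ``$\cM/\log n$-bit-sized'' --- since there are $\cM/\log n$ pieces, each trailing field has length $\log n$ (so that $M^*_v[i]$ has length $2\log n + 1$ as the paper states); the paper's own sentence ``a sequence of $\log n$ messages of size $\cM/\log n$ each'' has the indices and sizes transposed, and you appear to have inherited the same transposition.
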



\remove{

\paragraph{Communication in radio and beeping networks}

In non-adaptive communication in radio~\cite{ClementiMS01} and beeping~\cite{CornejoK10} networks, often classic superimposed codes are used. Typically, since the number of contending neighbors may not be known, classic superimposed codes for values $k=2,4,8,\ldots,n$ are used, intertwined and aligned. Using our ultra-resilient  superimposed codes (for unknown $k$) instead, allows to reduce the communication time by a logarithmic factor.

\subsection{Dynamic broadcast on MAC}

Parameter $\alpha$ allows pipelining several messages within one code execution.

\subsection{Beeping communication model}

\text{https://dl.acm.org/doi/pdf/10.1145/3382734.3405699}
Extended version here:
\text{https://hal.science/hal-02860827/document}
Section 3 is very relevant to our results

\subsection{Conflict-Avoiding codes for Optical Networks}

\text{https://epubs.siam.org/doi/10.1137/06067852X}

\subsection{Topology-transparent scheduling}

\text{https://link.springer.com/article/10.1007/s11276-006-6528-z}
}






\section{Open Directions}

There are several promising directions for future research. First, expanding the applications of 
URSCs to additional domains, such as genomic alignment and dynamic database search, could offer 
substantial advantages due to their inherent fault-tolerant and asynchronous properties. 
Exploring applications in other distributed and parallel computing contexts, as well as investigating 
further properties, like code weight and fairness in mechanism design, are intriguing directions for 
advancing URSC capabilities.

Closing the small gap between the code length of URSCs and the theoretical lower bound is a challenging 
yet valuable endeavor. This, along with deeper exploration of ultra-resilient properties in new settings, 
holds potential for further enhancing the impact of URSCs.

In summary, our contributions position URSCs as a robust, scalable solution to foundational challenges 
in distributed computing. We anticipate that URSCs will stimulate future research into resilient 
coding for asynchronous and dynamic environments, pushing the boundaries of fault-tolerant coding theory.
More conclusions and future directions are deferred to Appendix~\ref{sec:final}.

\section*{Acknowledgments}
We thank Ugo Vaccaro for many inspiring and fruitful discussions.

\bibliographystyle{plain}
\bibliography{bibliography}

\appendix

\ \\

\begin{center}
{\bf\Large Appendix}
\end{center}

\section{Further Conclusions and Future Directions}
\label{sec:final}

In this work, we introduced Ultra-Resilient Superimposed Codes (URSCs), which extend 
classic superimposed codes with a stronger codewords' isolation property 
and unprecedented resilience to cyclic shifts and bitwise corruption, without requiring prior 
knowledge of the number of concurrent codewords, $k$. Our polynomial-time construction is the first efficient 
implementation of superimposed codes that can handle arbitrary shifts without assuming $k$, 
achieving near-optimal length that matches the best-known codes lacking these ultra-resilient properties.

URSCs provide significant improvements for several distributed 
problems where synchronization, essential for classic codes, is prohibitively expensive. 
For instance, in uncoordinated beeping networks, URSCs achieve nearly two orders of magnitude improvement 
in local broadcast efficiency. They also support deterministic contention resolution in multi-access channels, underscoring their potential to improve robustness and adaptability in real-world distributed systems.

\B
\paragraph{Improvements in other distributed problems.}
Beyond the immediate applications to beeping networks and contention resolution, URSCs offer broader 
potential across diverse fields where synchronization constraints are challenging. For example, 
URSCs may improve scenarios where packets arrive dynamically at stations on a channel or in 
multi-hop wireless networks. In these cases, current efficient scheduling methods~\cite{CholviGJK22} 
rely on extended superimposed codes; our codes could replace them, potentially reducing their length to a near-optimal 
$O(k^2 \log n)$.

Additionally, our codes can benefit dynamic scheduling problems, such as topology-transparent 
and asynchronous schedules, 
where they efficiently manage activation times of participants in distributed systems. Prior studies, 
such as those by Chu, Colbourn, and Syrotiuk~\cite{CCSb2006, CCSa2006}, have highlighted the need 
for adaptable codes in such asynchronous contexts, and URSCs provide an efficient, near-optimal solution 
in these environments.
 
The broader applicability of URSCs extends to distributed database search, pattern recognition, 
and genomic data analysis. In distributed databases, traditional superimposed codes require aligned 
file descriptors, which can be impractical in asynchronous or adversarial retrieval systems. Here, 
URSCs overcome these limitations by tolerating out-of-order data arrivals and data corruption, thanks to 
their configurable parameter $\alpha$, making them ideal for robust data handling. 
This approach is particularly relevant in the context of data-dependent superimposed codes by
Indyk~\cite{I1997}, which have applications in pattern matching but lack tolerance to misalignment and faults.

URSCs also hold promise for genomic sequence alignment in large, distributed datasets, where their resilience to misalignments and adversarial conditions can enhance fault tolerance in biological data processing.

\section{Contention Resolution on a Multiple-Access Channel}
\label{sec:applications}

In this section, we 
introduce a generalized version of the Contention Resolution (CR) problem on a multiple-access channel and demonstrate how to apply our codes to get an efficient solution.

\subsection{Model and problem}
\label{sec:model-contention-resolution}

We consider a group of $n$ stations connected to a shared transmission medium called a \textit{shared channel}. 
Each station $v$ is uniquely identified by an ID in the range $\{1, \ldots, n\}$. 
For the sake of presentation and to simplify the notation, we treat $v$ and its ID interchangeably, 
meaning that $v$ not only represents the station but also serves as its ID, \textit{i.e.}, $v \in \{1, \ldots, n\}$.

At most $k$ stations out of the $n$, where $k$ is {\em unknown}, may become active, 
potentially in different time rounds.
The communication occurs in synchronous rounds, meaning that the clocks of all stations tick at the same rate. 
However, the most general assumption is without a \textit{global clock} and \textit{no system-based synchronization}. 
Each station measures time using its own local clock, which starts at the time round when it is 
activated, and is independent of other stations' clocks.
An instance of the Contention Resolution problem is represented as a 
pair $(T, \delta)$, where $T = \{v_1, v_2, \ldots, v_k\}$ is a set of $k$ stations, and $\delta: T \rightarrow \mathbb{N}$ 
is a function that maps each station $v_i$ to its activation time $R_i = \delta(v_i)$.
Without loss of generality 
$0 = \delta(v_1) \le \delta(v_2) \le \cdots \le \delta(v_k)$.

We consider {\em non-adaptive} deterministic algorithms for contention resolution (CR), 
where each station's actions are predefined at the start of execution, based solely on the parameter $n$ and the station's ID.
More precisely,
for each station $v$,  a \textit{transmission vector} $\cS_v$ consists of a sequence of bits that correspond to the time slots of $v$'s local clock. 
If the $r$-th bit of $\cS_v$ is 1, the station transmits in the $r$-th slot of its local time; otherwise, it remains silent.
A protocol $\cA$ for the Contention Resolution problem consists of a collection of $n$ transmission vectors, one for each station.

A protocol $\cA$ solves the Contention Resolution problem with \textit{latency} $\tau(\cA)$ if, for every possible instance $(T, \delta)$, 
it enables every station $v \in T$ to transmit successfully within $\tau(\cA)$ rounds from its activation time.

The classic version of the problem, as described above, aims at one successful transmission of each activated station. The generalized CR, studied in this work, assumes that each activated station $i$ has a number $s_i$ of packets to be successfully transmitted on the channel, i.e., it needs $s_i$ successful transmissions. Let $s$ be the upper bound on the values of $s_i$.

\begin{figure}[t!]
	\centering
\begin{tikzpicture}
	\matrix [nodes=draw,column sep=1mm]
	{
		\node[draw=none] {$v_1$}; & \node[draw=none]{$v_2$}; 
		& \node[draw=none] {$v_3$}; & \node[draw=none] {$v_4$}; 
		& \node[draw=none] {$v_5$}; & \node[draw=none] {$v_6$}; 
		&  \node[draw=none] {$\cdots$}; & \node[draw=none] {$v_n$}; \\
		\node {1}; & \node{0}; & \node {1}; & \node {1}; & \node {1}; & \node {1}; & ; & \node {0};\\
		\node {1}; & \node{0}; & \node {1}; & \node {0}; & \node {0}; & \node {0}; 
		&  \node[draw=none] {$\cdots$}; & \node {1};\\
		\node {0}; & \node{0}; & \node {1}; & \node {1}; & \node {0}; & \node {1}; & ;& \node {1};\\
		\node {0}; & \node{1}; & \node {0}; & \node {1}; & \node {1}; & \node {1}; & ;& \node {1};\\
        \node[draw=none] {}; & \node[draw=none]{}; 
       & \node[draw=none] {}; & \node[draw=none] {}; 
       & \node[draw=none, minimum height=20pt] {\smash{\raisebox{-40\depth}{$\vdots$}}}; & \node[draw=none] {}; 
       &  \node[draw=none] {}; & \node[draw=none] {}; \\
\node {1}; & \node{0}; & \node {1}; & \node {1}; & \node {0}; & \node {0}; & ;& \node {0};\\
\node {1}; & \node{1}; & \node {0}; & \node {1}; & \node {1}; & \node {1}; & \node[draw=none] {$\cdots$};& \node {1};\\       			
	};
\end{tikzpicture}
\caption{\textit{The one-to-one correspondence between stations and columns of $\M$:
the transmission vector of station $v_i$ will be the column at index $v_i$. }}
\label{fig:M}	
\end{figure}

\subsection{Description of the protocol}
All stations are equipped with the same $t \times n$ binary matrix $\M$, which is a 
{$(n, \alpha)$-URSC, as specified in Definition~\ref{def:shift_k_unknown-fuzzy}, for an arbitrary constant $\alpha\in (0,1)$, say $\alpha=1/2$.
We also need the code to satisfy Definition~\ref{CBproperty}; hence, we could take our constructed code as in Theorem~\ref{thm:code-final}.}
We can establish a one-to-one correspondence between the set of stations and the columns of 
matrix $\M$ (see Figure \ref{fig:M}). 
Specifically, each station $v$ is associated with the 
column at index $v$ in $\M$ (recall that $v \in \{1,\ldots,n\}$). 

We define our protocol $\cA$ as follows.
The transmission vector of any station $v$ is represented by the column 
at index $v$, which we denote as $\M(v)$. 
In other words, for each station $v$, we set {
\[
\cS_v = \M(v)\cdot \left\lceil\frac{s}{(1-\alpha)\cdot weight_n}\right\rceil
\ , 
\]
where $weight_n$ is the minimum of $|M(v)_{\tau_1(n,n),\tau_2(n,n)}|$ over stations $v$ and $\M(v)\cdot x$ denotes concatenation of codeword $M(v)$ $x$ times. Recall that $|M(v)_{\tau_1(n,n),\tau_2(n,n)}|$ denotes the number of 1's in the interval $[\tau_1(n,n),\tau_2(n,n)]$ of codeword $M(v)$, where $\tau_1(n,n),\tau_2(n,n)$ are as in Definition~\ref{CBproperty}.}

\begin{example}
Suppose the stations are provided with the matrix depicted in Figure \ref{fig:M}. In this scenario, stations 
$v_1$, $v_3$ and $v_6$ have the following transmission vectors, respectively:
\[(1, 1, 0, 0, \ldots,  1, 1) \ , \ \ 
(1, 1, 1, 0,  \ldots, 1, 0) \ , \ \ 
(1, 0, 1, 1, \ldots, 0, 1) \ .\]
Consider an instance where 
$\delta(v_1) = 0$,
$\delta(v_3) = 2$ and
$\delta(v_6) = 3$. 
The $r$-th local round of station $v_3$ is synchronized with
round $r + \delta(v_3) - \delta(v_1)  = r + 2$ of $v_1$'s local clock
and with round $r + \delta(v_3) - \delta(v_6) = r - 1$ of 
$v_6$'s local clock (see Figure \ref{fig:Trans}).
\end{example}

In general, we can state the following fact.

\begin{fact}\label{f:synch}
For any two stations $v$ and $v'$, the $i$-th round of station $v$ is synchronized with round $i' = i + \delta(v) - \delta(v')$ of station $v'$. 
In particular, if $i'$ is negative, it indicates that $v'$ has not been activated yet.
\end{fact}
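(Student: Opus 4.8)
The plan is simply to unwind the definitions of local and global time. For the analysis I would introduce a conceptual global clock (not available to the stations, used only for reasoning) and note that, by the definition of the activation map $\delta$, station $w$ wakes up at global round $\delta(w)$ and from then on its local clock advances in lockstep with the global one. Consequently the $j$-th round of $w$'s local clock coincides with global round $\delta(w)+j-1$, under the convention that a station's local clock is indexed from $1$ at its activation round; any other consistent convention merely shifts every expression below by the same constant and does not affect the statement. I would declare two local rounds (of possibly different stations) \emph{synchronized} precisely when they fall on the same global round.

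Given this setup, the identity reduces to one substitution. Fix stations $v$ and $v'$ and look at the $i$-th round of $v$'s local clock: it occurs at global round $g := \delta(v)+i-1$. The round of $v'$'s local clock falling on that same global round is the index $i'$ satisfying $\delta(v')+i'-1 = g$, i.e.
\[ i' \;=\; g - \delta(v') + 1 \;=\; i + \delta(v) - \delta(v'), \]
which is exactly the claimed formula (and one can sanity-check it against the worked example: $v=v_3,\,v'=v_6$ gives $i'=i-1$, and $v=v_3,\,v'=v_1$ gives $i'=i+2$). For the second assertion, $v'$'s local round $i'$ is only meaningful once $v'$ has been activated, that is for $i'\ge 1$, equivalently $g\ge\delta(v')$; hence if the value $i'=i+\delta(v)-\delta(v')$ comes out non-positive (``negative'' in the loose phrasing of Fact~\ref{f:synch}), then $g<\delta(v')$, meaning that station $v'$ is not yet awake at the global round in question.

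I do not anticipate a genuine obstacle, since the Fact is essentially definitional. The only care needed is in fixing the indexing convention for local clocks and the precise meaning of ``synchronized'' so that the sign and the off-by-one in $i' = i+\delta(v)-\delta(v')$ come out correctly, and in stating cleanly what it means for a local-round index to fall out of range (the degenerate case where $v'$ has not yet woken up). Once those conventions are pinned down, nothing further is required.
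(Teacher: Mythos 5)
Your proposal is correct and is exactly the definitional unwinding the paper implicitly relies on; the paper states the Fact without an explicit proof (it follows immediately from the model's description of local clocks starting at activation time, illustrated by the worked example just above it), and your derivation via a conceptual global clock with $g = \delta(v) + i - 1$ is the standard way to make that precise. The sanity checks against the example ($v_3$ vs.\ $v_1$ and $v_3$ vs.\ $v_6$) are accurate, and your handling of the out-of-range case matches the paper's intent.
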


\begin{figure}[t!]
	\centering
\begin{tikzpicture}
	\matrix [nodes=draw, row sep  = 2mm, column sep = 2mm]
	{
		\node[draw=none] {Communication rounds:}; & \node[draw=none]{1}; & \node[draw=none]{2}; & \node[draw=none]{3}; & \node[draw=none]{4}; 
		& \node[draw=none]{$\cdots$}; \\		
		\node[draw=none] {$v_1$}; & \node{1}; & \node{1}; & \node{0}; & \node{0}; 
& \node[draw=none]{$\cdots$};\\
		\node[draw=none] {$v_3$}; & \node{}; & \node{};  & \node{1}; & \node{1}; 
& \node[draw=none]{$\cdots$};\\
		\node[draw=none] {$v_6$}; 
& \node{}; & \node{}; & \node{};  & \node{1}; 
& \node[draw=none]{$\cdots$};\\
	};
\end{tikzpicture}
\caption{\textit{The `out of sync' of transmission vectors for stations $v_1$, $v_3$, and $v_5$ following an instance where $\delta(v_1) = 0$, $\delta(v_3) = 2$, and $\delta(v_6) = 3$.
The empty squares indicate times when the station was not yet activated.}}
\label{fig:Trans}
\end{figure}

\subsection{Correctness and complexity}

Let us fix an arbitrary instance $(T, \delta)$ of the contention resolution problem.
{Let 
$k^*=k+c'\cdot \left\lceil\frac{s}{\log n}\right\rceil$
for some suitable constant $c'$ to be specified later.}
We aim to show that every station $v \in T$ will successfully transmit {at least $s$ times} within $t$ rounds after activation, 
where $t$ is {
$\tau(\cA)=\Theta\Big(\big(k+\frac{s}{\log n}\big)^2\log n\Big)$. More precisely, if $k^*\le n$ then $\tau(\cA)$ is the elongation $\tau(k^*)$ of code $M$ for parameter $k^*$, and otherwise $\tau(\cA)$ is the length of $M$ multiplied by $\left\lceil\frac{s}{(1-\alpha)\cdot weight_n}\right\rceil
$ (this is by observing that $\frac{k^*}{n}\ge \left\lceil\frac{s}{(1-\alpha)\cdot weight_n}\right\rceil$ for sufficiently large constant $c'$ in the definition of $k^*$). Let us also generalize definition of $weight_n$ to $weight_{\ell}$, for any $\ell\le n$, to be equal to the minimum number of 1's in any codeword of $M$ in the interval $[\tau_1(n,\ell),\tau_2(n,\ell)]$ of positions. We will prove later in Lemma~\ref{lepa1} that in our constructed code, $weight_{\ell}>\frac{3}{5}\sqrt{c}k\ln n = \Omega(k\log n)$.\footnote{{Formally, Lemma~\ref{lepa1} proves the lower bound with high probability, but this probability is enough to carry on through the derandomization argument and also easy to check by the algorithm in linear time per codeword.}}}
We start with the following~lemma.

\begin{lemma}\label{l:collision}
	Let $(T, \delta)$ be an arbitrary instance of contention resolution problem, and consider a station $v \in T$ that is transmitting during the $i$-th round of its local clock.
	Suppose there exists another station $v' \in T\setminus \{v\}$ transmitting simultaneously.
	Let $\mathbf{c} = \mathbf{M}(v)$ and $\mathbf{y} = \mathbf{M}(v')$. There exists a vector $\mathbf{z} \in S(\mathbf{y})$ such that both $\mathbf{c}$ and $\mathbf{z}$ have a 1 in their $i$-th position.
\end{lemma}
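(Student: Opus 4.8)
The plan is to unwind the definitions of the transmission vectors and the synchronization relation to locate a coincidence of 1's. First I would recall that $\cS_v = \M(v)\cdot\lceil s/((1-\alpha)\,weight_n)\rceil$ is just the codeword $\mathbf{c}=\M(v)$ concatenated with itself a fixed number of times; likewise $\cS_{v'}=\mathbf{y}\cdot(\text{same factor})$. Since $v$ transmits in round $i$ of its local clock, the $i$-th bit of $\cS_v$ is $1$; writing $i = q t + r$ with $0\le r < t$, this means $\mathbf{c}[r]=1$ (using $0$-based indexing on the length-$t$ codeword, or the appropriate $1$-based variant). The key point is that a transmission at local round $i$ of $v$ corresponds to position $r = i \bmod t$ of the underlying codeword $\mathbf{c}$.

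Next I would invoke Fact~\ref{f:synch}: if $v'\in T\setminus\{v\}$ transmits in the same global round, then that round is round $i' = i + \delta(v) - \delta(v')$ of $v'$'s local clock, and $i'\ge 1$ because $v'$ must already be active to transmit. Hence the $i'$-th bit of $\cS_{v'}$ is $1$, which by the concatenation structure means $\mathbf{y}[r']=1$ where $r' = i' \bmod t = (i + \delta(v) - \delta(v')) \bmod t$. Now set $j = (r - r')\bmod t = (\delta(v') - \delta(v))\bmod t$ and take $\mathbf{z} = \mathbf{y}(j)$, the $j$-th cyclic shift of $\mathbf{y}$; this lies in $S(\mathbf{y})$ by definition. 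By the definition of cyclic shift, $\mathbf{z}[r] = \mathbf{y}[(r - j)\bmod t] = \mathbf{y}[r'] = 1$, while $\mathbf{c}[r] = 1$ as established. Thus both $\mathbf{c}$ and $\mathbf{z}$ carry a $1$ in position $r$, i.e. in their $i$-th position once we account for the periodic extension (the statement's ``$i$-th position'' should be read modulo $t$, consistently with how the protocol reuses the codeword).

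The main obstacle here is mostly bookkeeping rather than a genuine mathematical difficulty: one must be careful that the indices $i$, $i'$ used in the statement refer to positions in the (possibly long) transmission vectors $\cS_v, \cS_{v'}$ while $S(\mathbf{y})$ and the codeword $\mathbf{c}$ live on the length-$t$ alphabet, so the reduction modulo $t$ and the $0$- vs $1$-based indexing conventions have to be handled consistently. A secondary subtlety is confirming that $i' \ge 1$ (so that ``$v'$ transmits'' is even meaningful) follows from the hypothesis that $v'$ is simultaneously transmitting, which is exactly the last sentence of Fact~\ref{f:synch}. Once these conventions are pinned down, the shift $j = (\delta(v')-\delta(v))\bmod t$ does all the work, and the lemma follows immediately.
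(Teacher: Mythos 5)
Your proof is correct and follows essentially the same route as the paper's: use Fact~\ref{f:synch} to convert simultaneous transmission into the relation $y_{i+\delta(v)-\delta(v')}=1$, then take the appropriate cyclic shift of $\mathbf{y}$ so that it aligns a $1$ with $c_i=1$. You are slightly more explicit than the paper about reducing indices modulo~$t$ to handle the concatenated transmission vector $\cS_v$, which is a useful clarification; the only cosmetic discrepancy is that your shift convention ($\mathbf{z}[r]=\mathbf{y}[(r-j)\bmod t]$) is the reverse of the paper's definition $\mathbf{y}(j)[r]=y_{(r+j)\bmod t}$, so with the paper's convention you would take $j=(\delta(v)-\delta(v'))\bmod t$ rather than its negation --- but since $S(\mathbf{y})$ contains all cyclic shifts this does not affect the conclusion.
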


\begin{proof}
Let's explicitly represent the bit components of the 2 transmission vectors as $\mathbf{c} = (c_1, c_2, \ldots, c_t)$ and $\mathbf{y} = (y_1, y_2, \ldots, y_t)$.
Let $d = \delta(v) - \delta(v')$.
As a consequence of Fact \ref{f:synch}, the $i$-th bit component
of $\mathbf{c}$
is processed at the same computation round as bit $y_{i+ d}$ of
$\mathbf{y}$.
Since the two stations transmit simultaneously in that computation round, we have $c_i = 1$ and $y_{i+d} = 1$. 

It's worth noting that the value of $d$ can be positive, zero, or negative, depending on whether $v'$ is activated before, simultaneously with, or after $v$, respectively.
Therefore, $y_{i+d} = 1$ implies $y_{i \oplus (t+d)} = 1$, where we need to recall that $\oplus$ denotes addition modulo $t$. 
Hence, if we set
$\mathbf{z} = (y_{1 \oplus (t+d)}, y_{2 \oplus (t+d)}, \ldots, y_{t \oplus (t+d)})$,
then both $\mathbf{c}$ and $\mathbf{z}$ have a 1 in their $i$-th position, and the lemma follows.
\end{proof}

\begin{theorem}
\label{thm:codes-vs-CR}
Let $(T, \delta)$ be an arbitrary instance of the generalized contention resolution problem. 
Protocol $\cA$ solves the generalized contention resolution problem for any $k\le n$ contenders (with $k$ unknown) and guarantees each of them at least $s$ successful transmissions 
with latency {$\tau(\cA) \leq t = O((k+\frac{s}{\log n})^2\log n)$.}
\end{theorem}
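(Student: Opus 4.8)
The plan is to reduce the generalized contention resolution guarantee to the elongation property of the URSC $\M$, using Lemma~\ref{l:collision} to translate "collisions on the channel'' into "overlaps between a codeword and a cyclic shift of another codeword''. Fix an instance $(T,\delta)$ with $|T|=k$ and let $k^* = k + c'\lceil s/\log n\rceil$ as in the text. First I would dispose of the easy range $k^*\le n$: by Definition~\ref{def:shift_k_unknown-fuzzy} applied with the subset $T$ (or any $k^*$-subset containing it), for the designated station $v\in T$ with codeword $\bc=\M(v)$ and for every $\bz\in S_\vee(T\setminus\{v\})$ we have $|(\bc\wedge\bz^*)_{[0,\tau(n,k^*)]}| < \alpha\,|\bc_{[0,\tau(n,k^*)]}|$. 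Now, by Fact~\ref{f:synch} and Lemma~\ref{l:collision}, whenever $v$ transmits in a local round $i\le\tau(n,k^*)$ and some other active $v'$ transmits simultaneously, the corresponding shifted codeword $\bz_{v'}\in S(\M(v'))$ has a $1$ in position $i$ — and in particular $\bz:=\bigvee_{v'\in T\setminus\{v\}}\bz_{v'}\in S_\vee(T\setminus\{v\})$ also has a $1$ there. Hence every colliding transmission of $v$ within the first $\tau(n,k^*)$ rounds corresponds to a $1$-position of $\bc$ that lies in $\bc\wedge\bz$, which is contained in $\bc\wedge\bz^*$. Therefore the number of $1$'s of $\bc$ in $[0,\tau(n,k^*)]$ that are \emph{not} lost to collisions is at least $|\bc_{[0,\tau(n,k^*)]}| - |(\bc\wedge\bz^*)_{[0,\tau(n,k^*)]}| > (1-\alpha)\,|\bc_{[0,\tau(n,k^*)]}| \ge (1-\alpha)\,|\bc_{[\tau_1(n,k^*),\tau_2(n,k^*)]}| \ge (1-\alpha)\,weight_{k^*}$, which by Lemma~\ref{lepa1} is $\Omega(k^*\log n)\ge s$ for a suitably large constant $c'$; so $v$ succeeds at least $s$ times within $\tau(\cA)=\tau(n,k^*)=O((k+\frac{s}{\log n})^2\log n)$ rounds of its activation.

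For the remaining range $k^* > n$ I would instead use that $\cS_v$ is the codeword $\M(v)$ concatenated $N:=\lceil s/((1-\alpha)\,weight_n)\rceil$ times, and apply Definition~\ref{def:shift_k_unknown-fuzzy} with the full index set ($k=n$): within one length-$t$ copy of $\M$, the argument above (now with elongation $\tau(n,n)=t$, so the whole copy is covered) gives that $v$ accumulates more than $(1-\alpha)\,weight_n$ collision-free $1$-positions per copy. Over $N$ copies this yields at least $N(1-\alpha)\,weight_n \ge s$ successes, and the total latency is $N\cdot t$; the text's observation that $\frac{k^*}{n}\ge N$ for large enough $c'$ gives $N\cdot t \le \frac{k^*}{n}\cdot\frac{c}{\alpha^{2+\epsilon}}n^2\ln n = O\big((k+\frac{s}{\log n})^2\log n\big)$ as well, matching the claimed bound.

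One subtlety to handle carefully is that across concatenated copies, and in the $k^*\le n$ case where the adversarial shift $d=\delta(v)-\delta(v')$ may exceed $t$, the $1$'s of the shifted codewords really do form a single element of $S_\vee$ of the \emph{codewords} (not of the length-$N t$ transmission vectors). This is exactly what Lemma~\ref{l:collision} provides: reducing the offset modulo $t$ shows the relevant bit of $v'$ seen by $v$ at local round $i$ is a bit of a genuine cyclic shift of $\M(v')$, so the union over all $v'\ne v$ is in $S_\vee(T\setminus\{v\})$ and Definition~\ref{def:shift_k_unknown-fuzzy} applies verbatim. The main obstacle is the bookkeeping that ties together the three quantities — the weight lower bound $weight_{k^*}=\Omega(k^*\log n)$ from Lemma~\ref{lepa1}, the $(1-\alpha)$ surplus guaranteed by the isolation property, and the choice of $c'$ making $c'\lceil s/\log n\rceil(1-\alpha)\cdot\Omega(\log n)\ge s$ — so that the surplus in the lower $[\tau_1,\tau_2]$ segment alone already exceeds $s$; everything else is a direct substitution of the elongation/weight bounds into the collision count.
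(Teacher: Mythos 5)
Your proposal is correct and takes essentially the same approach as the paper's proof: it uses Lemma~\ref{l:collision} to turn channel collisions into overlaps between $\M(v)$ and cyclic shifts of competitors' codewords, applies the URSC/elongation guarantee at the inflated parameter $k^*=k+c'\lceil s/\log n\rceil$, uses the weight lower bound $weight_{k^*}=\Omega(k^*\log n)$ from Lemma~\ref{lepa1} to get at least $s$ surviving successful positions, and handles $k^*>n$ via concatenation of $N=\lceil s/((1-\alpha)weight_n)\rceil$ copies with the observation $k^*/n\ge N$. The only cosmetic divergence is that you argue directly via monotonicity (padding $T$ to a $k^*$-subset and invoking Definition~\ref{def:shift_k_unknown-fuzzy}), whereas the paper argues by contradiction and invokes the Collision Weight Inequality of Definition~\ref{CBproperty} applied $k-1$ times over the $[\tau_1(n,k^*),\tau_2(n,k^*)]$ segment; both routes rest on the same underlying bound and are interchangeable.
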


\begin{proof}
{Note first that if $k^*>n$ then we consider $\lceil k^*/n \rceil$ concatenations of $M$, and in each concatenation it is enough to prove the number of submission is $weight_n=\Omega(n\log n)$ (the asymptotics is by Lemma~\ref{lepa1}). In other words, this corresponds to case $k^*=n$ that occurs $\lceil k^*/n \rceil$ times in the code. Therefore, in the remainder it is sufficient to focus on case $k^*\le n$.}

Let $v$ be an arbitrary station in $T$. 
Starting from computation time $\delta(v)$, station $v$ transmits according to its transmission vector $\cS_{v} = \M(v) = (c_1, \ldots, c_t)$: 
for $j = 1,2, \ldots, t$ it transmits at time 
$\delta(v)+ j$ if and only if  $c_j = 1$. 

Let 
{$r = |[\bc]_{[\tau_1(n,k^*),\tau_2(n,k^*)]}|$} and $\{j_1, j_2, \ldots, j_r\}$ be the collection of indices from the transmission vector $(c_1, \ldots, c_t)$ where the value is 1. 
Consequently, for station $v$, the transmission will happen 
exclusively during its local rounds ${j_1}, {j_2}, \ldots, {j_r}$.
{Note that $r\ge weight_{k^*}$, by the definition of $weight_{k^*}$, and by Lemma~\ref{lepa1}, we get $r\ge \frac{3}{5}\sqrt{c}(k+c'\frac{s}{\log n})\log n > 2k\log n+2s$ for sufficiently large constants $c,c'$. It follows that 
$r/2>s$, and consequently, $r-s>r/2$.}

Suppose, by contradiction, that 
{less than $s$}
of these $r$ transmissions 
{are}
successful. In this scenario, it implies that for 
{more than $r-s$ of}
$v$'s 
local 
{times} 
$j_i$, for $i=1,2,\ldots,r$,
there is (at least) another station $v^{j_i} \in T \setminus\{v\}$,
associated with $v$'s local time $j_i$, that transmits simultaneously at $v$'s local time $j_i$.

Let $\bc = \M(v)$ and $\by_i = \M(v^{j_i})$
for $i = 1, 2, \ldots, r$.
Applying Lemma \ref{l:collision} repeatedly  
for $i = 1, 2, \ldots, r$, we get that for each 
$i = 1, 2, \ldots, r$,  there exists a
vector $\bz_i \in S(\by_i)$ such that both 
$\bc$ and $\bz_i$ have a 1 in their $i$-th position.
Consequently, the bitwise OR of all these
vectors, that is 
$\bz = \bigvee_{i=1}^r \bz_i$, must have weight
{$|\bz| > r -s$. By applying the previously proved bound $r-s>r/2$, we get $|\bz| > r/2 = r\alpha$.}
{Hence, we have found a vector $\bz \in S_{\vee}(T \setminus \{ \bc \})$
 such that $|\bc \wedge \bz| > |\bc|-s>\alpha |\bc|$.
This contradicts the fact that $\M$ is an ultra-resilient 
superimposed code satisfying also Definition~\ref{CBproperty}, for which it is required, instead, that
for all $\bz \in S_{\vee}(T \setminus \{ \bc \})$,
$|\bc_{[\tau_1(n,k^*),\tau_2(n,k^*)]} \wedge \bz_{[\tau_1(n,k^*),\tau_2(n,k^*)]}| < \alpha |\bc_{[\tau_1(n,k^*),\tau_2(n,k^*)]}|$, where $0< \alpha \le 1$. The last inequality comes from applying the Collision Weight Inequality (Equation~(\ref{conjineq2}) in Definition~\ref{CBproperty}) $k-1$ times -- specifically, to the $k-1$ superimposed codewords of the contending stations.}
\end{proof}

\subsection{Other related work on contention resolution} \label{sec:other-related}

Deterministic contention resolution has been studied in a slightly easier setting where all 
$k$ stations arrive at the same time. In this scenario, the latency is significantly lower, 
$\Theta(k\log n)$, with the upper bound established in \cite{KG1985} and a construction 
given in \cite{K2005}, while the lower bound is proved in \cite{CMS2003}.
{The dynamic case, assuming the availability of a global clock, was examined in \cite{M2015}.}
 
Recently, fairness in contention resolution has emerged as a challenging objective~\cite{ChionasCKK23}. 
In our work, we consider the setting with dynamic arrivals 
{and without global clock,} where another optimization 
criterion is the number of transmissions (corresponding to the weight of the codewords). 
Our approach also shows slight improvement in this aspect, cf. \cite{M2021}.

\section{Proof of Lemma 2}\label{proofs}

In proving Lemma \ref{lep}, we proceed by separately bounding the probabilities of satisfying 
each of the two inequalities of the Collision Bound Property, and then we combine these bounds 
in the final proof. This is done in Appendix \ref{WIsub} and \ref{CWIsub}, respectively.

{Some preliminary mathematical tools that will be pivotal in the proofs of this section are
deferred to the Appendix \ref{prel}. 
This includes the \textit{rearrangement inequality} \cite{HLP1934}, a fundamental 
result in combinatorial mathematics essential for bounding and optimizing sums involving products 
of sequences of probabilities, as well as specific results on bounding summations through integral approximation.
}

From this point onward, we assume the hypotheses of the lemma. 
Specifically, in addition to $n$ and $\alpha$, we fix $\epsilon > 0$ and set 
$\tau_1(n,k) = \frac{c}{64 } k^2 \ln n$ 
and $\tau_2(n,k) = \frac{c}{\alpha^{2+\epsilon}} k^2 \ln n$, where $c > 0$ is a sufficiently 
large constant. Moreover, we consider $\mathbf{M} = \mathcal{M}(n, \alpha, \epsilon, c)$ to 
be any matrix generated according to the random construction of Definition \ref{randomatrix}, 
and $1 < k \leq n$ is any given (unknown) parameter.

\subsection{Weight Inequality}\label{WIsub}

We begin with inequality (\ref{conjineq}), which compares the weight 
of the upper segment to that of the lower segment of a column relative to $\tau_1$ and $\tau_2$.
Specifically, our goal here is to prove the following lemma.
\begin{lemma}\label{lep1}
The probability that the Weight Inequality (\ref{conjineq}) 
does not hold is at most 
$\frac{2}{c^2} \cdot n^{-8\ln(4/\alpha)} $.
\end{lemma}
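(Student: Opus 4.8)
The plan is to reduce the Weight Inequality to a concentration statement about two sums of independent $\{0,1\}$ variables, and then apply multiplicative Chernoff bounds. Write $U = |\mathbf{[c_j]}_{[0,\tau_1(n,k)]}|$ and $L = |\mathbf{[c_j]}_{[\tau_1(n,k),\tau_2(n,k)]}|$ for the weights of the upper and lower segments of the fixed column. By Definition~\ref{randomatrix}, the upper segment consists of the first $\tau_1(n,k)/\ln n = (c/64)k^2$ blocks and the lower segment of blocks $(c/64)k^2,\ldots,(c/\alpha^{2+\epsilon})k^2-1$, where each bit of block $b$ is an independent Bernoulli$(1/\sqrt{b+1})$; hence $U$ and $L$ are sums of independent indicators, and the two sets of indicators are disjoint.

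First I would estimate $\mathbb{E}[U]$ and $\mathbb{E}[L]$ by comparing $\sum_b \frac{\ln n}{\sqrt{b+1}}$ with $\int \frac{\ln n}{\sqrt{x}}\,dx$ (using the integral-bound facts deferred to Appendix~\ref{prel}). This gives $\mathbb{E}[U]\le \tfrac14\sqrt{c}\,k\ln n$, and, since $\alpha\le 1$ makes $\sqrt{\tau_2(n,k)/\ln n}=\alpha^{-(1+\epsilon/2)}\sqrt{c}\,k/2$ dominate the boundary term $\sqrt{\tau_1(n,k)/\ln n}$, both $\mathbb{E}[L]\ge \sqrt{c}\,k\ln n$ and — crucially — $\alpha\,\mathbb{E}[L]\ge 7\,\mathbb{E}[U]$; for $c$ sufficiently large and $k\ge 2$ one also records $\mathbb{E}[U],\mathbb{E}[L]\ge \tfrac18\sqrt{c}\,k\ln n$. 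Next, observe that on the event $\{U\le 2\mathbb{E}[U]\}\cap\{L\ge \tfrac12\mathbb{E}[L]\}$ we have $\alpha L \ge \tfrac12\alpha\,\mathbb{E}[L]\ge \tfrac72\mathbb{E}[U]\ge \tfrac74 U > U$, so inequality~(\ref{conjineq}) holds; therefore the failure probability is at most $\Pr[U\ge 2\mathbb{E}[U]]+\Pr[L\le \tfrac12\mathbb{E}[L]]$. Applying the multiplicative Chernoff bounds $\Pr[U\ge 2\mathbb{E}[U]]\le e^{-\mathbb{E}[U]/3}$ and $\Pr[L\le \tfrac12\mathbb{E}[L]]\le e^{-\mathbb{E}[L]/8}$ together with $\mathbb{E}[U],\mathbb{E}[L]\ge \tfrac18\sqrt{c}\,k\ln n$, each of these two probabilities is at most $n^{-\sqrt{c}\,k/24}$.

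It then remains to check $n^{-\sqrt{c}\,k/24}\le \tfrac1{c^2}\,n^{-8\ln(4/\alpha)}$, equivalently $\tfrac{\sqrt{c}\,k}{24}\ge 8\ln(4/\alpha)+2\log_n c$. Since the construction operates under the standing assumption $e^{-k}<\alpha\le 1$, we have $\ln(1/\alpha)<k$, so $8\ln(4/\alpha)<8\ln 4+8k$; as $\sqrt{c}\,k/24$ dominates $8k+8\ln 4+2\log_2 c$ once $c$ is a sufficiently large absolute constant (independent of $n$, $k$, $\alpha$), the inequality holds. Summing the two tail contributions yields the claimed bound $\tfrac{2}{c^2}\,n^{-8\ln(4/\alpha)}$.

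The main obstacle I anticipate is purely in the constant bookkeeping: making sure the multiplicative gap $\alpha\,\mathbb{E}[L]\ge 7\,\mathbb{E}[U]$ survives the integral-approximation error at the block boundary $b=(c/64)k^2$ uniformly for all $k\ge 2$, and then calibrating the phrase "$c$ sufficiently large'' so that the Chernoff exponent $\sqrt{c}\,k/24$ simultaneously beats $8\ln(4/\alpha)$ and absorbs the spurious $\log_n c$ and factor-$2$ losses; the probabilistic content (disjoint independent indicators, two Chernoff applications, a union bound) is routine.
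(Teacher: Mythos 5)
Your proposal follows the same high-level route as the paper's proof (the paper defines $Y^{\top}_j=U$, $Y^{\perp}_j=L$, estimates their expectations by telescoping/integral comparison in its Lemmas~\ref{Eup} and~\ref{Edown}, applies multiplicative Chernoff in Lemma~\ref{lepa1}, then union-bounds), but it chooses far more aggressive Chernoff deviations: $\delta=1$ for the upper tail and $\delta=1/2$ for the lower tail, versus the paper's $\delta=1/5$ and $\delta=3/10$. That trade pushes the constant bookkeeping to the breaking point, and as written your proposal does not close it.

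The genuine gap is the claim $\alpha\,\mathbb{E}[L]\ge 7\,\mathbb{E}[U]$. It does \emph{not} follow from the two intermediate bounds you write down, namely $\mathbb{E}[U]\le\tfrac14\sqrt{c}\,k\ln n$ and $\mathbb{E}[L]\ge\sqrt{c}\,k\ln n$: from those alone $7\,\mathbb{E}[U]$ can be nearly $\tfrac74\sqrt{c}\,k\ln n$ while $\alpha\,\mathbb{E}[L]$ could be as small as $\alpha\sqrt{c}\,k\ln n\le\sqrt{c}\,k\ln n$, so the inequality fails for any $\alpha<\tfrac74$. To get the 7-to-1 gap you actually need the tighter two-sided telescoping estimate $\mathbb{E}[L]\ge 2\sqrt{c}\,k\ln n\bigl(\alpha^{-(1+\epsilon/2)}-\tfrac18\bigr)-O(\ln n/(\sqrt{c}k))$ \emph{and} the corresponding second-order correction $\mathbb{E}[U]\le\tfrac14\sqrt{c}\,k\ln n-\ln n$; only after both corrections does $\alpha\,\mathbb{E}[L]-7\,\mathbb{E}[U]>\ln n\bigl(7-\tfrac{8}{\sqrt{c}\,k}\bigr)>0$ hold, and even then it is razor-thin at $\alpha=1$. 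You flag exactly this as your anticipated obstacle but the proposal does not resolve it, whereas the paper sidesteps the fragility by using the small deviations $\delta=1/5,\ 3/10$, for which the verifying comparison is $\tfrac35<\tfrac{7}{10}\cdot\tfrac78=\tfrac{49}{80}$ — also tight, but provable from the stated one-sided expectation bounds. Two minor slips, neither fatal: $\sqrt{\tau_2(n,k)/\ln n}=\alpha^{-(1+\epsilon/2)}\sqrt{c}\,k$, not $\alpha^{-(1+\epsilon/2)}\sqrt{c}\,k/2$; and with $\mathbb{E}[L]\ge\tfrac18\sqrt{c}\,k\ln n$ the lower-tail Chernoff gives exponent $\sqrt{c}\,k/64$, not $\sqrt{c}\,k/24$, which still diverges in $c$ so the conclusion survives.
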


We need some preliminary result.
Let $Y^{\top}_j$ and $Y^{\perp}_j$ be the random variables denoting the weight of the  
subcolumns $\mathbf{[c_j]}_{[0, \tau_1(n,k)]}$ and 
$\mathbf{[c_j]}_{[\tau_1(n,k), \tau_2(n,k)]}$, respectively. 
\begin{lemma}\label{Eup}
We have
\[
 \frac{1}{8}  \sqrt{c} k \ln n          
         < E[Y^{\top}_j] < 
         \frac{1}{2}  \sqrt{c} k \ln n
 \]
\end{lemma}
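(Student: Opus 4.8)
The plan is to evaluate $E[Y^{\top}_j]$ exactly by linearity of expectation and then estimate the resulting sum by comparison with an integral; everything here is elementary.

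First I would use Definition~\ref{randomatrix}: the upper segment $\mathbf{[c_j]}_{[0,\tau_1(n,k)]}$ has length $\tau_1(n,k)=(c/64)k^2\ln n$ and decomposes into $(c/64)k^2$ consecutive blocks of length $\ln n$ (up to the usual rounding of $\ln n$, which affects only lower-order terms), where every bit of the $b$-th block is an independent Bernoulli variable with success probability $1/\sqrt{b+1}$. Hence, by linearity,
\[
E[Y^{\top}_j]=\sum_{r=0}^{\tau_1(n,k)}p(r)=\ln n\sum_{b=0}^{(c/64)k^2-1}\frac{1}{\sqrt{b+1}}+O(1)=\ln n\sum_{m=1}^{N}\frac{1}{\sqrt m}+O(1),
\]
where $N:=(c/64)k^2$ and the $O(1)$ term, bounded by an absolute constant, collects the at most one leftover position and the block-rounding discrepancy; it is negligible against the main term once $c$ is large.

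Next I would bound $\sum_{m=1}^{N}m^{-1/2}$ via the elementary two-sided estimate
\[
2\big(\sqrt{N+1}-1\big)\ \le\ \sum_{m=1}^{N}\frac{1}{\sqrt m}\ \le\ 2\sqrt N ,
\]
which follows from the telescoping inequalities $2(\sqrt{m+1}-\sqrt m)\le m^{-1/2}\le 2(\sqrt m-\sqrt{m-1})$, or equivalently by comparing with $\int x^{-1/2}\,dx$ as in the summation bounds collected in Appendix~\ref{prel}. Since $N=(c/64)k^2$ yields $2\sqrt N=\sqrt c\,k/4$, combining with the display above gives
\[
\Big(\tfrac{\sqrt c\,k}{4}-2\Big)\ln n-O(1)\ \le\ E[Y^{\top}_j]\ \le\ \tfrac{\sqrt c\,k}{4}\ln n+O(1).
\]

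Finally I would check the two claimed inequalities against this estimate. The upper bound holds because $\tfrac{\sqrt c\,k}{4}\ln n+O(1)<\tfrac12\sqrt c\,k\ln n$ whenever $\tfrac14\sqrt c\,k\ln n$ dominates the additive constant, which is the case for all $k\ge 2$, $n\ge 2$ and $c$ sufficiently large. For the lower bound, $\big(\tfrac{\sqrt c\,k}{4}-2\big)\ln n-O(1)\ge\tfrac18\sqrt c\,k\ln n$ reduces to $\tfrac18\sqrt c\,k\ge 2+O(1/\ln n)$, i.e.\ to $\sqrt c\,k$ being at least a fixed constant (about $16$); this again holds since $k>1$ forces $k\ge 2$ and $c$ is a sufficiently large constant, consistent with the standing hypothesis of Lemma~\ref{lep}. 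There is no real obstacle in this argument; the only point requiring a little care is the bookkeeping of the block-rounding and leftover-position contributions, but all of these are comfortably absorbed by the factor-of-two slack present on both sides of the target bound.
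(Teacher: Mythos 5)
Your proposal is correct and follows essentially the same route as the paper: both compute $E[Y^{\top}_j]$ by linearity as the block-indexed sum $\ln n\sum_b 1/\sqrt{b+1}$ and then bound it via the telescoping inequalities $2(\sqrt{m+1}-\sqrt{m})\le 1/\sqrt{m}\le 2(\sqrt{m}-\sqrt{m-1})$, invoking large $c$ and $k\ge 2$ to close the final comparison. The only cosmetic difference is that you track the block-rounding and leftover-position errors as explicit $O(1)$ terms, whereas the paper silently treats the block formula as exact; this is a matter of presentation, not substance.
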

\begin{proof}
We have 
\[
	E[Y^{\top}_j]   =     \sum_{b=0}^{\tau_1(n,k)} p(r) 
	                =    \ln n \sum_{b=0}^{(c/ 64) k^2} \frac{1}{\sqrt{b+1}} .
\]
We can find upper and lower bounds for $\frac{1}{\sqrt{b+1}}$ as follows.
\[
\frac{\left( \sqrt{b+2} + \sqrt{b+1} \right)
	\left( \sqrt{b+2} - \sqrt{b+1} \right)}{\sqrt{b+1}} 
 = \frac{1}{\sqrt{b+1}} =
\frac{\left( \sqrt{b+1} + \sqrt{b} \right)\left( \sqrt{b+1} - \sqrt{b} \right)}{\sqrt{b+1}}.
\]
From which we get
\begin{equation}\label{ineqs}
     2\left( \sqrt{b+2} - \sqrt{b+1} \right) < \frac{1}{\sqrt{b+1}} < 2\left( \sqrt{b+1} - \sqrt{b} \right).
\end{equation}

Hence,
\begin{eqnarray*}          
2\ln n \left( \sqrt{(c/64)  k^2 + 2} - 1 \right)  
        & < E[Y^{\top}_j] <
        & 2  \ln n \left( \sqrt{(c/64)  k^2+1} \right), \\
\ln n \left( \sqrt{(c/64) k^2} \right)	      
        & < E[Y^{\top}_j] < 
        & 4  \ln n \left( \sqrt{(c/64) k^2} \right), \\            
\frac{1}{8}  \sqrt{c} k \ln n          
        & < E[Y^{\top}_j] < 
        & \frac{1}{2}  \sqrt{c} k \ln n \ .
\end{eqnarray*}

\end{proof}

Analogously, we can now estimate the expected value of $Y^{\perp}_j$.
\begin{lemma}\label{Edown}
We have
\[
  \left (\frac{1}{\alpha^{1+\epsilon/2}} - \frac{1}{8} \right )\sqrt{c} k \ln n  
      < E[Y^{\perp}_j] < 
   4 \left (\frac{1}{\alpha^{1 + \epsilon/2}}  \right )\sqrt{c} k \ln n .
\]
\end{lemma}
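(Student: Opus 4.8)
The plan is to follow the same route as the proof of Lemma~\ref{Eup}, now applied to the lower segment. First I would expand the expectation as a sum of per-bit success probabilities and invoke the block structure of Definition~\ref{randomatrix}: since every bit in block $b$ is set to $1$ with probability $1/\sqrt{b+1}$, and the lower segment $[\tau_1(n,k),\tau_2(n,k)]$ consists of the blocks with indices $b$ ranging from $(c/64)k^2$ up to $(c/\alpha^{2+\epsilon})k^2$, this gives
\[
E[Y^{\perp}_j] \;=\; \sum_{r=\tau_1(n,k)}^{\tau_2(n,k)} p(r) \;=\; \ln n \sum_{b=(c/64)k^2}^{(c/\alpha^{2+\epsilon})k^2} \frac{1}{\sqrt{b+1}}\,.
\]

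Next I would reuse the elementary two-sided estimate~(\ref{ineqs}), namely $2(\sqrt{b+2}-\sqrt{b+1}) < 1/\sqrt{b+1} < 2(\sqrt{b+1}-\sqrt{b})$, and sum it telescopically over this block range. The sum collapses to
\[
2\ln n\left(\sqrt{\tfrac{c}{\alpha^{2+\epsilon}}k^2+2}-\sqrt{\tfrac{c}{64}k^2+1}\right) \;<\; E[Y^{\perp}_j] \;<\; 2\ln n\left(\sqrt{\tfrac{c}{\alpha^{2+\epsilon}}k^2+1}-\sqrt{\tfrac{c}{64}k^2}\right).
\]

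Finally I would simplify both bounds using $\sqrt{\tfrac{c}{\alpha^{2+\epsilon}}k^2}=\tfrac{\sqrt c\,k}{\alpha^{1+\epsilon/2}}$ and $\sqrt{\tfrac{c}{64}k^2}=\tfrac{\sqrt c\,k}{8}$, together with the standing hypotheses $k\ge 2$, $0<\alpha\le 1$, and $c$ sufficiently large. For the upper bound, I drop the nonnegative term $\sqrt{\tfrac{c}{64}k^2}$ and use $\tfrac{c}{\alpha^{2+\epsilon}}k^2+1 < 4\tfrac{c}{\alpha^{2+\epsilon}}k^2$ (which holds since $1 < 3ck^2/\alpha^{2+\epsilon}$), obtaining $E[Y^{\perp}_j] < 4\bigl(1/\alpha^{1+\epsilon/2}\bigr)\sqrt c\,k\ln n$. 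For the lower bound, I use $\sqrt{\tfrac{c}{\alpha^{2+\epsilon}}k^2+2} > \tfrac{\sqrt c\,k}{\alpha^{1+\epsilon/2}}$ and $\sqrt{\tfrac{c}{64}k^2+1} \le \sqrt2\,\tfrac{\sqrt c\,k}{8}$ (valid since $1\le\tfrac{c}{64}k^2$ for $c$ large), so that $E[Y^{\perp}_j] > 2\sqrt c\,k\ln n\bigl(1/\alpha^{1+\epsilon/2}-\sqrt2/8\bigr)$; since $x:=1/\alpha^{1+\epsilon/2}\ge 1 > \sqrt2/4-1/8$, the inequality $2x-\sqrt2/4 > x-1/8$ holds, which yields $E[Y^{\perp}_j] > \bigl(1/\alpha^{1+\epsilon/2}-1/8\bigr)\sqrt c\,k\ln n$. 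I do not expect any real obstacle here: the only care required is the bookkeeping of the block endpoints $(c/64)k^2$ and $(c/\alpha^{2+\epsilon})k^2$ and checking that the constant slack introduced by~(\ref{ineqs}) (the leading factor $2$, and $\sqrt2/8$ versus $1/8$) is absorbed by the assumptions $k\ge2$, $\alpha\le1$, and $c$ large — exactly as in the proof of Lemma~\ref{Eup}.
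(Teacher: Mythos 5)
Your proof is correct and mirrors the paper's argument exactly: both express $E[Y^{\perp}_j]$ as $\ln n\sum_{b}1/\sqrt{b+1}$ over the block range $[(c/64)k^2,(c/\alpha^{2+\epsilon})k^2]$, telescope using inequality~(\ref{ineqs}), and then simplify the resulting $\sqrt{\cdot}$ expressions to the stated constants. The only difference is that you spell out the final numerical simplifications (e.g., $\sqrt{B+1}\le\sqrt{2B}$, and the comparison $2x-\sqrt2/4>x-1/8$) more explicitly than the paper, which simply compresses them into a chain of inequalities.
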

\begin{proof}
We proceed as in the proof of Lemma \ref{Eup}. We have
\[
	E[Y^{\perp}_j] 
                =    \sum_{r=\tau_1(n,k)}^{\tau_2(n,k)} p(r) \nonumber\\
	            =    \ln n \sum_{b= (c /64)  k^2}^{(c/\alpha^{2 + \epsilon}) k^2} \frac{1}{\sqrt{b +1}}.
\]
Using the inequalities (\ref{ineqs}), we have

\begin{eqnarray*}
  \ln n \sum_{b= (c /64) k^2}^{(c/\alpha^{2 + \epsilon}) k^2} 2\left( \sqrt{b+2} - \sqrt{b+1} \right)  
  & < E[Y^{\perp}_j] < &
  \ln n \sum_{b= (c /64) k^2}^{(c/\alpha^{2 + \epsilon}) k^2} 2 \left( \sqrt{b+1} - \sqrt{b} \right), \\
  2 \ln n \left( \sqrt{(c/\alpha^{2 + \epsilon}) k^2 + 2} - \sqrt{(c /64)  k^2+1} \right)
   & < E[Y^{\perp}_j] < &
2  \ln n \left( \sqrt{(c/\alpha^{2 + \epsilon}) k^2 + 1} - \sqrt{(c /64)  k^2} \right), \\
\ln n \left( \sqrt{(c/\alpha^{2 + \epsilon}) k^2 } - \sqrt{(c / 64)  k^2} \right)
   & < E[Y^{\perp}_j] < &
4\ln n \sqrt{(c/\alpha^{2 + \epsilon}) k^2 }, \\  
\ln n \left( \frac{\sqrt{c}\,k}{\alpha^{1 + \epsilon/2} } - \frac{\sqrt{c}\, k}{8 } \right)
   & < E[Y^{\perp}_j] < &
  4\ln n \frac{\sqrt{c}\,k}{\alpha^{1 + \epsilon/2} }, \\
  \left (\frac{1}{\alpha^{1+\epsilon/2}} - \frac{1}{8} \right )\sqrt{c} k \ln n  
     & < E[Y^{\perp}_j] < &
   4 \left (\frac{1}{\alpha^{1 + \epsilon/2}}  \right )\sqrt{c} k \ln n .
\end{eqnarray*}

\end{proof}

Since the random construction of the matrix (Definition \ref{randomatrix}) sets each entry of the columns to 1 or 0
independently, both $E[Y^{\top}_j]$ and $E[Y^{\perp}_j]$ can be considered as sums of independent random variables. 
Consequently, we can apply the Chernoff bound to evaluate the deviation of $Y^{\top}_j$ and $Y^{\perp}_j$
respectively above the expected value $E[Y^{\top}_j]$ and below the expected value $E[Y^{\perp}_j]$.

\begin{lemma}\label{lepa1}
We have,
\begin{equation*}
    	\Pr\left(Y^{\top}_j >  \frac{3}{5}\sqrt{c}\, k \ln n \right) 
     <  \frac{1}{c^2} \cdot n^{-8\ln(4/\alpha)}  \ , 
\end{equation*}
\begin{equation*}
    	\Pr\left(Y^{\perp}_j <  
     \frac{7}{10} \left (\frac{1}{\alpha^{1+\epsilon/2}} - \frac{1}{8} \right ) \sqrt{c}\, k \ln n \right)  
     <  \frac{1}{c^2} \cdot n^{-8\ln(4/\alpha)}  \ .
\end{equation*}
\end{lemma}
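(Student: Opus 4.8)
The plan is to apply standard multiplicative Chernoff bounds to the two sums of independent indicator variables $Y^{\top}_j$ and $Y^{\perp}_j$, using the expectation estimates already established in Lemmas \ref{Eup} and \ref{Edown}. For the first inequality, I would note that $Y^{\top}_j = \sum_{r=0}^{\tau_1(n,k)} X_r$ where the $X_r$ are independent Bernoulli variables with parameters $p(r)$, so by Lemma \ref{Eup} we have $\mu := E[Y^{\top}_j] < \tfrac12\sqrt{c}\,k\ln n$. The target threshold $\tfrac35\sqrt{c}\,k\ln n$ exceeds $\tfrac{6}{5}\mu$, i.e.\ it is a deviation of a constant multiplicative factor above the mean. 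Applying the upper-tail Chernoff bound $\Pr(Y^{\top}_j > (1+\delta)\mu) \le \exp(-\delta^2\mu/3)$ (valid for the regime here, with $\delta$ a constant bounded below, e.g.\ $\delta = \tfrac15$ taken against the crude upper bound $\tfrac12\sqrt{c}k\ln n$ on $\mu$), and using the \emph{lower} bound $\mu > \tfrac18\sqrt{c}\,k\ln n$ from Lemma \ref{Eup} inside the exponent, gives a bound of the form $\exp(-\Omega(\sqrt{c}\,k\ln n))$. Since $k\ge 2$ and $c$ is a sufficiently large constant, this is at most $n^{-c'\sqrt{c}}$ for an absolute constant $c'$, which for $c$ large enough is bounded by $\tfrac{1}{c^2}\cdot n^{-8\ln(4/\alpha)}$; here one uses the standing assumption $\alpha > e^{-k}$, which bounds $\ln(4/\alpha) = O(k)$, so the exponent $8\ln(4/\alpha) = O(k)$ is dominated by the $\Theta(\sqrt{c}\,k)$ gained from the Chernoff exponent once $c$ is large.

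For the second inequality I would proceed symmetrically with the lower-tail bound $\Pr(Y^{\perp}_j < (1-\delta)\nu) \le \exp(-\delta^2\nu/2)$, where $\nu := E[Y^{\perp}_j]$. By Lemma \ref{Edown}, $\nu > (\tfrac{1}{\alpha^{1+\epsilon/2}} - \tfrac18)\sqrt{c}\,k\ln n$, and the target threshold is exactly $\tfrac{7}{10}$ of this lower bound, so choosing $\delta = \tfrac{3}{10}$ makes $(1-\delta)\cdot[\text{lower bound on }\nu]$ equal to the stated threshold; since the true mean $\nu$ is at least this lower bound, the event $\{Y^{\perp}_j < \tfrac{7}{10}(\cdots)\}$ is contained in $\{Y^{\perp}_j < (1-\delta)\nu\}$. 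The exponent is then $-\delta^2\nu/2 = -\Omega(\nu) = -\Omega\!\big((\tfrac{1}{\alpha^{1+\epsilon/2}} - \tfrac18)\sqrt{c}\,k\ln n\big)$; note $\tfrac{1}{\alpha^{1+\epsilon/2}} - \tfrac18 \ge \tfrac78$ since $\alpha \le 1$, so this is again $\exp(-\Omega(\sqrt{c}\,k\ln n))$, and the same comparison as above — using $\ln(4/\alpha) = O(k)$ from $\alpha > e^{-k}$ and taking $c$ large — yields the required $\tfrac{1}{c^2}\cdot n^{-8\ln(4/\alpha)}$ bound.

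The only subtlety — and the main thing to get right — is the bookkeeping that turns $\exp(-\Omega(\sqrt{c}\,k\ln n))$ into the precise target $\tfrac{1}{c^2}\,n^{-8\ln(4/\alpha)}$: one must verify that the constant hidden in $\Omega(\cdot)$, multiplied by $\sqrt{c}$, dominates both the $8\ln(4/\alpha)$ in the exponent (handled because $\alpha > e^{-k}$ forces $\ln(4/\alpha) \le \ln 4 + k = O(k)$, and $k \le \sqrt{c}\,k$ trivially) and the $2\ln c$ needed to absorb the $1/c^2$ factor (handled since $\sqrt{c}\,\ln n \gg \ln c$ for $c$ large and $n \ge 2$). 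I would state explicitly that ``$c$ sufficiently large'' is exactly what makes these inequalities go through, consistent with the hypothesis of Lemma \ref{lep}. Everything else is a routine application of Chernoff bounds to the expectation windows already computed.
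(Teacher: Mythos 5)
Your proposal is correct and follows essentially the same route as the paper: apply the multiplicative Chernoff upper tail with $\delta=1/5$ to $Y^{\top}_j$ and the lower tail with $\delta=3/10$ to $Y^{\perp}_j$, using the expectation estimates from Lemmas~\ref{Eup} and~\ref{Edown} on both sides of the comparison, and then invoke $\alpha>e^{-k}$ (so $k$ dominates $\ln(4/\alpha)$ up to a constant) together with ``$c$ sufficiently large'' to land on the target bound $\tfrac{1}{c^2}n^{-8\ln(4/\alpha)}$. The one cosmetic difference is that you phrase the final bookkeeping as ``$\exp(-\Omega(\sqrt{c}\,k\ln n))$ dominates $O(k)\ln n + 2\ln c$'' whereas the paper substitutes $k\ge\ln(1/\alpha)$ and then rewrites $\ln(1/\alpha)$ in terms of $\ln(4/\alpha)$; these are interchangeable, and your version is actually a bit more transparent about where the $k\ge 2$ and the $\ln c$ absorption come in.
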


\begin{proof}

Letting $\delta = 1/5$, and recalling the upper and lower bounds for 
$E[Y^{\top}_j]$ established in Lemma \ref{Eup}, we have:
    \begin{eqnarray*}
    	\Pr\left(Y^{\top}_j >  \frac{3}{5} \sqrt{c}\, k \ln n \right)   
    	&\le &  \Pr\left(Y^{\top}_{j} \ge \frac{6}{5}\cdot E[Y^{\top}_j] \right) \\ 
    	&=   &  \Pr\left(Y^{\top}_{j} \ge (1+\delta) E[Y^{\top}_j] \right) \\      
    	&\le & \exp\left(-\frac{\delta^2 E[Y^{\top}_j]}{3}\right) \;\;\; \text{(see, for example, Eq. (4.2) in \cite{MitUpf})} \nonumber \\
    	&\le & \exp\left(- \frac{(1/5)^2  \sqrt{c} k \ln n}{24 }\right)  \\
    	&\le & \exp\left(- \frac{(1/5)^2  \sqrt{c} \ln (1/\alpha) \ln n}{24 }\right)  
              \;\; \text{(recalling that $e^{-k} < \alpha \le 1$}) \\  
    	&\le & \exp\left(- \frac{(1/5)^2  \sqrt{c} \left( \frac{\ln (4/\alpha)}{2}-\frac{\ln 4}{2} \right) \ln n}{24 }\right) \\             
    	&<& \frac{1}{c^2} \cdot n^{-8\ln(4/\alpha)} \ , 
    \end{eqnarray*}
   where the last inequality holds for a sufficiently large constant $c$.
Analogously, we can now let $\delta = 3/10$ and use the lower bound for 
$E[Y^{\perp}_j]$ established in Lemma \ref{Edown}.
    \begin{eqnarray*}
    	\Pr\left(Y^{\perp}_j <  \frac{7}{10}\cdot 
         \left( \frac{1}{\alpha^{1 + \epsilon/2}} - \frac{1}{8}\right )\sqrt{c} k \ln n \right)   
    	&\le &  \Pr\left(Y^{\perp}_{j} \le \frac{7}{10}\cdot E[Y^{\perp}_j]\right) \nonumber\\ 
    	& =  &  \Pr\left(Y^{\perp}_{j} \le (1 - \delta)\; E[Y^{\perp}_j]\right) \nonumber\\      
    	&\le & \exp\left(-\frac{\delta^2  
         \left (\frac{1}{\alpha} - \frac{1}{8}\right )\sqrt{c} k \ln n}{2}\right) \\ 
        & & \text{(see, for example, Eq. (4.5) in \cite{MitUpf})} \nonumber \\
    	&\le & \exp\left(-\frac{\delta^2  
        \sqrt{c} \ln(1/\alpha) \ln n}{16}\right) \\ 
        & & \text{(recalling that $e^{-k} < \alpha \le 1$}) \\
    	&\le & \exp\left(-\frac{(3/10)^2  
         \sqrt{c} \left( \frac{\ln(4/\alpha)}{2} -\ln 2 \right)\ln n}{16}\right) \\         
    	&<& \frac{1}{c^2} \cdot n^{-{8 \ln(4/\alpha) } } \ , \nonumber
    \end{eqnarray*}
    where the last inequality holds for a sufficiently large constant $c$.   
\end{proof}

We can now return to the Weight Inequality and prove Lemma \ref{lep1}.

\medskip
\noindent
\textbf{Proof of Lemma \ref{lep1}}.
Let $w_1 = \frac{3 }{5}\sqrt{c}\, k \ln n$ and 
$w_2 = \frac{7}{10}\left(\frac{1}{\alpha^{1+\epsilon/2}} - \frac{1}{8} \right) \sqrt{c} k \ln n$.
Recalling that $\alpha \le 1$, we can observe that 
\begin{eqnarray*}
    w_1 & =    & \frac{3}{5}\, \sqrt{c}\, k \ln n \\
        & <    & \frac{7}{10} \frac{7}{ 8} \, \sqrt{c}\, k \ln n \\
        & \le  & \alpha \cdot\frac{7}{10} 
        \left(\frac{1}{\alpha } - \frac{1}{8 } \right)   \, \sqrt{c}\, k \ln n \\        
        & \le  &  \alpha\cdot \frac{7}{10}\left(\frac{1}{\alpha^{1+\epsilon/2}} - \frac{1}{8} \right) \sqrt{c} k \ln n\\
        & =    & \alpha\, w_2 \ .
\end{eqnarray*}

In view of this, the inequality (\ref{conjineq}) is guaranteed to be satisfied 
if the following events occur simultaneously: $(Y^{\top}_j \le w_1)$ and $(Y^{\perp}_j \ge w_2)$.
Hence,  by applying Lemma \ref{lepa1}, the inequality is not satisfied 
with a probability of at most
\begin{equation*}
    \Pr(Y^{\top}_j > w_1) + \Pr(Y^{\perp}_j < w_2) < 
    \frac{2}{c^2} \cdot n^{-8\ln(4/\alpha)}  
    \ .
\end{equation*} 

\qed

\subsection{Collision Weight Inequality}\label{CWIsub}
Now it is the turn of the Collision Weight Inequality.
For $d \in \{-1,0,1\}$ and $0 \le i \le t-1$, let $X^d_{j,j'}(i)$ be the random variable denoting 
$|\mathbf{[c_j]} \land \mathbf{[c_{j'}]}(i + d + t)|$.
The left-hand side of inequality (\ref{conjineq2}) is a random variable 
$X_{j,j'}(i)$ such that 
\begin{equation*}
X_{j,j'}(i) \le X^{-1}_{j,j'}(i) + X^{0}_{j,j'}(i) + X^{1}_{j,j'}(i) 
\ .
\end{equation*}
Our goal is to show that the two random variables $X_{j,j'}(i)$ and $Y_j^{\perp}$ satisfy the
Collision Weight Inequality. The first step is to determine an upper bound on the expectation of 
$X{j,j'}(i)$. We can observe that
\[
      E[X_{j,j'}(i)] \le \sum_{d=-1,0,1} E[X^{d}_{j,j'}(i)] .
\]
For $d \in \{-1,0,1\}$, we have
\[
    	E[X^{d}_{j,j'}(i)] = \sum_{r=\tau_1(n,k)}^{\tau_2(n,k)} p(r)\; p\left((r+i+d)\; \text{mod } t\right) .
\]
The value of $E[X^{d}_{j,j'}(i)]$ is significantly influenced by the size of $p\left((r+i+d) \mod t\right)$, 
which can vary considerably within the interval $[\tau_1(n,k), \tau_2(n,k)]$ depending on the shift 
$0 \le i < t$.  
In this context, the following three cases can occur:

\begin{itemize}
    \item \textbf{Case 1:} For all $r$ in the interval $\tau_1(n,k) \leq r \leq \tau_2(n,k)$, we have 
$p((r+i+d)\; \text{mod } t) > \frac{\alpha^{1+\epsilon/2} }{\sqrt{c}\; k} \ln({4}/{\alpha})$.
{In other words, all probabilities of the shifted column are large within the considered positions.}
    \item \textbf{Case 2:} For $\lambda = \frac{\sqrt{c}}{\ln({4}/{\alpha})}, 
    \frac{\sqrt{c}}{\ln({4}/{\alpha})}+1 ,\ldots, 
    \frac{\sqrt{c} n}{2 k}$,
     and all $r$ in the interval $\tau_1(n,k) \leq r \leq \tau_2(n,k)$, we have 
  $\frac{\alpha^{1 + \epsilon/2} }{2\lambda k} \le p((r+i+d) \; \text{mod } t) 
  \le \frac{\alpha^{1+ \epsilon/2}  }{ \lambda k}$.
  {In other words, all probabilities of the shifted column are relatively small within the considered positions, and thus, from the definition of blocks, which are longer than $\tau_2(n,k)-\tau_1(n,k)$ for the considered probability, all probabilities are within a constant factor from each other.}

    \item \textbf{Case 3:} There is some $\tau'\in (\tau_1(n,k),\tau_2(n,k)]$ such that: for all $r$ in the interval $\tau_1(n,k) \leq r \leq \tau'$, we have 
$p((r+i+d)\; \text{mod } t) < \frac{\alpha^{1+\epsilon/2} }{2\sqrt{c}\; n}$, and for all $\tau' \le r \le \tau_2(n,k)$ we have $p((r+i+d)\; \text{mod } t) > \frac{\alpha^{1+\epsilon/2} }{\sqrt{c}\; k} \ln({4}/{\alpha})$.
{In other words, all probabilities of the shifted column are large starting from some intermediate position $\tau'$, and smallest possible before that.}

\end{itemize}

The following lemmas establish upper and lower bounds on the expectation of $X^d_{j,j'}(i)$ 
{for each of the fist two cases; case 3 is a simple combination of case 2 in some prefix of $[\tau_1(n,k),\tau_2(n,k)]$ and case 1 in the remaining suffix.}

\begin{lemma}\label{case1}
    In case 1, we have
    \[
    \frac{\ln n \ln({4}/{\alpha})}{2} 
    < E[X^d_{j,j'}(i)] <
    \ln n \left( 3 + \ln\left(\frac{64}{\alpha^{2 +\epsilon }}  \right) \right).
    \]
\end{lemma}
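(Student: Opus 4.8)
The plan is to handle the two inequalities separately, exploiting throughout the block structure of the construction: $p(r)=1/\sqrt{\lfloor r/\ln n\rfloor+1}$ is constant on each block of $\ln n$ consecutive positions and is weakly decreasing in $r$.

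\textbf{Lower bound.} I would start from $E[X^d_{j,j'}(i)]=\sum_{r=\tau_1(n,k)}^{\tau_2(n,k)} p(r)\,p\bigl((r+i+d)\bmod t\bigr)$ and invoke the defining hypothesis of Case~1, namely $p\bigl((r+i+d)\bmod t\bigr)>\frac{\alpha^{1+\epsilon/2}}{\sqrt c\,k}\ln(4/\alpha)$ for every $r$ in the range, to pull this constant out of the sum. What remains is $\sum_{r=\tau_1(n,k)}^{\tau_2(n,k)} p(r)$, which is precisely $E[Y^{\perp}_j]$, so Lemma~\ref{Edown} gives $\sum_r p(r)>\bigl(\tfrac{1}{\alpha^{1+\epsilon/2}}-\tfrac18\bigr)\sqrt c\,k\ln n$. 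Multiplying the two estimates, the $\sqrt c\,k$ factors cancel and one obtains $E[X^d_{j,j'}(i)]>\ln n\,\ln(4/\alpha)\bigl(1-\tfrac{\alpha^{1+\epsilon/2}}{8}\bigr)\ge\tfrac78\ln n\,\ln(4/\alpha)>\tfrac12\ln n\,\ln(4/\alpha)$, where I used $0<\alpha\le1$ so that $\alpha^{1+\epsilon/2}\le1$. This is the claimed lower bound.

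\textbf{Upper bound.} Here the natural tool is the rearrangement inequality. Since $p$ is weakly decreasing, the finite sequence $\bigl(p(\tau_1(n,k)),\dots,p(\tau_2(n,k))\bigr)$ is already sorted in decreasing order, so replacing the companion sequence $\bigl(p((r+i+d)\bmod t)\bigr)_r$ by its decreasing rearrangement $q^{\downarrow}$ can only increase the inner product: $E[X^d_{j,j'}(i)]\le\sum_{m=0}^{L}p(\tau_1(n,k)+m)\,q^{\downarrow}_m$, with $L=\tau_2(n,k)-\tau_1(n,k)$. The key point is that $r\mapsto(r+i+d)\bmod t$ is injective on any window of at most $t$ consecutive integers, so the $L+1\le t$ positions it produces are distinct and each block of $\ln n$ positions supplies at most $\ln n$ of them; hence every value $1/\sqrt{b+1}$ appears with multiplicity at most $\ln n$ in the multiset $\{q(r)\}$, which forces $q^{\downarrow}_m\le 1/\sqrt{\lfloor m/\ln n\rfloor+1}$ (the number of entries strictly larger than this threshold is at most $\lfloor m/\ln n\rfloor\cdot\ln n\le m$). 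Substituting $p(\tau_1(n,k)+m)=1/\sqrt{(c/64)k^2+\lfloor m/\ln n\rfloor+1}$ — using that $\tau_1(n,k)$ is a multiple of $\ln n$, the non‑integer case shifting the block index by at most one and being absorbed into the slack of the constant — and grouping the $m$'s by $b=\lfloor m/\ln n\rfloor$, each block contributing at most $\ln n$ equal terms, I arrive at $E[X^d_{j,j'}(i)]\le\ln n\sum_{b=0}^{B}\frac{1}{\sqrt{(b+1)\bigl((c/64)k^2+b+1\bigr)}}$ with $B=(c/\alpha^{2+\epsilon}-c/64)k^2$. (For a minimal backward shift this is exactly the ``aligning the largest values'' situation the rearrangement inequality is designed to capture; the bound above is in fact insensitive to the shift $i$.)

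\textbf{Finishing the estimate.} Split the last sum at $b_0=(c/64)k^2$. For $b\le b_0$, bound $(c/64)k^2+b+1\ge(c/64)k^2$, so the summand is at most $\frac{8}{\sqrt c\,k}\cdot\frac{1}{\sqrt{b+1}}$, and $\sum_{j=1}^{N}1/\sqrt j<2\sqrt N$ with $N=b_0+1$ shows this part contributes less than $3$ (the correction beyond $2$ also absorbing the block-boundary rounding, which is why the statement carries $3$ rather than $2$). For $b>b_0$, bound $(c/64)k^2+b+1\ge b+1$, so the summand is at most $\frac{1}{b+1}$, and $\sum_{b=b_0+1}^{B}\frac1{b+1}<\ln\frac{B+1}{b_0+1}<\ln\frac{64}{\alpha^{2+\epsilon}}$ since $\frac{B+1}{b_0+1}<\frac{(c/\alpha^{2+\epsilon})k^2}{(c/64)k^2}$. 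Adding the two pieces gives $E[X^d_{j,j'}(i)]<\ln n\bigl(3+\ln(64/\alpha^{2+\epsilon})\bigr)$, completing the proof. I expect the main obstacle to be the rearrangement step — specifically, pinning down the multiplicity bound behind $q^{\downarrow}_m\le1/\sqrt{\lfloor m/\ln n\rfloor+1}$, which is what decouples the estimate from the particular shift — together with the careful bookkeeping at block boundaries; the two elementary summation bounds are routine once the sum has been rewritten in block form.
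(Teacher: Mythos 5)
Your proof is correct. The lower bound and the core of the upper bound follow the paper's route exactly: for the lower bound you pull out the Case~1 threshold and invoke Lemma~\ref{Edown}, and for the upper bound you invoke the rearrangement inequality and then bound the decreasing rearrangement $q^{\downarrow}$ by $p(\cdot)$ via the block-multiplicity argument; the paper states this last inequality ($y^{i+d}_r\le p(r-\tau_1(n,k))$) without justification, whereas you supply the correct injectivity-plus-multiplicity reasoning, which is a nice touch. The only genuine divergence from the paper is the final estimate of $\sum_{b}\frac{1}{\sqrt{(b+1)\,(b+1+(c/64)k^2)}}$: the paper encapsulates this in Lemma~\ref{sum}, which proceeds by Riemann-sum comparison and the closed-form antiderivative $\int\frac{dx}{\sqrt{x(x-a)}}=2\ln\bigl|\sqrt{x-a}+\sqrt{x}\bigr|$ (Lemma~\ref{integral}), while you split the sum at $b_0=(c/64)k^2$ and bound the two pieces by $\sum 1/\sqrt{j}<2\sqrt{N}$ and $\sum 1/(b+1)<\ln(\cdot)$ respectively. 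Your split is more elementary — it avoids the integral substitution entirely — at the mild cost of needing to argue why the two pieces together stay below $3+\ln(64/\alpha^{2+\epsilon})$, which you handle by noting the first piece lands near $2$ with the slack up to $3$ absorbing block-boundary rounding; the paper's Lemma~\ref{sum} is slightly cleaner because it produces the $3+\ln(h_2+1)-\ln(h_1+1)$ form in one shot. Both are sound and yield the stated constants.
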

\begin{proof}
Let us begin with the lower bound. We have,
\begin{eqnarray*}
   E[X^d_{j,j'}(i)] 
          & > & \sum_{r= \tau_1(n,k) }^{\tau_2(n,k)} p(r)\; \frac{\alpha^{1+\epsilon/2} }{\sqrt{c}\; k}\ln({4}/{\alpha})  \\
          & = & \frac{\alpha^{1+\epsilon/2} }{\sqrt{c}\; k}\ln({4}/{\alpha}) \sum_{r= \tau_1(n,k) }^{\tau_2(n,k)} p(r)  \\
          & > & \frac{\alpha^{1+\epsilon/2} }{\sqrt{c}\; k}\ln({4}/{\alpha})  \cdot E[Y^{\perp}_j]   \\           
          & > & \frac{\alpha^{1+\epsilon/2} }{\sqrt{c}\; k}\ln({4}/{\alpha})  
                 \left (\frac{1}{\alpha^{1 + \epsilon/2}} - \frac{1}{8} \right )\sqrt{c} k \ln n 
          \;\; \text{ (by Lemma \ref{Edown})}\\
          & > & \frac{ \ln n \ln({4}/{\alpha})}{2 }.
\end{eqnarray*} 
The upper bound is more complex because in this case we need to consider all possible shifts of the second 
column, $j'$. To address this, we will use the rearrangement inequality.

Let $P^{i+d} = \{ p((r+i+d)\; \text{mod } t) \mid \tau_1(n,k) \leq r \leq \tau_2(n,k) \}$, and 
let $y^{i+d}_{\tau_1(n,k)}, \ldots, y^{i+d}_{\tau_2(n,k)}$ 
denote the elements of $P^{i+d}$ arranged in decreasing order. Formally, 
for $d \in \{-1,0,1\}$, we have defined a set
$\{y^{i+d}_{\tau_1(n,k)}, \ldots , y^{i+d}_{\tau_2(n,k)}\} = P^{i+d}$,
where $y^{i+d}_{\tau_1(n,k)} \ge \cdots \ge y^{i+d}_{\tau_2(n,k)}$.
By the rearrangement inequality (see Theorem \ref{th:hardy}) we have that 
\begin{eqnarray*}
	\sum_{r=\tau_1(n,k)}^{\tau_2(n,k)} p(r)\; p\left((r+i+d)\; \text{mod } t\right) 
	& \le & \sum_{r=\tau_1(n,k)}^{\tau_2(n,k)} p(r)\; y^{i+d}_r 
 \ .
\end{eqnarray*}
Moreover, since for $\tau_1(n,k) \le r \le \tau_2(n,k)$, 
$y^d_{r} \le p(r - \tau_1(n,k))$, then for $d = -1, 0, 1$:
\begin{eqnarray*}
	E[X^d_{j,j'}(i)] 
 	& \le & \sum_{r=\tau_1(n,k)}^{\tau_2(n,k)} p(r)\; y^{i+d}_r \\
	& \le &  \sum_{r=\tau_1(n,k)}^{\tau_2(n,k)} p(r)\; p(r-\tau_1(n,k)) \\
	& =  & \ln n \sum_{b = (c / 64  ) k^2 }^{(c/\alpha^{2 + \epsilon}) k^2} 
	\frac{1}{\sqrt{b + 1}} \cdot \frac{1}{\sqrt{b - (c / 64 ) k^2 + 1}} \\
	& =  & \ln n \sum_{b = (c / 64 ) k^2 +1}^{(c/\alpha^{2 + \epsilon} ) k^2 +1} 
	\frac{1}{\sqrt{b }} \cdot \frac{1}{\sqrt{b - (c / 64 ) k^2 }} \\ 
	& <  &  \ln n \left( 3 + \ln\left(\frac{c k^2}{\alpha^{2 + \epsilon}} +1 \right) 
            - \ln\left(  \frac{c k^2}{64  } + 1 \right) \right) 
               \,\,\,\,\,\,\,\, \text{ (by Lemma \ref{sum})}\\           
    & < &   \ln n \left( 3 + \ln\left(\frac{64}{\alpha^{2 +\epsilon }}  \right) \right)    
 \ .
\end{eqnarray*}

\end{proof}

\begin{lemma}\label{case2}
In case 2, we have that for $\lambda = \frac{\sqrt{c}}{\ln({4}/{\alpha})}, 
    \frac{\sqrt{c}}{\ln({4}/{\alpha})}+1 ,\ldots, 
    \frac{\sqrt{c} n}{2 k}$:
    \[
    \frac{\sqrt{c} \ln n }{4\lambda}
    < E[X^d_{j,j'}(i)] <
     \frac{2 \sqrt{c} \ln n }{\lambda }.
    \]
\end{lemma}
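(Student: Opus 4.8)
The plan is to estimate $E[X^d_{j,j'}(i)]=\sum_{r=\tau_1(n,k)}^{\tau_2(n,k)} p(r)\,p\big((r+i+d)\bmod t\big)$ by exploiting the defining feature of Case~2: the ``shifted'' factor $p\big((r+i+d)\bmod t\big)$ is squeezed into the narrow dyadic band $[\alpha^{1+\epsilon/2}/(2\lambda k),\ \alpha^{1+\epsilon/2}/(\lambda k)]$ uniformly over every $r$ in the summation range. So the first step is simply to pull these two constant bounds out of the sum, which gives
\[
\frac{\alpha^{1+\epsilon/2}}{2\lambda k}\sum_{r=\tau_1(n,k)}^{\tau_2(n,k)} p(r)
\;\le\; E[X^d_{j,j'}(i)]\;\le\;
\frac{\alpha^{1+\epsilon/2}}{\lambda k}\sum_{r=\tau_1(n,k)}^{\tau_2(n,k)} p(r),
\]
and the second step is to recognise that the remaining sum is exactly $E[Y^{\perp}_j]$, the expected weight of the lower segment $\mathbf{[c_j]}_{[\tau_1(n,k),\tau_2(n,k)]}$, so that both inequalities of the lemma reduce to substituting the estimates for $E[Y^{\perp}_j]$ from Lemma~\ref{Edown}.

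For the lower bound I would use $E[Y^{\perp}_j]>(1/\alpha^{1+\epsilon/2}-1/8)\sqrt c\,k\ln n$ as stated; this yields $E[X^d_{j,j'}(i)]\ge (\sqrt c\ln n/(2\lambda))(1-\alpha^{1+\epsilon/2}/8)\ge 7\sqrt c\ln n/(16\lambda)>\sqrt c\ln n/(4\lambda)$, the middle inequality using only $\alpha\le 1$. The upper bound is the one requiring care: substituting the upper estimate of Lemma~\ref{Edown} verbatim (leading constant $4$) would give only $4\sqrt c\ln n/\lambda$, a factor of two short of the claimed $2\sqrt c\ln n/\lambda$. So the third step is to use the sharper estimate $E[Y^{\perp}_j]<2\sqrt c\,k\ln n/\alpha^{1+\epsilon/2}$, which already appears in the proof of Lemma~\ref{Edown} in the form $E[Y^{\perp}_j]<2\ln n\big(\sqrt{(c/\alpha^{2+\epsilon})k^2+1}-\sqrt{(c/64)k^2}\big)$: combining concavity of the square root ($\sqrt{A+1}\le\sqrt A+1/(2\sqrt A)$) with the subtracted term $\sqrt{(c/64)k^2}=\sqrt c\,k/8$ shows the correction is absorbed as soon as $4\alpha^{1+\epsilon/2}<ck^2$, which holds for every $c\ge 1$ since $k\ge 2$ and $\alpha\le 1$. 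With this input, $E[X^d_{j,j'}(i)]\le(\alpha^{1+\epsilon/2}/(\lambda k))\cdot(2\sqrt c\,k\ln n/\alpha^{1+\epsilon/2})=2\sqrt c\ln n/\lambda$, as required.

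The main obstacle is thus arithmetic, not conceptual: one must make sure the leading constant of the upper estimate for $E[Y^{\perp}_j]$ is genuinely below $2/\alpha^{1+\epsilon/2}$ rather than merely $O(1/\alpha^{1+\epsilon/2})$, which forces me to reopen the internal computation of Lemma~\ref{Edown} (through the telescoping bounds in~(\ref{ineqs})) instead of quoting its statement as a black box; the factoring step and the lower bound are then immediate. I would also note that the bound is asserted per value of $\lambda$ ranging over $\{\sqrt c/\ln(4/\alpha),\dots,\sqrt c\,n/(2k)\}$, but since the hypothesis of Case~2 already fixes the two-sided estimate on $p\big((r+i+d)\bmod t\big)$ with that very $\lambda$, nothing further is needed here; the union bound over $\lambda$ is deferred to the assembly of the three cases in Lemma~\ref{CWIup}.
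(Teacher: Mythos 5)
Your proof is correct and follows the same high-level plan as the paper: factor the Case~2 band estimate on $p\big((r+i+d)\bmod t\big)$ out of the sum, recognize the remaining sum as $E[Y^{\perp}_j]$, and invoke the estimates surrounding Lemma~\ref{Edown}. Your instinct to reopen the internal computation for the upper bound is justified, and in fact you have (implicitly) caught a slip in the paper's own argument. In the paper's chain for the upper bound, the step
\[
\frac{\alpha^{1+\epsilon/2}}{\lambda k}\sum_{r=\tau_1(n,k)}^{\tau_2(n,k)} p(r)
\;=\;
\frac{\alpha^{1+\epsilon/2}}{2\lambda k}\, E[Y^{\perp}_j]
\]
is wrong (the spurious factor $1/2$ appears to be copied over from the lower-bound derivation); since $\sum_r p(r)$ is exactly $E[Y^{\perp}_j]$, the correct coefficient is $\tfrac{\alpha^{1+\epsilon/2}}{\lambda k}$, and quoting Lemma~\ref{Edown}'s stated upper bound $E[Y^{\perp}_j]<4\,\alpha^{-(1+\epsilon/2)}\sqrt{c}\,k\ln n$ then only yields $4\sqrt{c}\ln n/\lambda$, not the claimed $2\sqrt{c}\ln n/\lambda$. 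Your repair, going back inside the proof of Lemma~\ref{Edown} to the bound $E[Y^{\perp}_j]<2\ln n\bigl(\sqrt{(c/\alpha^{2+\epsilon})k^2+1}-\sqrt{(c/64)k^2}\bigr)$ and absorbing the $+1$ via $\sqrt{A+1}\le\sqrt{A}+1/(2\sqrt A)$ against the subtracted $\sqrt{c}\,k/8$ term, is sound: the condition $4\alpha^{1+\epsilon/2}<ck^2$ indeed holds under the paper's standing hypotheses ($c$ large, $k\ge 2$, $\alpha\le 1$), and it gives the sharper $E[Y^{\perp}_j]<2\,\alpha^{-(1+\epsilon/2)}\sqrt{c}\,k\ln n$, from which $E[X^d_{j,j'}(i)]\le 2\sqrt{c}\ln n/\lambda$ follows. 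Your lower-bound verification ($7\sqrt{c}\ln n/(16\lambda)>\sqrt{c}\ln n/(4\lambda)$) also matches the paper. Note for completeness that even without your sharpening, the weaker $4\sqrt{c}\ln n/\lambda$ would still go through the downstream use in Lemma~\ref{CWIup} with only a constant change, so the paper's overall conclusion is unaffected; but your version establishes the lemma exactly as stated.
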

\begin{proof}
In this case, we have,
\begin{eqnarray*}
E[X^d_{j,j'}(i)] 
          & \ge & \sum_{r= \tau_1(n,k) }^{\tau_2(n,k)} p(r)\; \frac{\alpha^{1+\epsilon/2} }{2\lambda k} \\
          & = & \frac{\alpha^{1+\epsilon/2} }{2\lambda k} \sum_{r= \tau_1(n,k) }^{\tau_2(n,k)} p(r)  \\
          & = & \frac{\alpha^{1+\epsilon/2} }{2\lambda k}\cdot E[Y^{\perp}_j]   \\           
          & > & \frac{\alpha^{1+\epsilon/2} }{2\lambda k} \left (\frac{1}{\alpha^{1 + \epsilon/2}} - \frac{1}{8} \right )\sqrt{c} k \ln n 
          \;\; \text{ (by Lemma \ref{Edown})}\\
          & > & \frac{\sqrt{c} \ln n }{4\lambda }.
\end{eqnarray*}
Analogously,
\begin{eqnarray*}
   E[X^d_{j,j'}(i)] 
          & \le & \sum_{r= \tau_1(n,k) }^{\tau_2(n,k)} p(r)\; \frac{\alpha^{1+\epsilon/2} }{\lambda k} \\
          & = & \frac{\alpha^{1+\epsilon/2} }{\lambda k} \sum_{r= \tau_1(n,k) }^{\tau_2(n,k)} p(r)  \\
          & = & \frac{\alpha^{1+\epsilon/2} }{2\lambda k}\cdot E[Y^{\perp}_j]  \\
          & < & \frac{\alpha^{1+\epsilon/2} }{2\lambda k}\cdot 4\left (\frac{1}{\alpha^{1 + \epsilon/2}}  \right )\sqrt{c} k \ln n\\
          & = & \frac{2 \sqrt{c} \ln n }{\lambda }.
\end{eqnarray*}  
\end{proof}

The following lemma guarantees that the Collision Weight Inequality will be satisfied with high probability.

\begin{lemma}\label{CWIup}
The probability that the Collision Weight Inequality
does not hold is at most 
$$\frac{4}{c^2}\cdot n^{-8\ln(4/\alpha)} .$$
\end{lemma}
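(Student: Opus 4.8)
The plan is to bound, for the fixed $k$, the pair $(c_j,c_{j'})$ and the shift $i$, the probability of the complement of the Collision Weight Inequality~(\ref{conjineq2}), i.e. of the event $\{X_{j,j'}(i) > \lfloor (\alpha\, Y^{\perp}_j - 1)/(k-1)\rfloor\}$, where $Y^{\perp}_j$ and $X_{j,j'}(i)$ are the random variables introduced in Subsections~\ref{WIsub} and~\ref{CWIsub}. First I would decouple the (random) right-hand side from the left-hand side: set $w_2 = \frac{7}{10}\big(\frac{1}{\alpha^{1+\epsilon/2}}-\frac18\big)\sqrt{c}\,k\ln n$ (the threshold appearing in Lemma~\ref{lepa1}) and $\theta = \lfloor (\alpha w_2 - 1)/(k-1)\rfloor$. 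Since $x\mapsto\lfloor(\alpha x-1)/(k-1)\rfloor$ is non-decreasing, the bad event is contained in $\{Y^{\perp}_j < w_2\}\cup\{X_{j,j'}(i) > \theta\}$, and since $X_{j,j'}(i)\le X^{-1}_{j,j'}(i)+X^{0}_{j,j'}(i)+X^{1}_{j,j'}(i)$, the second event is contained in $\bigcup_{d\in\{-1,0,1\}}\{X^{d}_{j,j'}(i) > \theta/3\}$. By Lemma~\ref{lepa1}, $\Pr(Y^{\perp}_j < w_2) < \frac{1}{c^2}\,n^{-8\ln(4/\alpha)}$; so it remains to show $\Pr(X^{d}_{j,j'}(i) > \theta/3) \le \frac{1}{c^2}\,n^{-8\ln(4/\alpha)}$ for each $d\in\{-1,0,1\}$, and a union bound over these four events then yields the claimed $\frac{4}{c^2}\,n^{-8\ln(4/\alpha)}$.

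For each $X^d$-tail I would apply the multiplicative Chernoff bound, which is legitimate because, for a fixed $d$, $X^d_{j,j'}(i) = \sum_{r=\tau_1(n,k)}^{\tau_2(n,k)} \mathbf{1}[\mathbf{[c_j]}\text{ has a }1\text{ at }r]\cdot\mathbf{1}[\mathbf{[c_{j'}]}\text{ has a }1\text{ at }(r+i+d)\bmod t]$ is a sum of independent Bernoulli variables (the indices $r$ and $(r+i+d)\bmod t$ are both injective functions of $r$, so no bit of $\mathbf{[c_j]}$ or of $\mathbf{[c_{j'}]}$ is reused across terms). Two quantitative inputs are needed. First, from the form of $w_2$ and $\alpha\le 1$ one gets $\alpha w_2 = \Omega\big(\alpha^{-\epsilon/2}\sqrt{c}\,k\ln n\big)$, hence (using $k/(k-1)\ge 1$ and absorbing additive constants into $c$) $\theta = \Omega\big(\alpha^{-\epsilon/2}\sqrt{c}\,\ln n\big)$ uniformly over $2\le k\le n$. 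Second, in each of the three cases of Subsection~\ref{CWIsub} one has $E[X^d_{j,j'}(i)] = O(\ln n\,\ln(4/\alpha))$: Case~1 gives $E[X^d_{j,j'}(i)] < \ln n\,(3 + \ln(64/\alpha^{2+\epsilon}))$ by Lemma~\ref{case1}, with $\ln(64/\alpha^{2+\epsilon}) = O(\ln(4/\alpha))$ since $\epsilon$ is fixed; Case~2 gives $E[X^d_{j,j'}(i)] < 2\ln n\,\ln(4/\alpha)$ by Lemma~\ref{case2} because $\lambda\ge\sqrt{c}/\ln(4/\alpha)$; and Case~3 splits $[\tau_1(n,k),\tau_2(n,k)]$ into a low-probability prefix whose contribution is at most $\frac{\alpha^{1+\epsilon/2}}{2\sqrt{c}\,n}\,E[Y^{\perp}_j] \le 2\ln n$ (by $k\le n$ and Lemma~\ref{Edown}) plus a high-probability suffix whose contribution is bounded as in Case~1.

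Combining the two inputs with the elementary inequality $\alpha^{-\epsilon/2}\ge\frac{e\epsilon}{2}\ln(1/\alpha)$, the ratio $(\theta/3)/E[X^d_{j,j'}(i)]$ is $\Omega(\epsilon\sqrt{c})$, an arbitrarily large constant once $c$ is taken large enough in terms of $\epsilon$; hence, writing $\mu_d = E[X^d_{j,j'}(i)]$, the Chernoff bound gives $\Pr(X^d_{j,j'}(i) > \theta/3) \le \big(e\mu_d/(\theta/3)\big)^{\theta/3} \le \big(C'/(\epsilon\sqrt{c})\big)^{\theta/3}$ for an absolute constant $C'$. Since $\theta/3 = \Omega(\alpha^{-\epsilon/2}\sqrt{c}\,\ln n)$ while $8\ln(4/\alpha)\ln n + 2\ln c = O(\alpha^{-\epsilon/2}\ln n)$ (again by $\alpha^{-\epsilon/2}\ge\frac{e\epsilon}{2}\ln(1/\alpha)$), for $c$ sufficiently large (depending only on $\epsilon$) this last quantity is at most $\frac{1}{c^2}\,n^{-8\ln(4/\alpha)}$, which finishes the argument. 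I expect the main obstacle to be precisely this final comparison: the floor-and-$(k-1)$-division in~(\ref{conjineq2}), even at its worst over $k$, must still leave a gap above the expected collision weight wide enough for a Chernoff bound that beats $n^{-8\ln(4/\alpha)}$, and this succeeds only because the $\alpha^{-\epsilon/2}$ factor in $\theta$ dominates the $\ln(1/\alpha)$ factor carried by the Case~1/Case~3 expectation bound — exactly the slack provided by the extra $\epsilon$ in the elongation $\tau_2(n,k) = (c/\alpha^{2+\epsilon})k^2\ln n$.
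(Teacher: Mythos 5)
Your proposal is correct and follows the paper's proof structurally step for step: the same decomposition of the failure event into $\{Y^\perp_j < w_2\}$ plus the three events $\{X^d_{j,j'}(i) > \theta/3\}$ for $d\in\{-1,0,1\}$, the same appeal to Lemma~\ref{lepa1} for the first, the same casewise expectation bounds from Lemmas~\ref{case1} and~\ref{case2}, and the same four-way union bound producing the factor $4$. The one place you genuinely deviate is in how the Chernoff tail is instantiated, and your choice is actually the more robust one. The paper fixes $\delta=\ln c$ and bounds $\exp\!\big(-\tfrac{\delta^2 E[X^d]}{2+\delta}\big)$ from the \emph{lower} bound on $E[X^d]$; in Case~2 this produces the chain $\exp\!\big(-\tfrac{\ln c}{2}\tfrac{\sqrt{c}\ln n}{4\lambda}\big)\le \exp\!\big(-\tfrac{\ln c\,\ln n\,\ln(4/\alpha)}{8}\big)$, which requires $\sqrt{c}/\lambda\ge\ln(4/\alpha)$ and therefore holds only at the smallest admissible $\lambda=\sqrt{c}/\ln(4/\alpha)$, not across the whole range $\lambda\le\sqrt{c}n/(2k)$. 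Your form $\Pr(X^d>T)\le(e\mu_d/T)^T$ with $T=\theta/3$ fixed, by contrast, needs only the \emph{upper} bound on $\mu_d=E[X^d]$ and strengthens as $\mu_d$ shrinks (equivalently, it selects the optimal $\delta$ for each $\lambda$), so it covers Case~2 uniformly; it also lets you treat Case~3 with a single Chernoff on the total expectation rather than two Chernoffs plus a union bound. In short: same route, but your Chernoff instantiation closes a small uniformity gap in the paper's Case~2 argument at the cost of being slightly less explicit about constants.
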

\begin{proof}
  Let $w = \frac{7}{10}\left(\frac{1}{\alpha^{1+\epsilon/2}} - \frac{1}{8} \right) \sqrt{c} k \ln n$.
The Collision Weight Inequality is guaranteed to be satisfied if the following two events occur
simultaneously:
\[
       X_{j,j'}(i) \le \left \lfloor \frac{\alpha\, w - 1}{k-1} \right \rfloor,
\]
\[
      \left \lfloor \frac{\alpha\, w - 1}{k-1} \right \rfloor 
      \le \left\lfloor \frac{\alpha  \; Y^{\perp}_j  - 1}{k - 1} \right\rfloor. 
\]
Hence, it is not satisfied with probability at most 
\begin{eqnarray}
 & &\Pr\left(X_{j,j'}(i) > \left \lfloor \frac{\alpha\, w - 1}{k-1} \right \rfloor \right) + 
    \Pr\left(\left \lfloor \frac{\alpha\, w - 1}{k-1} \right \rfloor 
    < \left\lfloor \frac{\alpha  \; Y^{\perp}_j  - 1}{k - 1} \right\rfloor\right) \nonumber \\
     & < & 
    3 \Pr\left(X^d_{j,j'}(i) > \frac{1}{3} \left \lfloor \frac{\alpha\, w - 1}{k-1} \right \rfloor \right) +
    \frac{1}{c^2} \cdot n^{-8\ln(4/\alpha)}, \label{add}
\end{eqnarray}
where the last inequality follows by the second inequality of Lemma \ref{lepa1}. 

To complete the proof, it remains to analyze the probability that 
$X^d_{j,j'}(i) > \frac{1}{3} \left \lfloor \frac{\alpha\, w - 1}{k-1} \right \rfloor$.

Substituting $w$ to the right-hand-side of this inequality, we have
\begin{eqnarray}
\frac{1}{3}\left \lfloor \frac{\alpha\, w - 1}{k-1} \right \rfloor 
    & > & 
    \frac{1}{3} \cdot \alpha \left( \frac{7}{10} \left( \frac{1}{\alpha^{1+\epsilon/2}} - \frac{1}{8} \right) \sqrt{c} \ln n \right) \nonumber \\
    & = &
    \frac{1}{3} \frac{7}{10} \left( \frac{1}{\alpha^{\epsilon/2}} - \frac{\alpha}{8} \right) \sqrt{c} \ln n \nonumber \\
    & > &
    \frac{1}{10} \left( \frac{1}{\alpha^{\epsilon/2}}  \right) \sqrt{c} \ln n. \label{llas} 
\end{eqnarray}

Because each entry of a column is independently set to either 1 or 0 in the random construction of the matrix, 
$X^d_{j,j'}(i)$ for $d = -1, 0, 1$ can be viewed as a sum of independent random 
variables. Therefore, we can utilize the Chernoff bound to assess the probability that $X^d_{j,j'}(i)$ 
exceeds its expected value. 
We need to analyze each of the three cases defined at the beginning of subsection \ref{CWIsub}.

\medskip
\textbf{Case 1.}
Let $m = (1 + \delta)  \ln n \left(3 + \ln\left(\frac{64}{\alpha^{2+\epsilon }}  \right) \right)$, with
$\delta = \ln c$. It holds that
\[
   m \le \frac{1}{10} \left( \frac{1}{\alpha^{\epsilon/2}}  \right) \sqrt{c} \ln n
   < \frac{1}{3}\left \lfloor \frac{\alpha\, w - 1}{k-1} \right \rfloor \text{ (by (\ref{llas}))}.
\]
Applying the upper bound of Lemma \ref{case1}, we also have
\[
    (1 + \delta) E[X^d_{j,j'}(i)] < m \le \frac{1}{3}\left \lfloor \frac{\alpha\, w - 1}{k-1} \right \rfloor .
\]
Therefore,
\begin{eqnarray*}
    \Pr\left(X^d_{j,j'}(i) > \frac{1}{3}\left \lfloor \frac{\alpha\, w - 1}{k-1} \right \rfloor \right)
    & \le &
	\Pr\left(X^d_{j,j'}(i) \ge m \right) \\
	& = & \Pr\left(X^d_{j,j'}(i)
	\ge (1+\delta)  E[X^d_{j,j'}(i)] \right)   \\
    & \le & \left ( \frac{e^{\delta}}{(1+\delta)^{1+\delta}} \right )^{E[X^d_{j,j'}(i)]} \;\;\; 
    \text{(see, for example, Eq. (4.1) in \cite{MitUpf})} \\ 
    & \le & \exp \left (-\frac{\delta^2 E[X^d_{j,j'}(i)]}{2 + \delta} \right ) \;\;\; 
    \left(\text{because $\ln\left(1+\delta \right) \ge \frac{2\delta}{2+\delta}$}\right) \\
    & \le & \exp \left (-\frac{\ln c\, E[X^d_{j,j'}(i)]}{2} \right ) \\ 
    & \le & \exp \left (-\frac{\ln c \ln n \ln (4/\alpha)}{4} \right ) \;\;\; \text{(by the lower bound of Lemma \ref{case1})}\\      
	& = & \frac{1}{c^2}\cdot n^{-8\ln(4/\alpha)} 
 \ ,
\end{eqnarray*}
where the last inequalities hold for sufficiently large $c$.
Plugging this bound into (\ref{add}) the lemma is proved for case~1.

\medskip
\textbf{Case 2.}
Let $\lambda$ be any value in $\left\{ \frac{\sqrt{c}}{\ln({4}/{\alpha})}, \frac{\sqrt{c}}{\ln({4}/{\alpha})} +1 ,\ldots, \frac{\sqrt{c} n}{2 k} \right\}$.
Let $m = (1 + \delta) \frac{2 \sqrt{c} \ln n}{\lambda}$, where $\delta = \ln{c}$. 
We can observe that for any $\epsilon > 0$ the following inequality holds
for every value of $\lambda$ in 
$\left\{ \frac{\sqrt{c}}{\ln({4}/{\alpha})}, \frac{\sqrt{c}}{\ln({4}/{\alpha})} +1 ,\ldots, \frac{\sqrt{c} n}{2 k} \right\}$:
\[
   m \le \frac{1}{10} \left( \frac{1}{\alpha^{\epsilon/2}}  \right) \sqrt{c} \ln n.
\]
Applying the upper bound of Lemma \ref{case2} and (\ref{llas}) 
respectively on the left-hand-side and the right-hand-side
of this inequality, we also get
\[
 (1+\delta) E[X^d_{j,j'}(i)] < m \le \frac{1}{3}\left \lfloor \frac{\alpha\, w - 1}{k-1} \right \rfloor .
\]
Therefore,
\begin{eqnarray*}
    \Pr\left(X^d_{j,j'}(i)  >  \frac{1}{3}\left \lfloor \frac{\alpha\, w - 1}{k-1} \right \rfloor \right) 
    & \le & \Pr(X^d_{j,j'}(i) > m) \\
    & \le & \Pr(X^d_{j,j'}(i) > (1 + \delta) E[X^d_{j,j'}(i)] ) \\
    & \le & \left ( \frac{e^{\delta}}{(1+\delta)^{1+\delta}} \right )^{E[X^d_{j,j'}(i)]} \;\;\; 
    \text{(see, for example, Eq. (4.1) in \cite{MitUpf})} \\     
    &\le & \exp\left(-\frac{\delta^2 E[X^d_{j,j'}(i)]}{2 + \delta}\right) 
    \;\;\; \left(\text{because $\ln\left(1+\delta \right) \ge \frac{2\delta}{2+\delta}$}\right)  \\
    &\le & \exp\left(-\frac{\ln{c}}{2} \frac{\sqrt{c}\ln n}{4\lambda} \right) 
    \;\;\; \text{(by the lower bound of Lemma \ref{case2})} \\   
    &\le & \exp\left(-\frac{\ln{c}\ln n \ln(4/\alpha)}{8} \right) \\
    & \le &  \frac{1}{c^2} \cdot n^{-8\ln(4/\alpha)},
\end{eqnarray*}
where the last steps hold for sufficiently large $c$. Plugging this bound into (\ref{add}) the lemma
is proved for case~2.

\medskip
\textbf{Case 3.}
{It follows directly from applying the above analysis of case 2 for $\lambda=\frac{\sqrt{c}n}{2k}$ to the interval $[\tau_1(n,k),\tau')$ and the analysis of case 1 to the interval $[\tau',\tau_2(n,k)]$.
Then, we take the union bound of the two events.}
\end{proof}

Now, we are ready to complete the proof of Lemma \ref{lep}.

\medskip
\noindent
\textbf{Proof of Lemma \ref{lep}}.
The lemma follows by applying the union bound to the results of Lemmas~\ref{lep1}~and~\ref{CWIup}. 
%
%
\qed

\subsection{Auxiliary results}
\label{prel}

The primary advantage of the Collision Bound Property is its ability to reduce 
a property concerning subsets of $k$ matrix columns to one involving only pairs of columns. 
The following result from combinatorial mathematics plays a crucial role in 
computing the expected number of positions where two arbitrary columns both contain a 1 (collision). 
Specifically, it helps to address the challenge posed by arbitrary shifts of the columns during 
these computations. In our context, the sequences of real numbers 
$x_1 \le \cdots \le x_n$ and $y_1 \le \cdots \le y_n$ represent the probabilities of encountering 
a 1 at corresponding positions in two columns. The rearrangement inequality allows us to handle 
the probability of collisions independently of the misalignment caused by arbitrary shifts between 
the two sequences.

\begin{theorem}[Theorem 368 (Rearrangement Inequality, \cite{HLP1934})]\label{th:hardy}
	For any choice of real numbers
	\[
	x_1 \le \cdots \le x_n \;\;\text{ and }\;\;  y_1 \le \cdots \le y_n
	\]
	and every permutation $y_{\sigma(1)} \le \cdots \le y_{\sigma(n)}$ 
	of $y_1 \le \cdots \le y_n$,
	\begin{equation}\label{rearrange}
	x_1 y_n + \cdots + x_n y_1 
	\le x_1 y_{\sigma(1)} + \cdots + x_1 y_{\sigma(n)} 
	\le x_1 y_1 + \cdots + x_n y_n.	    
	\end{equation}
\end{theorem}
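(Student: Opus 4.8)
The plan is to prove both inequalities in \eqref{rearrange} by a single exchange (``bubble-sort'') argument on permutations. Fix the sequences $x_1 \le \cdots \le x_n$ and $y_1 \le \cdots \le y_n$, and for a permutation $\pi$ of $\{1,\dots,n\}$ write $f(\pi) = \sum_{i=1}^{n} x_i y_{\pi(i)}$. Since there are only finitely many permutations, $f$ attains a maximum; let $\pi^{\star}$ be a maximizer. The key observation is a two-term swap identity: for indices $i < j$, replacing $\pi$ by $\pi' = \pi \circ (i\,j)$ changes $f$ by exactly
\[
 f(\pi') - f(\pi) = x_i y_{\pi(j)} + x_j y_{\pi(i)} - x_i y_{\pi(i)} - x_j y_{\pi(j)} = (x_j - x_i)\,(y_{\pi(i)} - y_{\pi(j)}).
\]
If $\pi^{\star}$ had an ``inversion'', i.e. a pair $i<j$ with $y_{\pi^{\star}(i)} > y_{\pi^{\star}(j)}$, then because $x_j \ge x_i$ this difference is $\ge 0$, so the swapped permutation is still a maximizer but has strictly fewer inversions. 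Iterating, after finitely many swaps we reach a maximizer $\pi$ with no inversions, i.e. one with $y_{\pi(1)} \le \cdots \le y_{\pi(n)}$; by the sortedness of the $y$'s, this permutation realizes the value $\sum_i x_i y_i$. Hence $f(\sigma) \le f(\pi^{\star}) = \sum_i x_i y_i$ for every permutation $\sigma$, which is the right-hand inequality of \eqref{rearrange}.

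For the left-hand inequality I would run the mirror-image argument: take a minimizer $\pi_{\star}$ of $f$; if it had a pair $i<j$ with $y_{\pi_{\star}(i)} < y_{\pi_{\star}(j)}$, the same swap identity gives $f(\pi') - f(\pi_{\star}) = (x_j - x_i)(y_{\pi_{\star}(i)} - y_{\pi_{\star}(j)}) \le 0$, so the swap preserves minimality while reducing the number of ``anti-inversions''. Iterating yields a minimizer with $y_{\pi_{\star}(1)} \ge \cdots \ge y_{\pi_{\star}(n)}$, which realizes the value $\sum_i x_i y_{n+1-i}$. Therefore $\sum_i x_i y_{n+1-i} = f(\pi_{\star}) \le f(\sigma)$ for every $\sigma$, and combining the two parts proves \eqref{rearrange}. (Alternatively, the left inequality follows from the right one applied to the increasing sequence $-y_n \le \cdots \le -y_1$ together with the bijection $\tau \mapsto (i \mapsto n+1-\tau(i))$ on $S_n$; I would carry out the direct exchange proof as the main line and note this as a remark.)

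There is no deep obstacle here; the only points that need care are (i) termination of the monovariant argument — each swap at an inversion strictly decreases the finite, nonnegative quantity $\bigl|\{(i,j): i<j,\ y_{\pi(i)} > y_{\pi(j)}\}\bigr|$, so the process halts — and (ii) the treatment of ties: when some $y$-values coincide a ``sorted'' permutation need not be the identity, but any permutation $\pi$ with $y_{\pi(1)} \le \cdots \le y_{\pi(n)}$ still satisfies $\sum_i x_i y_{\pi(i)} = \sum_i x_i y_i$, so the conclusion is unaffected (and symmetrically for the minimizing case). These are the routine verifications; the structural heart of the proof is just the sign analysis of $(x_j - x_i)(y_{\pi(i)} - y_{\pi(j)})$.
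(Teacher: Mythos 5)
Your proof is correct, and it is the standard exchange (bubble-sort) argument for the rearrangement inequality: the key swap identity $f(\pi')-f(\pi)=(x_j-x_i)(y_{\pi(i)}-y_{\pi(j)})$ is right, the sign analysis is right, the inversion-count monovariant does strictly decrease (the $(i,j)$ pair itself loses an inversion and any pair $(i,\ell)$, $(\ell,j)$ with $i<\ell<j$ cannot gain one), and you handle ties and the minimizing case correctly. Note, however, that the paper does not give its own proof of this statement: Theorem~\ref{th:hardy} is quoted verbatim from Hardy, Littlewood and P\'olya~\cite{HLP1934} (their Theorem 368) and used as a black box in the proof of Lemma~\ref{case1}, so there is nothing in the paper to compare your argument against. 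One small remark on the statement as typeset: the middle term of \eqref{rearrange} should read $x_1 y_{\sigma(1)}+\cdots+x_n y_{\sigma(n)}$ (the paper has a typo, repeating $x_1$); you clearly read it with the intended meaning $\sum_i x_i y_{\sigma(i)}$.
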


The analysis of column-wise collisions is further facilitated by Lemma \ref{sum}, 
which aims to bound the sum of collisions between two shifted columns within specific rows 
of the matrix. 
The following lemma is preparatory to Lemma \ref{sum}.

\begin{lemma}\label{integral}
For $x > a$, it holds that
    \[
    \int \frac{1}{\sqrt{x(x-a)}}\, dx = 2 \ln\left|  \sqrt{x - a} + \sqrt{x} \right| + C.
    \]
\end{lemma}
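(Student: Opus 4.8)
The plan is to verify the claimed antiderivative by differentiating the right-hand side and checking that it matches the integrand; since both sides are defined and smooth on the open ray $x > a$ (where $x - a > 0$ and $x > 0$, so both square roots are real and positive and the argument of the logarithm is strictly positive), it suffices to confirm the derivative identity pointwise and then invoke the fact that two antiderivatives on an interval differ by a constant.

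First I would compute, using the chain rule,
\[
\frac{d}{dx}\,2\ln\!\left(\sqrt{x-a}+\sqrt{x}\right)
= \frac{2}{\sqrt{x-a}+\sqrt{x}}\left(\frac{1}{2\sqrt{x-a}}+\frac{1}{2\sqrt{x}}\right)
= \frac{1}{\sqrt{x-a}+\sqrt{x}}\cdot\frac{\sqrt{x}+\sqrt{x-a}}{\sqrt{x}\,\sqrt{x-a}} .
\]
The factor $\sqrt{x-a}+\sqrt{x}$ cancels, leaving $\dfrac{1}{\sqrt{x}\,\sqrt{x-a}}=\dfrac{1}{\sqrt{x(x-a)}}$, which is exactly the integrand. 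Hence the difference of the two sides of the claimed identity has zero derivative throughout $x>a$, so it equals a constant $C$, which is precisely the content of the lemma.

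If a constructive derivation is preferred over verification, an equivalent route is to complete the square, writing $x(x-a)=\big(x-\tfrac{a}{2}\big)^2-\big(\tfrac{a}{2}\big)^2$, substitute $x-\tfrac a2=\tfrac a2\cosh t$ with $t\ge 0$ (legitimate since $\tfrac{2x-a}{a}>1$ for $x>a$), reduce the integral to $\int dt = t+C$, and then use $\cosh t=\tfrac{2x-a}{a}$ together with $e^{t}=\cosh t+\sinh t$ to re-express $t=\operatorname{arccosh}\!\big(\tfrac{2x-a}{a}\big)$ in the logarithmic form $2\ln\!\big(\sqrt{x-a}+\sqrt{x}\big)$ up to an additive constant.

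There is essentially no obstacle here; the only point deserving a word of care is the domain hypothesis $x>a$, which guarantees that $\sqrt{x-a}$, $\sqrt{x}$, and $\ln\!\big(\sqrt{x-a}+\sqrt{x}\big)$ are all real, and that it is legitimate to absorb the discrepancy into a single constant of integration since we are working on one interval. The absolute value in the statement is harmless because the argument is already positive on $x>a$, but retaining it costs nothing and keeps the formula usable should the expression be reused under weaker sign assumptions.
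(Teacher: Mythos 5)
Your proof is correct, but it takes a different route from the paper's. The paper proceeds constructively: it substitutes $u = \sqrt{x}$ (so $dx = 2u\,du$), which reduces the integral to $2\int \frac{du}{\sqrt{u^2 - a}}$, and then cites the standard table integral $\int \frac{du}{\sqrt{u^2-a}} = \ln\left|\sqrt{u^2-a}+u\right|+C$ (entry 2.271(4) of Gradshteyn--Ryzhik), before substituting back. You instead verify the claimed antiderivative by differentiating it and simplifying, with a secondary hyperbolic-substitution derivation sketched as an alternative. Both approaches are sound. The verification-by-differentiation route is shorter, entirely self-contained, and does not rely on an external integral table, which is a genuine advantage for a reader checking the lemma. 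The paper's route shows how one would \emph{discover} the antiderivative, at the cost of invoking a standard formula rather than deriving it; your hyperbolic-substitution sketch fills exactly that gap and is essentially the derivation underlying the cited table entry. One small remark: the careful observation that the absolute value is redundant on $x>a$ but harmless is a nice touch that the paper does not comment on.
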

\begin{proof}
     Letting $ u = \sqrt{x} $, we get $x = u^2$  and  $dx = 2u \, du$.
Substituting these into the integral, we get:
\begin{equation}\label{intfa}
\int \frac{1}{\sqrt{x(x-a)}} \, dx = \int \frac{1}{u} \cdot \frac{1}{\sqrt{u^2 - a}} \cdot 2u \, du
 = 2 \int  \frac{1}{\sqrt{u^2 - a}} \, du .    
\end{equation}
This is a standard integral (see \textit{e.g.} entry 2.271 (4) in \cite{gradshteyn2007}), 
and it is known that:
\[
\int \frac{1}{\sqrt{u^2 - a}} \, du = \ln\left| \sqrt{u^2 - a} + u   \right| + C.
\]
Substituting back $ u = \sqrt{x}$ and plugging the result 
into equation (\ref{intfa}), we obtain the lemma.

\end{proof}

\begin{lemma}\label{sum}
For $1 < h_1 < h_2$, it holds that
     \[
     \sum_{i = h_1 + 1}^{h_2 + 1} \frac{1}{\sqrt{i(i- h_1)}} 
     < 3 + \ln\left( {h_2+1} \right) - \ln\left({h_1+1}\right) .
     \]
\end{lemma}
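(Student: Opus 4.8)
The plan is to separate the single dominant term of the sum—the one at $i=h_1+1$, where the factor $\sqrt{i-h_1}$ in the denominator is as small as possible—from the remaining tail, and then control the tail by comparison with the integral evaluated in Lemma~\ref{integral}.

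First I would record that the summand $f(x)=1/\sqrt{x(x-h_1)}$ is positive and strictly decreasing on $(h_1,\infty)$, since the quadratic $x(x-h_1)$ is increasing there. Consequently, for every $i$ in the summation range with $i\ge h_1+2$ we have $f(i)\le \int_{i-1}^{i} f(x)\,dx$, and adding these inequalities yields
\[
\sum_{i=h_1+2}^{h_2+1} \frac{1}{\sqrt{i(i-h_1)}} \;\le\; \int_{h_1+1}^{h_2+1} \frac{dx}{\sqrt{x(x-h_1)}} .
\]
The leftover term, at $i=h_1+1$, equals $1/\sqrt{(h_1+1)\cdot 1} = 1/\sqrt{h_1+1} < 1$, because $h_1>1$.

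Next I would evaluate the integral via Lemma~\ref{integral} with $a=h_1$:
\[
\int_{h_1+1}^{h_2+1} \frac{dx}{\sqrt{x(x-h_1)}} = 2\ln\!\big(\sqrt{h_2+1-h_1}+\sqrt{h_2+1}\big) - 2\ln\!\big(1+\sqrt{h_1+1}\big).
\]
To close the estimate I would bound the two logarithmic terms crudely. Since $h_2+1-h_1 < h_2+1$ we get $\sqrt{h_2+1-h_1}+\sqrt{h_2+1} < 2\sqrt{h_2+1}$, so the first term is at most $2\ln 2 + \ln(h_2+1)$; and since $1+\sqrt{h_1+1} > \sqrt{h_1+1}$, the second term satisfies $-2\ln(1+\sqrt{h_1+1}) < -\ln(h_1+1)$. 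Hence the integral is strictly less than $2\ln 2 + \ln(h_2+1) - \ln(h_1+1)$, and adding back the leftover term bounds the full sum by $1 + 2\ln 2 + \ln(h_2+1) - \ln(h_1+1) < 3 + \ln(h_2+1) - \ln(h_1+1)$, because $1+2\ln 2 \approx 2.386 < 3$.

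There is no genuine obstacle here; the only point requiring a little care is the monotonicity/integral comparison—making sure the comparison is started at $i=h_1+2$ so that exactly the $i=h_1+1$ term is peeled off—and checking that the constant slack $3-(1+2\ln 2)>0$ is real, which it comfortably is. If a sharper constant were ever needed one could simply peel off a few more initial terms, but that refinement is unnecessary for the stated inequality.
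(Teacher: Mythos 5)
Your proof is correct and follows essentially the same route as the paper's: peel off the $i=h_1+1$ term, bound the decreasing tail by the integral from Lemma~\ref{integral}, estimate the logarithms via $\sqrt{h_2+1-h_1}+\sqrt{h_2+1}<2\sqrt{h_2+1}$ and $1+\sqrt{h_1+1}>\sqrt{h_1+1}$, and observe that $1+2\ln 2<3$.
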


\begin{proof}
    For $x > h_1$, the function $f(x) = \frac{1}{\sqrt{x(x -h_1)}}$ is monotonic decreasing. 
    Hence, we can apply the Riemann sum approximation as follows:
    \begin{eqnarray*}\label{eq:split}
    \sum_{i=h_1 + 1}^{h_2 + 1} \frac{1}{\sqrt{i(i - h_1)}} 
      & =   & \frac{1}{\sqrt{(h_1+1)}} + \sum_{i=h_1+2}^{h_2+1} \frac{1}{\sqrt{i(i-h_1)}}  \\
      & <   & 1 + \sum_{i=h_1 + 2}^{h_2 + 1} \frac{1}{\sqrt{i(i- h_1)}} \\
      & \le &  1 + \int_{h_1 + 1}^{h_2 + 1} \frac{1}{\sqrt{x(x- h_1)}}\, dx \\     
      & =   & 1 + \left[ 2\ln\left(\sqrt{x - h_1} + \sqrt{x} \right) \right]_{h_1 +1}^{h_2+1}
              \,\,\,\,\,\,\,\, \text{(by Lemma \ref{integral})}\\
      & =   & 1+ 2\ln \left(\sqrt{h_2 + 1 -h_1} + \sqrt{h_2 + 1}\right) - 
                   2\ln \left(1 + \sqrt{h_1+1}  \right)  \\
      & <   & 1  + 2\ln\left(2\sqrt{h_2+1}\right)-2\ln\left(\sqrt{h_1+1}\right) \\
      & =   & 1  + 2 \ln 2 + \ln\left( {h_2+1} \right) - \ln\left({h_1+1}\right) .
    \end{eqnarray*}    
\end{proof}

\section{Analysis of the Neighborhood-Learning Algorithm (Algorithm~\ref{alg:beeping}) and the Proof of Theorem~\ref{thm:beeping}}\label{app:neigh}

\begin{lemma}[Safety]
\label{lem:beeping-reliability}
If a node $v$ adds node $w$ to set $N^*_v$, then $w$ is a neighbor of $v$ in the underlying network~$G$.
\end{lemma}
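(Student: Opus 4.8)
The plan is to argue contrapositively: suppose node $v$ adds $w$ to $N^*_v$ at some iteration, meaning the recorded feedback sequence $\sigma_v$ in the relevant window of $7+2\log n$ consecutive positions exactly equals the block ID $\bb_w$ of $w$; I want to show that $w$ must be a neighbor of $v$. The key observation is that $\sigma_v$ records, at each position, whether $v$ beeped or heard a beep from a neighbor. So a recorded window equal to $\bb_w$ means: for every position $j$ where $\bb_w$ has a $1$, either $v$ beeped or some neighbor of $v$ beeped in that round; and for every position $j$ where $\bb_w$ has a $0$, neither $v$ nor any neighbor of $v$ beeped. First I would exploit the structure of the block ID prefix — three $1$s, then three $0$s, then a $1$ — to pin down the alignment. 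Because every node's beep pattern is built by substituting block IDs (a pattern that never has four consecutive $1$s, since a block ID starts with exactly three $1$s and is preceded by a $0$-block or followed by the $001\ldots$ continuation) the only way to see three consecutive heard-$1$s followed by heard-$0$s is if some single node is beeping a block ID in exact alignment with the recorded window. I would make this precise: the "$111000$" signature forces that whatever combination of neighbors' beeps produced the window, it cannot be a superposition of misaligned block IDs — a misaligned overlay of two block-ID patterns, each with its own three-in-a-row of $1$s, cannot produce a clean $000$ immediately after a $111$ unless the blocks coincide.

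The next step is to use the last $2\log n$ bits of the block ID, which encode $w$'s identifier $v$ via the $0\mapsto 01$, $1\mapsto 10$ doubling. The crucial property of this encoding is that every codeword of this inner code has the same weight (exactly $\log n$ ones in those $2\log n$ positions), and any two distinct such codewords differ in at least two positions — in fact, since changing one input bit flips a "$01$"$\leftrightarrow$"$10$" pair, they differ in an even number $\ge 2$ of positions. Now suppose the recorded window equals $\bb_w$ but the signal were actually a superposition (bitwise OR) of beeps from two or more distinct nodes $w_1, w_2, \ldots$, each beeping its own aligned block ID. Then in the last $2\log n$ positions the recorded pattern would be the OR of the corresponding balanced codewords; since an OR of two distinct balanced weight-$\log n$ codewords has weight strictly greater than $\log n$ (they differ somewhere), it cannot equal any single balanced codeword $\bb_w$ restricted to those positions. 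Hence exactly one node is responsible for the beeps in that window, and that node beeped its block ID in alignment — so the recorded $\bb_w$ is genuinely the block ID of $w$, and $w$ beeped it. Finally, since $v$ hears a beep only from itself or a neighbor, and the feedback $\sigma_v$ records beeps of $v$ and its neighbors, the node $w$ that beeped and was heard by $v$ is either $v$ itself (excluded, or handled trivially since $v$ would not add itself as a distinct neighbor, or we simply note $v$ is trivially "a neighbor" in the degenerate sense — more carefully, the algorithm's check is against all $w\in\{1,\ldots,n\}$, and if $w=v$ then $v$ beeping its own block ID is always recorded, which is not a Safety violation because the statement only concerns $w\ne v$; I would state this carefully) or a genuine neighbor of $v$ in $G$.

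The main obstacle I anticipate is the first step: rigorously ruling out that a misaligned or partially-overlapping superposition of several neighbors' beep-patterns accidentally spells out a valid block ID. The argument must carefully track that the three-$1$s-then-three-$0$s-then-$1$ prefix, combined with the fact that the ambient beep stream between block IDs consists of long runs of $0$s (the $0$-blocks substituted for the $0$s of the superimposed codeword $\bc_v$), forces any window that "looks like" a block ID to be aligned with an actual block ID transmission of a single node; the $0$-blocks act as separators that cannot be spanned by a coherent-looking $111000$ pattern from overlapping sources. I would handle this by a short case analysis on where the three consecutive heard-$1$s can come from — they can only arise within the $111$-prefix of a single aligned block ID, because no two distinct nodes beep more than a local-neighborhood-bounded overlap and, more importantly, the superimposed code $\M$'s isolation property $\mathcal{I}'$ guarantees that at the witnessing position the competing neighbors are silent for three consecutive blocks, so no spurious $111$ can be manufactured by overlap at the boundary of $w$'s transmitted block. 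Combining this alignment fact with the even-distance property of the inner encoding closes the argument.
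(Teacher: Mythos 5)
Your high-level plan — use the $111000$ prefix of the block ID to force alignment, and then use the $0\mapsto 01$, $1\mapsto 10$ inner doubling to pin down the identity — is the same two-step strategy the paper's proof follows, and your OR-weight observation (the bitwise OR of two distinct balanced weight-$\log n$ codewords has weight exceeding $\log n$) is a perfectly valid, arguably cleaner, alternative to the paper's direct ``find a double-bit pair on which $\bb_{w_1}$ and $\bb_w$ differ'' argument for the second step. However, there are two genuine gaps.

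The first and more serious gap is your reliance on the isolation property $\mathcal{I}'$ of the code $\M$. Safety must hold for \emph{every} window that $v$ examines, under \emph{arbitrary} adversarial shifts of all codewords; the isolation property only asserts the \emph{existence} of one witnessing position, and it is the workhorse of the Inclusion argument (Lemma~\ref{lem:beeping-progress}), not Safety. The paper's Safety proof uses no property of $\M$ whatsoever: it is a purely local, combinatorial argument about the format of block IDs and what the recorded feedback sequence $\sigma_v$ can contain. If one tried to push Safety through $\mathcal{I}'$, an adversary could shift codewords so that at some non-witnessing position a spurious-looking pattern appears; the code's guarantee simply says nothing about those positions. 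You must discard $\mathcal{I}'$ here and argue from the block-ID structure alone.

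The second gap is that your alignment argument --- ``the $0$-blocks act as separators, a $111000$ cannot come from misaligned overlaps'' --- is left at the heuristic level precisely where the paper does careful work (Claim~\ref{claim:alligned}). The precise statement is: any block ID of a contributing neighbor that overlaps the window $\gamma$ must be \emph{exactly} aligned with $\gamma$, \emph{unless} it finishes before round $7$ of $\gamma$ or starts in the very last round of $\gamma$. The proof splits into two cases: (i) a block ID that started before round $1$ of $\gamma$ must beep in one of rounds $4,5,6$ of $\gamma$ (because three consecutive positions at offset $\ge 5$ of a block ID always contain a beep), contradicting the $000$ in $\bb_w$ at rounds $4,5,6$; (ii) a block ID that starts strictly after round $1$ (and not in the last round) puts its opening $111$ somewhere inside $\gamma$ at offset $\ge 2$, but $\bb_w$ contains no three consecutive $1$s after position $1$. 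Note that it is \emph{not} true that every overlapping block ID must be aligned; the boundary exceptions exist, and the argument must separately observe that such boundary-only contributions cannot produce the $1$s at positions $7,\ldots,7+2\log n-1$ of $\gamma$, forcing at least one aligned contributor. Your write-up does not distinguish these cases, and without this precision the alignment claim is not established.

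Once alignment is rigorously in place, your OR-weight argument and even-distance observation complete the proof correctly. So the proposal is salvageable, but it needs to (a) drop the appeal to $\mathcal{I}'$, and (b) replace the alignment heuristic with the two-case analysis of Claim~\ref{claim:alligned} (including the boundary exceptions).
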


\begin{proof}
The proof is by contradiction -- suppose that in some network and adversarial wake-up schedule, $v$ is not a neighbor of $w$ but $v$ puts $w$ to its set $N^*_v$. It can happen only if $v$ heard at some point a binary sub-sequence equal to the block ID of node $w$, denoted $\bb_w$.
To focus our attention, we denote that particular sub-sequence of $\sigma_v$ by $\gamma$; it is identical to $\bb_w$, and it occurs in particular time. 
Since $w$ is not a neighbor of~$v$, this heard sequence $\gamma$ comes from activities (beeping or not beeping) of actual neighbors of~$v$~after~awakening.

Let $w_1,\ldots,w_\ell$ be the minimal, in the sense of inclusion, subset of neighbors of $v$ that contribute to the considered sequence $\gamma$
heard by node $v$, for some integer $\ell\ge 1$. More precisely, each bit 1 in $\gamma$ comes from beep of some (at least one) of the nodes $w_1,\ldots,w_\ell$ and none of them beeps in round corresponding to bit 0 in $\gamma$. 
By minimality of set $w_1,\ldots,w_\ell$, each of them has some of its block IDs overlapping sequence $\gamma$, as otherwise it would only have blocks of zeros overlapping and we could have removed it from set $w_1,\ldots,w_\ell$.
%

\begin{claim}
\label{claim:alligned}
Each such overlapping block ID of nodes $w_1,\ldots,w_\ell$ must actually be aligned (in time) with $\gamma$, unless it finishes before round $7$ of $\gamma$ or starts in the last round of $\gamma$.     
\end{claim}

\noindent
{\bf\em Proof of Claim~\ref{claim:alligned}:}
Suppose, to the contrary, that some overlapping block ID of some $w_i$, for $1\le i\le \ell$, 
does not finish before round $7$ of $\gamma$ and does not start at the very last round of $\gamma$. 
We consider two complementary cases and obtain contradiction in each of them.

If this block ID starts before round $1$ of $\gamma$,
we observe that during the rounds $4,5,6$ of $\gamma$, node $w_i$ has to beep at least once. This is because these rounds align with rounds $j,j+1,j+2$ of the block ID of $w_i$, for some $j\ge 5$, and in every three of such rounds there is at least one beep (by the definition of the block ID).
We obtain contradiction with the fact that there is no beep in $\gamma$, which value is the same as the block ID $\bb_w$ of $w$, in the considered rounds $4,5,6$ (see again the definition of block ID).

In the remaining case, i.e., when the considered block ID of $w_i$ overlapping $\gamma$ starts after round $1$ and before the last round of $\gamma$, the first three consecutive 1's of that block would cause a beep at some position of $\gamma$ that stores $0$. This is because one of the following arguments, and obviously yields contradiction: (i) in the subsequence of $\gamma$, starting after position $1$, there are no three consecutive 1's, or (ii) if the block actually starts in the second-last round of $\gamma$, in the last two positions of $\gamma$ there is at least one 0.
$\qed$

\smallskip
It follows from Claim~\ref{claim:alligned} that at least one of the overlapping block IDs of $w_1,\ldots,w_\ell$ must be aligned with $\gamma$,
because otherwise none of them would cause the beeps in some positions $7,\ldots,7+2\log n -1$ of~$\gamma$. 
W.l.o.g. we could assume that the block ID of $w_1$ is aligned with $\gamma$. Recall that $w_1\ne w$, because $w_1$ is a neighbor of $v$ and $w$ is not. Hence, there is a position $j$ in the ids of $w_1$ and $w$ on which they differ. It implies that their block IDs are different at positions $7+2(j-1) +1$ and $7+2(j-1)+2$. In particular, in one of these position there is 0 in $\gamma$ and 1 in the considered block ID of $w_1$, which contradicts the definition of $\gamma$ and the set $w_1,\ldots,w_\ell$; more precisely, by definition, if $\gamma$ stores 0 at some position then all  $w_1,\ldots,w_\ell$ did not beep at the corresponding round. 
\end{proof}


\begin{lemma}[Inclusion]
\label{lem:beeping-progress}
If a node $w$ is a neighbor of node $v$, 
then the block ID of $w$ can be found in some $\sigma_v[i, \ldots, i+(7+2\log n)-1]$, for some $1\le i\le (t-1)\cdot (7+2\log n)+1$, checked by $v$ at most $2t\cdot (7+2\log n)$ rounds after both $v$ and $w$ are awaken.
\end{lemma}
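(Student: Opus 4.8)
\emph{Overview.} The plan is to show that inside one full periodic pass of $v$ that begins after both $v$ and $w$ are awake there is a window of $L:=7+2\log n$ consecutive rounds on which $w$ beeps exactly its block ID $\bb_w$ while \emph{every} node that $v$ can hear (namely $v$ itself and all neighbours of $v$) is silent; then $\sigma_v$ restricted to that window equals $\bb_w$, and the periodic check in line~\ref{line:check} detects it. The clean window will be produced by the isolation property $\mathcal I'$ of $\M$ applied to the columns indexed by $v$ and its neighbours.

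\emph{Setup and time bound.} Write $\delta(u)$ for the activation round of node $u$ and $\Pi:=tL$ for the length of one periodic pass. I would first describe $u$'s beeping history as the $\Pi$-periodic sequence $E_u$ obtained from $\bc_u$ by replacing each $1$ by $\bb_u$ and each $0$ by $0^{L}$, played with offset $\delta(u)$, so that $u$ beeps in round $r\ge\delta(u)$ iff $E_u[(r-\delta(u))\bmod\Pi]=1$. Let $r^{\star}=\max(\delta(v),\delta(w))$ and let $J$ be the first pass of $v$ starting at a round $\rho_J\in[r^{\star},r^{\star}+\Pi)$. During $J$ the array $\sigma_v$ is freshly zeroed, $v$ and $w$ are awake throughout, and $J$ ends before $r^{\star}+2\Pi$, which already gives the claimed bound of $2t(7+2\log n)$ rounds.

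\emph{Reducing to the code.} Let $T$ be the column indices of $v$ together with its neighbours; since $\deg_G(v)<\Delta$ we have $2\le k:=|T|\le\Delta$, so the elongation guarantee of $\M$ holds for this $k$ with $\tau:=\tau(n,k)\le t$. Indexing pass $J$ by $m$, the bit $v$ records at position $m$ is the OR over $u\in T$ of $u$'s beep, which is at most $E_u[(m+o_u)\bmod\Pi]$ where $o_u:=(\rho_J-\delta(u))\bmod\Pi$ (with equality over the whole pass when $u\in\{v,w\}$, since $\delta(v),\delta(w)\le\rho_J$). I would then write $o_u=q_uL+\rho_u$ with $0\le\rho_u<L$ and put $\by_u:=\bc_u(q_u)$, noting that shifting $E_u$ by the multiple $q_uL$ of $L$ is the same as expanding $\bc_u(q_u)$; hence $u$'s activity during $J$ is dominated by the expansion of $\by_u$ \emph{further} cyclically shifted by the sub-block amount $\rho_u<L$. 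Applying $\mathcal I'$ to $T$ with distinguished column $c_w$ and $\bz:=\bigvee_{u\in T\setminus\{w\}}\by_u\in S_{\vee}(T\setminus\{c_w\})$ gives $|(\bc_w\wedge\bz^{*})_{[0,\tau]}|<\tfrac34|(\bc_w)_{[0,\tau]}|$, so the set $\mathcal G$ of positions $p$ with $(\bc_w)_p=1$ and $\bz_{p-1}=\bz_p=\bz_{p+1}=0$ (indices mod $t$) has $|\mathcal G|>\tfrac14|(\bc_w)_{[0,\tau]}|=\Omega(k\log n)$, using the codeword-weight bound $|(\bc_w)_{[\tau_1(n,k),\tau_2(n,k)]}|=\Omega(k\log n)$ for the constructed code (cf.\ Lemma~\ref{lepa1}). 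For $p\in\mathcal G$, the block of $E_w$ corresponding to coordinate $p$ of $\bc_w$ is a copy of $\bb_w$, and tracing this through $o_w$ yields a window $W_p$ of $L$ rounds of pass $J$ on which $w$ beeps exactly $\bb_w$; on $W_p$ every other $u\in T$ is silent, because its activity is dominated by the expansion of $\by_u$ shifted by $\rho_u$, a length-$L$ window can meet only two cyclically consecutive length-$L$ blocks of that expansion — coordinate $p$ of $\by_u$ and one of $p\pm1$ — and $\bz_{p-1}=\bz_p=\bz_{p+1}=0$ forces $\by_u$ to vanish at all of $p-1,p,p+1$. Thus $\sigma_v$ on $W_p$ equals $\bb_w$. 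Finally, since $\sigma_v$ is re-zeroed each pass, $v$'s check reads $W_p$ with genuine data exactly when $W_p$ does not wrap the pass boundary; the start of $W_p$ inside the pass is $\equiv -o_w\pmod L$, so at most one $p\in\mathcal G$ wraps, and as $|\mathcal G|=\Omega(k\log n)\ge 2$ (for $n$ above an absolute constant) some $p$ gives $W_p=\sigma_v[i,\dots,i+L-1]$ with $1\le i\le(t-1)L+1$, which is one of the length-$L$ subsequences tested in line~\ref{line:check}; $v$ then adds $w$ to $N^*_v$ (correctly, by Lemma~\ref{lem:beeping-reliability}).

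\emph{Main obstacle.} The hard part is precisely the bookkeeping in the reduction: because the periodic passes of distinct nodes are mutually shifted by arbitrary integer amounts, not only are the superimposed codewords cyclically shifted relative to each other (which the ordinary superimposed property already handles) but the $\bb_u$-blocks inside each codeword are shifted by sub-block amounts $\rho_u<L$, so block boundaries of different nodes do not align. This is exactly why the plain superimposed property does not suffice and the slipped vector $\bz^{*}=\bz(-1)\vee\bz\vee\bz(1)$ — the three-consecutive-zeros guarantee of $\mathcal I'$ — is needed: only it certifies that a competitor's codeword is $0$ on a whole block \emph{and} on both of its neighbours, which is what forces that competitor to stay silent over an entire length-$L$ window no matter how the blocks are misaligned.
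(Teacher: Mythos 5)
Your proof follows essentially the same strategy as the paper's: take $T$ to be $v$ together with its neighbours, apply the isolation property $\mathcal{I'}$ of the URSC to a superposition $\bz$ of their (shifted) codewords, and extract a ``clean'' length-$L$ window of pass $J$ on which only $w$ beeps, so that the recorded sequence $\sigma_v$ there equals $\bb_w$. The paper, however, anchors the time analysis on the start of $w$'s periodic pass (so that $\bc_w$ is the \emph{unshifted} distinguished column of Definition~\ref{def:shift_k_unknown-fuzzy}), whereas you anchor on $v$'s pass $J$. This introduces a relative block-level shift $q_w$ between $w$'s frame and $J$'s frame that your bookkeeping drops.

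Concretely, with $o_u=q_uL+\rho_u$ and $\by_u:=\bc_u(q_u)$, the window $W_p$ where $w$ plays the block corresponding to coordinate $p$ of $\bc_w$ lies at rounds $\{pL-o_w,\ldots,(p+1)L-1-o_w\}$ of pass $J$; there $u$'s activity (the expansion of $\by_u$ shifted by $\rho_u$) is read at positions $\{(p-q_w)L+(\rho_u-\rho_w),\ldots\}$, which fall in blocks $p-q_w-1,\ p-q_w,\ p-q_w+1$ of $\by_u$ --- not $p-1,p,p+1$ as you assert. Hence the condition $\bz_{p-1}=\bz_p=\bz_{p+1}=0$ that you extract from $\mathcal{I'}$ applied to $\bz=\bigvee_{u\neq w}\bc_u(q_u)$ is misaligned by $q_w$ with the positions that actually must vanish to force silence on $W_p$, and the claim ``on $W_p$ every other $u\in T$ is silent'' does not follow as written. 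The fix is simple and does not change the structure of the argument: define $\by_u:=\bc_u(q_u-q_w)$ for $u\in T$ (so that $\by_w=\bc_w$ matches the unshifted distinguished column in Definition~\ref{def:shift_k_unknown-fuzzy}), and correspondingly $\bz:=\bigvee_{u\neq w}\bc_u(q_u-q_w)$, which is still an element of $S_\vee(T\setminus\{c_w\})$; equivalently, anchor the round numbering on $w$'s pass as the paper does. With this correction your derivation is sound. I would also note that your explicit treatment of the pass-boundary wrapping --- using $|\mathcal{G}|=\Omega(k\log n)\ge2$ together with the observation that all $W_p$ start at the same residue modulo $L$, hence at most one wraps --- is a step the paper's own proof passes over silently, so this part of your argument is actually more careful.
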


\begin{proof}
Consider a node $v$ and its neighbors $w_1,\ldots,w_\ell$, for some integer $0\le \ell< \Delta$, whose beeping overlap the execution of the algorithm by node $v$. 
Consider one of the neighbors of $v$, w.l.o.g. $w_1$. Consider the first round when both $v$ and $w_1$ are already awaken and $w_1$ starts its periodic repetition (line~\ref{alg:repeated-starts}). 
In what follows, for the sake of simplifying notation, we will be numbering rounds, starting from the considered round, $1,2,\ldots$.
(Note that such round occurs less than $t\cdot (7+2\log n)$ rounds after both these nodes are awaken.)

We define checkpoints at rounds $1+(i-1)\cdot (7+2\log n)$, for $i=1,\ldots,t$. (Recall that $t$ is 
the length 
of the ultra-resilient superimposed code $\M$.)
For every node $v,w_1,\ldots,w_\ell$, at each checkpoint $i$ we define the bit $i$ of its sequence to be equal to 1 if the node executes its block ID, and 0 otherwise (i.e., if it beeps its block of zeros). Observe that the resulting sequence is a shifted codeword of that node, and the sequence for $w_1$ is its original codeword $\bc_v$ (i.e., not shifted, because we assumed that it starts its periodic procedure at round we number $1$). 
Here by ``shifted'' we consider both: shifted cyclically, as in Definition~\ref{def:shift_k_unknown-fuzzy}, or ``shifted not cyclically'' in which we pad positions before the shift value with zeros.
The latter corresponds to the case when a neighbor has been awaken during the considered interval. Note that if the Definition~\ref{def:shift_k_unknown-fuzzy} holds for a configuration of codewords in some set $T$, shifted cyclically, it also holds if instead of cyclic shifts we pad positions before the shift values with zeros instead. This is because it may only create more zeros in the definition of vector $\bz^*$, which is even better from perspective of the guaranteed equation in Definition~\ref{def:shift_k_unknown-fuzzy}.

By 
the fact that code $\M$ from Theorem~\ref{thm:unknown-k-shorter-codes} satisfies Definition~\ref{def:shift_k_unknown-fuzzy} of $(n,3/4)$-URSC with elongation guarantees for any $k\le \Delta$ (this is because $\tau(n,\Delta)=t$ and $\Delta\le n$), there is a position $i$ such that:
the codeword $\bc_{w_1}$ has $1$ at this position, while the other considered shifted codewords of $v,w_2,\ldots,w_\ell$ have zeros at position $i$ and $i+1$.
{Here we use the isolation property together with shift resilience.}
It means that in the interval $[1+(i-1)\cdot (7+2\log n),i\cdot (7+2\log n)]$ node $w_1$ beeps its block ID, while nodes $v,w_2,\ldots,w_\ell$ stay idle. The latter holds because by the definition of checkpoint $i$, taken in round $1+(i-1)\cdot (7+2\log n)$, all these nodes have their $\bc_v[i]=\bc_{w_2}[i]=\ldots=0$ and thus execute their blocks of zeros in that round; even if they finish their blocks early after that round, because they all have also $\bc_v[i+1]=\bc_{w_2}[i+1]=\ldots=0$, they start executing another block of zeros (each of length $7+2\log n$), hence none of them ever beeps in the considered interval $[1+(i-1)\cdot (7+2\log n),i\cdot (7+2\log n)]$.

Since we assumed that $v$ is alive in the considered rounds, it will record the beeps heard in the interval $[1+(i-1)\cdot (7+2\log n),i\cdot (7+2\log n)]$. As we argued above, only its neighbor $w_1$ is beeping (its own block ID) in this period, among the neighbors of $v$ and $v$ itself, therefore $\sigma_v$ in this interval will be equal to the block ID of $w_1$. Hence, $w_1$ will be added to set $N^*_v$ in the end of the considered interval (line~\ref{line:check}). 
This happened at most $t\cdot (7+2\log n)$ rounds after the considered round we numbered $1$, because it takes place within the considered periodic procedure (lines~\ref{alg:repeated-starts} -~\ref{alg:repeated-ends}) of lenth $t\cdot (7+2\log n)$. 
As argued earlier, the considered round, from which we start counting in this analysis, occurs less than $t\cdot (7+2\log n)$ rounds after both $v$ and $w_1$ are awaken; hence, in less than $2t\cdot (7+2\log n)$ rounds after both $v$ and $w_1$ are awaken, $v$ updates its set $N^*_v$ by adding $w$, which block ID it has heard meanwhile.  
\end{proof}

\begin{proof}[Proof of Theorem~\ref{thm:beeping}]
The existence of code $\M$ used in Algorithm~\ref{alg:beeping}, constructable in polynomial time, is guaranteed by Theorem~\ref{thm:unknown-k-shorter-codes} in Section~\ref{sec:unknown-k}.

Safety property comes from Lemma~\ref{lem:beeping-reliability}, while Inclusion property -- from Lemma~\ref{lem:beeping-progress}.
Note that sets $N^*_v$ are monotonic, in the sense of inclusion -- nothing is ever removed from them. 
Time complexity also follows from Lemma~\ref{lem:beeping-progress}, even in the stronger sense that each node learns its neighbor at most $2t\cdot (7+2\log n) \le O(\Delta^2\log^2 n)$ after both of them are awaken.\footnote{There is no need to wait with time measurement until other neighbors are awaken, as typically assumed in the literature; however, that classic measurement also holds in our case, because once a neighbor is added to set $N^*_v$, it says there and could be considered as already added after the round when all neighbors of $v$ become awake.}
\end{proof}

\end{document}